\documentclass[acmsmall,nonacm]{acmart}

\usepackage{amsmath}
\usepackage{amsfonts}
\usepackage{amsthm}
\usepackage{booktabs}
\usepackage{graphicx}
\usepackage[capitalize]{cleveref}
\usepackage{url}
\usepackage{xcolor}
\usepackage{multirow}
\usepackage{tikz}
\usepackage{mathtools}
\usepackage{array}
\usepackage{ragged2e}
\usetikzlibrary{arrows.meta,positioning,shapes.geometric,fit,backgrounds,calc}
\pgfdeclarelayer{farback}\pgfsetlayers{farback,background,main}
\newtheorem{theorem}{Theorem}
\newtheorem{proposition}{Proposition}
\newtheorem{definition}{Definition}
\newtheorem{lemma}{Lemma}
\newtheorem{corollary}{Corollary}
\newtheorem{remark}{Remark}

\acmJournal{TEAC}
\acmVolume{0}
\acmNumber{0}
\acmArticle{0}
\acmYear{2026}
\acmMonth{0}
\setcopyright{rightsretained}

\begin{document}

\title{Credibility Trilemma in Polymatroidal Service Markets}

\author{Lauri Lov\'{e}n}
\orcid{0000-0001-9475-4839}
\email{lauri.loven@oulu.fi}
\affiliation{%
  \institution{Future Computing Group, University of Oulu}
  \city{Oulu}
  \country{Finland}}

\author{Sujit Gujar}
\affiliation{%
  \institution{Machine Learning Lab, International Institute of Information Technology}
  \city{Hyderabad}
  \country{India}}

\author{Kalle Timperi}
\affiliation{%
  \institution{Perception Engineering, University of Oulu}
  \city{Oulu}
  \country{Finland}}

\author{Hassan Mehmood}
\affiliation{%
  \institution{Future Computing Group, University of Oulu}
  \city{Oulu}
  \country{Finland}}

\author{Praveen Kumar Donta}
\affiliation{%
  \institution{Department of Computer and Systems Sciences, Stockholm University}
  \city{Stockholm}
  \country{Sweden}}

\author{Sasu Tarkoma}
\affiliation{%
  \institution{Department of Computer Science, University of Helsinki}
  \city{Helsinki}
  \country{Finland}}

\author{Schahram Dustdar}
\affiliation{%
  \institution{Distributed Systems Group, TU Wien}
  \city{Vienna}
  \country{Austria}}
\affiliation{%
  \institution{ICREA}
  \city{Barcelona}
  \country{Spain}}

\renewcommand{\shortauthors}{Lov\'{e}n et al.}

\begin{abstract}
Mechanism-mediated service markets with polymatroidal feasibility admit efficient, dominant-strategy incentive-compatible (DSIC) allocation, but these guarantees implicitly assume truthful \emph{execution} by the marketplace operator. Modelling the operator as a strategic player, we establish a credibility trilemma: for single-parameter agents on a non-modular polymatroid carrying a competitive prior profile, no static sealed-bid mechanism is simultaneously revenue-optimal, DSIC for agents, and credible for the operator. We introduce the Cost of Non-Credibility (CoNC) as a price-of-anarchy-style welfare-loss measure and prove a per-pair lower bound scaling with the polymatroid's non-modularity gap, turning the trilemma into a quantitative diagnostic. Three structurally distinct resolutions follow: public broadcast or deferred-revelation commitment, administrative domain separation under settlement separation and four side conditions, and integrator competition orthogonal to mechanism execution under disjoint actors. An instance-level grounding over the edge-pricing market of Amin et al.\ shows that the economic content carries over to a refereed external setting. Marketplace neutrality is thus a first-order design constraint on polymatroidal service markets rather than an implementation detail: where the operator is a strategic player, credibility trades off against revenue optimality and agent incentive compatibility along structurally characterised lines.
\end{abstract}

\begin{CCSXML}
<ccs2012>
   <concept>
       <concept_id>10003752.10003809.10003716.10011136.10011797</concept_id>
       <concept_desc>Theory of computation~Algorithmic mechanism design</concept_desc>
       <concept_significance>500</concept_significance>
       </concept>
   <concept>
       <concept_id>10003752.10003809.10003716.10011136.10011798</concept_id>
       <concept_desc>Theory of computation~Computational pricing and auctions</concept_desc>
       <concept_significance>500</concept_significance>
       </concept>
   <concept>
       <concept_id>10002951.10003260.10003277.10003279</concept_id>
       <concept_desc>Information systems~Computing platforms</concept_desc>
       <concept_significance>300</concept_significance>
       </concept>
</ccs2012>
\end{CCSXML}

\ccsdesc[500]{Theory of computation~Algorithmic mechanism design}
\ccsdesc[500]{Theory of computation~Computational pricing and auctions}
\ccsdesc[300]{Information systems~Computing platforms}

\keywords{credible mechanism design, polymatroidal allocation, non-modular feasibility, gross substitutes, VCG, network slicing, edge--cloud computing, real-time AI service economy}

\maketitle

\section{Introduction}
\label{sec:introduction}

Polymatroidal service markets (network slice brokering~\cite{sciancalepore2019slicebrokering,afolabi2018slicing}, multi-tenant cloud auctions with shared interconnect bandwidth, edge-priced capacity sharing on series--parallel networks~\cite{amin2026market}, and edge-cloud service-dependency markets~\cite{loven2026realtime}) admit dominant-strategy incentive-compatible (DSIC)\footnote{A mechanism is DSIC if truthful reporting is a dominant strategy for every agent regardless of others' reports.} allocation under gross-substitutes (GS) valuations: when the underlying joint-capacity structure forms a polymatroid and agent valuations satisfy GS, a Walrasian equilibrium exists, welfare maximisation is polynomial-time solvable, and Vickrey--Clarke--Groves (VCG) or polymatroid clinching yields a DSIC, individually rational mechanism. These guarantees implicitly assume that the mechanism's \emph{executor} (the marketplace operator) faithfully implements the prescribed allocation and payments. When the operator is itself a strategic player with private information about supply, with allocation discretion, and with custody of payment flows, faithful execution is not derivable from the agent-side incentive analysis: the operator may misreport capacity, inject fictitious bids, mis-charge agents, or selectively reveal information, and the DSIC welfare guarantees collapse without ever contradicting any individual agent's view. This paper addresses that \emph{credibility gap} for polymatroidal service markets. Its central result is a single impossibility, the \emph{credibility trilemma} (\cref{thm:trilemma}):

\begin{quote}
\emph{On a non-modular polymatroid carrying a competitive prior profile, no static sealed-bid mechanism for single-parameter agents is simultaneously revenue-optimal, DSIC for agents, and credible for the operator; any two of the three are simultaneously achievable, but not all three.}
\end{quote}

\noindent Everything else in the paper either establishes this claim, measures how large the resulting credibility gap is, or shows how to escape it. \cref{fig:trilemma} summarises the impossibility and its three pairwise escapes.

\begin{figure}[t]
\centering
\resizebox{0.95\linewidth}{!}{%
\begin{tikzpicture}[scale=2.8]
  \coordinate (R) at (0,0);
  \coordinate (D) at (3,0);
  \coordinate (K) at (1.5,2.598);
  \draw[thick] (R)--(D)--(K)--cycle;
  \node[below left]  at (R) {\textbf{Revenue-optimality}};
  \node[below right] at (D) {\textbf{Agent DSIC}};
  \node[above]       at (K) {\textbf{Operator credibility}};
  \coordinate (RD) at ($(R)!0.5!(D)$);
  \coordinate (RK) at ($(R)!0.5!(K)$);
  \coordinate (DK) at ($(D)!0.5!(K)$);
  \fill[black!70!blue] (RD) circle (2.3pt);
  \node[below, font=\small, text width=3.2cm, align=center] at ($(RD)-(0,0.16)$) {Rev.-optimal $+$ DSIC\\[-1pt]\scriptsize Myerson / ironed-VV greedy\\[-1pt]\scriptsize\itshape (not credible)};
  \fill[black!65!red] (RK) circle (2.3pt);
  \node[left, font=\small, text width=2.9cm, align=center] at ($(RK)-(0.18,0)$) {Rev.-opt.\ $+$ credible\\[-1pt]\scriptsize First-price (pay-as-bid)\\[-1pt]\scriptsize\itshape (not DSIC)};
  \fill[black!60!green] (DK) circle (2.3pt);
  \node[right, font=\small, text width=2.9cm, align=center] at ($(DK)+(0.18,0)$) {DSIC $+$ credible\\[-1pt]\scriptsize Ascending clinching\\[-1pt]\scriptsize\itshape (not rev.-optimal)};
  \coordinate (ctr) at (1.5,0.866);
  \fill[red!8] (ctr) circle (0.52);
  \node[font=\small, text=red!60!black] at (ctr) {all three};
  \node[font=\scriptsize, text=red!60!black, align=center] at ($(ctr)-(0,0.26)$) {impossible\\(\cref{thm:trilemma})};
  \draw[thick, blue!60!black, dashed] (RD) to[bend left=15] (RK);
  \draw[thick, blue!60!black, dashed] (RD) to[bend right=15] (DK);
  \draw[thick, blue!60!black, dashed] (RK) to[bend left=15] (DK);
\end{tikzpicture}}
\caption{The credibility trilemma (\cref{thm:trilemma}). On a non-modular polymatroidal service market, any two of the three properties are simultaneously achievable (filled circles, each with a witness mechanism), but all three are not (shaded interior): the revenue-optimal DSIC mechanism (Myerson / ironed-virtual-value greedy) is not credible; the ascending clinching auction is credible and DSIC but not revenue-optimal; the first-price (pay-as-bid) auction is credible and, for symmetric regular priors, revenue-optimal in Bayes--Nash equilibrium, but not DSIC (\cref{rem:trilemma-pairwise-witnesses}; whether the witness extends to all regular priors is open). Dashed curves indicate the achievable Pareto frontier. The structural resolutions (\cref{thm:commitment,prop:domain-separation,prop:competition}) recover credibility by leaving the static sealed-bid class on which the impossibility rests.}
\Description{A triangle whose three corners are labelled revenue-optimality, agent DSIC and operator credibility. A filled dot sits at the midpoint of each edge, marking a witness mechanism that achieves the two adjacent properties but not the third: Myerson or the ironed-virtual-value greedy on the revenue-optimality--DSIC edge, the first-price pay-as-bid auction on the revenue-optimality--credibility edge, and the ascending clinching auction on the DSIC--credibility edge. The shaded interior of the triangle is labelled ``all three impossible''. Dashed curves join the three dots to indicate the achievable frontier.}
\label{fig:trilemma}
\end{figure}

We use one concrete instance, a personal AI agent (PAA) navigating a smart building~\cite{saleh2025follow}, as a running example to ground the theory. The PAA scenario is chosen for concreteness only; the theoretical results are stated and proved for the abstract polymatroidal structure and transfer to any of the realisations enumerated in the first paragraph. The agent runs a lightweight model on the user's phone but needs real-time services such as context inference, task planning, and knowledge retrieval, composed from shared building resources spanning its sensors, edge infrastructure, and a cloud backend. These services form a dependency chain (e.g., sensor data $\to$ local inference $\to$ reasoning) with interactive latency constraints: a useful response must arrive, for example, within $200$~ms~\cite{satya2017emergence,jiang2019lowlatency}. As the user moves between environments, the PAA dynamically discovers available edge services and composes ad-hoc service chains tailored to the current context. An \emph{integrator} bundles resources from multiple providers into a single composite resource called a \emph{service slice}: a capacity-constrained bundle that aggregates a sub-DAG of underlying services into one item priceable in the marketplace. Such slices are then offered to agents through a marketplace operator (\cref{fig:paa-architecture}). The PAA case sits in the emerging real-time AI service economy~\cite{deng2025agenticservicescomputing,derouiche2025agentic,10.1145/3773274.3777421}; structurally identical credibility pressures arise in the other realisations listed above, wherever shared multi-tier capacity binds the allocation. Recent work~\cite{loven2026realtime} establishes formal conditions under which efficient, incentive-compatible mechanisms exist on this class: when the service-dependency graph has tree or series--parallel structure, DSIC mechanisms achieve the welfare-maximising allocation in polynomial time. The market mechanism's contribution is purely incentive-theoretic: under truthful bidding, a value-greedy heuristic matches the market's welfare, so the mechanism's primary value lies in making truthful reporting a dominant strategy.

The credibility gap surfaces in this running example as a dual-role conflict. We distinguish the \emph{marketplace operator} (the entity collecting bids, computing allocations, and determining payments) from the \emph{integrator} (the entity that composes a sub-DAG of services into a slice; \textbf{P3} in \cref{sec:bg-dags}); the credibility problem of this paper arises precisely when the same legal entity plays both roles, as when an edge infrastructure provider that hosts local inference models also runs the service marketplace that allocates them. This dual role creates a conflict of interest: the provider could misreport its edge capacity to inflate prices, inject fictitious bids to extract surplus, or favour its own facility-management agents over visiting users' PAAs. In single-domain environments (e.g., a cloud provider's internal market), faithful execution is plausible. In multi-domain environments, where the marketplace operator simultaneously controls resources and runs the auction, it is not.

In service-oriented terms, the slice marketplace closes the standard publish/discover/bind/invoke loop~\cite{papazoglou2007soa,curbera2003bpel}: integrators publish slice descriptors, agents discover candidates, the auction binds agents to slices, and successful bidders invoke the underlying composed service path. The slice's max-flow capacity acts as the aggregate QoS guarantee that composes the underlying service-dependency DAG~\cite{cardoso2004qos}; the per-epoch DSIC payment plus the credibility commitment together form a service-level agreement on truthful execution. Operator misbehaviour undermines the bind step: the welfare guarantee that the SLA implicitly promises is then no longer realised.

Akbarpour and Li~\cite{akbarpour2020credible} proved that no static sealed-bid single-item auction is simultaneously revenue-optimal, strategyproof, and \emph{credible}, i.e., proof against auctioneer deviations undetectable to any single bidder; the single-item case has since been resolved by Ferreira and Weinberg~\cite{ferreira2020credible} and the matroid case by Ganesh and Zhang~\cite{ec2025matroid}, both via commitment. Polymatroidal feasible regions generalise matroids, which is where this paper starts.

Because the mechanism's contribution is purely incentive-theoretic, faithful execution is the critical channel through which welfare guarantees flow: if the operator deviates, agents have no reason to bid truthfully, and the allocation degrades. We argue that ensuring truthful mechanism execution is itself a first-order mechanism-design problem; its solution determines whether theoretical guarantees are realised in practice. Our contributions are:

\begin{enumerate}
    \item \emph{(The trilemma, and its quantitative counterpart.)} We formalise the marketplace operator as a strategic player and adapt the credible mechanism design framework to polymatroidal service markets, proving the credibility trilemma stated above for \emph{single-parameter} agents on a non-modular polymatroid with competitive priors. To measure how large the resulting credibility gap is, we introduce the Cost of Non-Credibility (CoNC), a welfare-loss measure analogous to the price-of-anarchy, and prove a per-pair lower bound scaling with the non-modularity gap. The CoNC bound is thus the quantitative counterpart of the trilemma rather than a separate result.

    \item \emph{(Three resolutions: commitment, separation, competition.)} We show that commitment devices restore credibility (\cref{thm:commitment}); that administrative domain separation under settlement-separation (\cref{prop:domain-separation}) provides an alternative credibility guarantee through revenue-channel separation, with a knife-edge at any positive ownership stake; and that integrator competition (\cref{prop:competition}) constrains monopoly markup orthogonally to mechanism-execution credibility.

    \item \emph{(Independent grounding.)} \cref{app:appendix-z} proves the trilemma as an instance over the edge-pricing market of Amin, Jaillet, Pulyassary, and Wu~\cite{amin2026market} (ACM TEAC 14(1), Art.~2) and prices the domain-separation channel at the instance level; the mediator's deviation surplus there supplies a CoNC-type revenue-increment reading, while the $\gamma_{ij}$-structural identification behind the CoNC lower bound does not transfer to their primitives. The instance-level grounding shows that the economic content of our results carries over to a refereed external setting, independently of the polymatroidal generality supplied by~\cite{loven2026realtime}.

    \item \emph{(Simulation illustration of the trilemma.)} We illustrate the trilemma through three baseline simulation experiments (\cref{sec:evaluation}, Exps.~1--3): a ghost-bid deviation (one deviation class wider than the theorem's own construction) is profitable and undetectable under sealed-bid VCG, extends to the Myerson revenue-optimal mechanism, and is closed by broadcast commitment. The empirical CoNC is strictly positive under no enforcement; the constant in the lower bound of \cref{cor:conc-lb} is not calibrated in the simulator, so no numeric comparison to that bound is made. A mechanism-class robustness experiment, Exp.~R-5, is reported alongside them.
\end{enumerate}

The present paper is limited to the realisation-wise impossibility, its per-pair quantitative counterpart, and the three structural resolutions; multi-parameter and Bayesian-IC types, and a continuous deployability surface in stake and escrow, are outside its scope.

\begin{figure}[!t]
\centering
\begin{tikzpicture}[
  >=Latex, font=\footnotesize,
  tier/.style={draw, rounded corners=2pt, minimum width=15mm, minimum height=8mm, align=center, inner sep=2pt, fill=white},
  agent/.style={draw, circle, minimum size=11mm, inner sep=1pt, align=center, fill=white},
  op/.style={draw, double, rounded corners=2pt, minimum width=21mm, minimum height=12mm, align=center, fill=gray!12},
  econ/.style={->, thick},
  data/.style={->, gray, dashed, semithick}
]
  \node[tier] (dev) {Sensor\\\scriptsize bldg.\ sensing};
  \node[tier, right=4mm of dev] (edge) {Edge\\\scriptsize bldg.\ infer.};
  \node[tier, right=4mm of edge] (cloud) {Cloud\\\scriptsize FM reason.};
  \node[op, right=20mm of cloud] (mop) {Marketplace\\operator $\mathcal{O}$};
  \node[agent, minimum size=13mm, right=26mm of mop] (paa) {agent $i$\\(PAA)\\{\scriptsize user's phone}};
  \draw[data] (dev) -- (edge);
  \draw[data] (edge) -- (cloud);
  \draw[econ] (cloud.east) -- node[below, font=\scriptsize] {slice cap.\ $\bar C_j$} (mop.west);
  \draw[econ] (paa.north west) to[bend right=20] node[above] {bid $b_i$, payment $p_i$} (mop.north east);
  \draw[econ] (mop.south east) to[bend right=20] node[below] {allocated slice $x_i$} (paa.south west);
  \begin{pgfonlayer}{background}
    \node[draw, dashed, rounded corners=3pt, fit=(dev)(edge)(cloud), inner sep=6pt] (intg) {};
  \end{pgfonlayer}
  \node[font=\scriptsize\itshape, below=3pt of intg] (intglbl) {integrator $j$: service slice $=$ sub-DAG bundle};
  \begin{pgfonlayer}{farback}
    \node[draw=orange!75!black, dashed, very thick, rounded corners=5pt, fill=orange!9,
      fit=(intg)(mop)(intglbl), inner sep=10pt,
      label={[font=\scriptsize\itshape, text=orange!45!black]above:integrator and operator may be one entity $\Rightarrow$ credibility gap}] (gap) {};
  \end{pgfonlayer}
\end{tikzpicture}
\caption{Running PAA scenario and the credibility gap. Agent~$i$, a personal AI agent running on the user's phone, submits a bid $b_i$ and pays $p_i$ to the \emph{marketplace operator}~$\mathcal{O}$, which returns the allocated \emph{service slice} $x_i$ (solid economic flows; the agent's quasi-linear utility is $v_i x_i - p_i$). An \emph{integrator}~$j$ bundles a building-side sensor--edge--cloud sub-DAG (grey dashed service-dependency arrows; the shared, capacity-constrained resources) into a single service slice of capacity $\bar C_j$, offered to~$\mathcal{O}$. The credibility problem arises when the integrator and operator are the \emph{same} legal entity (orange enclosure; e.g., the building's edge provider): the entity that runs the auction also controls and prices the supply. The user's phone is only the agent's endpoint, outside the priced slice.}
\Description{A diagram showing a personal AI agent (PAA) on a user's phone as an endpoint outside the priced slice, connected via bid and payment arrows to a marketplace operator, which is fed slice capacity by an integrator that bundles a building-side sensor--edge--cloud sub-DAG (dashed integrator box); the operator returns the allocated slice to the PAA.}
\label{fig:paa-architecture}
\end{figure}

The polymatroidal feasibility region used throughout this paper was first introduced in our earlier work~\cite{loven2026realtime}, which grounds a service-economy framework on tree and series--parallel dependency DAGs. The present paper does not require any specific result of~\cite{loven2026realtime}: \cref{thm:amin-trilemma-instance} in \cref{app:appendix-z} establishes an instance of our trilemma directly over the edge-pricing market of Amin et al.~\cite{amin2026market}, and the polymatroidal structure used in the main results is supplied by the self-contained reproofs of \cref{prop:polymatroidal-structure,prop:efficient-mechanism,prop:encapsulation} in \cref{sec:background}. The main theorems and all downstream results of this paper can therefore be read without reference to~\cite{loven2026realtime}, which is cited only for context and motivation.

The remainder is organised as follows. \cref{sec:background} provides the necessary background. \cref{sec:credible-mechanisms} presents the theoretical results, including the trilemma, the CoNC lower bound, and the three resolution mechanisms. \cref{sec:evaluation} reports the simulation experiments. \cref{sec:discussion} discusses related work, limitations, and future directions, and \cref{sec:conclusion} concludes.
\section{Background}
\label{sec:background}

This section sets up the polymatroidal market structure on which the credibility analysis operates, reviews the mechanism design machinery from~\cite{loven2026realtime} that supplies P1--P3, introduces the credible mechanism design framework, and identifies the truthful execution gap.

\subsection{Polymatroidal Markets and Gross Substitutes}
\label{sec:bg-dags}

A \emph{polymatroidal service market} is a market whose feasible allocation region is a polymatroid and whose agents have gross-substitutes (GS) valuations. The polymatroid encodes the joint capacity structure of a shared underlying resource graph; the GS condition encodes substitutability across the agent-facing items. This abstract structure arises in several distinct application domains:
\begin{enumerate}
    \item \emph{Service-dependency DAGs over the sensor--edge--cloud continuum} (the running setting of this paper, drawn from~\cite{loven2026realtime}): a DAG of service types with internal-node capacities, whose max-flow rank function on the leaf set is polymatroidal under tree or series--parallel topology (P1 below).
    \item \emph{Edge-priced capacity sharing on series--parallel networks}~\cite{amin2026market}: a single-source single-sink network with integer edge capacities and homogeneous coalition disutility, whose Walrasian--VCG baseline in their Theorems~3.2 and~3.10 is the structural counterpart of P1+P2 over edge-pricing primitives.
    \item \emph{Network slicing with sub-modular interference / capacity constraints}~\cite{sciancalepore2019slicebrokering,afolabi2018slicing}: hierarchical slice brokering where shared spectrum and tenant interference yield sub-modular joint-capacity rank functions of the polymatroidal class.
    \item \emph{Multi-tenant cloud auctions with shared inter-rack bandwidth}: cloud resource markets where the shared interconnect imposes a sub-modular cap on the joint allocation of compute slots across tenants.
\end{enumerate}
The credibility analysis below operates on the abstract polymatroid + GS structure and therefore transfers to all four realisations. We use the service-dependency DAG setting as the running formalism in this paper (it supplies the most explicit graph-theoretic vocabulary; the others are briefly noted as parallel realisations).

In the running formalism, agents generate latency-sensitive tasks that require multi-resource service compositions across a sensor--edge--cloud continuum. Each service composition is captured by a \emph{service-dependency DAG} $G_{\mathrm{res}} = (\mathcal{R}, E)$, where nodes (the set $\mathcal{R}$) represent service types and edges represent dependencies. Agents consume leaf services; internal nodes impose capacity constraints.

\paragraph*{Definitions.} A \emph{polymatroid} on ground set $E$ is defined by a rank function $f: 2^E \to \mathbb{R}_{\ge 0}$ that is monotone non-decreasing, submodular ($f(A \cup B) + f(A \cap B) \le f(A) + f(B)$), and satisfies $f(\emptyset) = 0$; the associated polymatroid is $\{x \in \mathbb{R}^E_{\ge 0} : x(S) \le f(S) \; \forall S \subseteq E\}$. A matroid is the special case where allocations are binary ($x_i \in \{0,1\}$). Valuations satisfy the \emph{gross substitutes} (GS) condition~\cite{kelso1982job,bikhchandani1997competitive} if raising the price of one item never decreases demand for other items. Discrete convex analysis~\cite{murota2003discrete} provides an alternative algebraic characterisation of polymatroids and their optimisation structure. A mechanism is \emph{dominant-strategy incentive-compatible} (DSIC) if truthful reporting is a best response for each agent regardless of what other agents report.

Lov\'{e}n~\cite{loven2026realtime} establishes three structural results (P1--P3) that the credibility analysis below builds on. Because~\cite{loven2026realtime} is an arXiv preprint, we restate the propositions and reprove them inline so that the present paper is self-contained: the strategic-operator analysis of \cref{sec:trilemma} perturbs the polymatroid + GS + DSIC machinery established by P1--P3, so the credibility results are meaningful only against a verified structural foundation.

\begin{proposition}[Polymatroidal structure (P1)]
\label{prop:polymatroidal-structure}
When $G_{\mathrm{res}}$ is a rooted tree or series--parallel (SP) network with positive internal-node capacities $\{C_r\}_{r\in\mathcal{R}}$, the feasible allocation region $\mathcal{X}_{\mathrm{res}} = \{x \in \mathbb{R}^E_{\ge 0} : x(S) \le f(S) \;\forall S\subseteq E\}$ is a polymatroid, where the rank function $f(S)$ is the max-flow capacity from the leaves in $S$ to the root subject to the internal-node capacity constraints.
\end{proposition}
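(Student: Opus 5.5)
The plan is to show that the max-flow rank function $f$ satisfies the three polymatroid axioms ($f(\emptyset)=0$, monotonicity, submodularity). Once these hold, $\mathcal{X}_{\mathrm{res}}$ is a polymatroid by the definition in \cref{sec:bg-dags}. I would first put the model into standard network form. Orient every edge toward the root and add a super-source $s$ with an uncapacitated arc $s\to\ell$ for each leaf $\ell\in S$. Then split each internal node $r$ into $r^{\mathrm{in}}\to r^{\mathrm{out}}$, with arc capacity $C_r$. With this construction, $f(S)$ is exactly the $s$--root max-flow value in the network $N_S$ that contains only the source arcs for $S$.

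The axioms $f(\emptyset)=0$ and monotonicity follow directly. With no source arcs the flow is zero. Adding source arcs only enlarges the set of feasible flows, so $S\subseteq T$ implies $f(S)\le f(T)$. Nonnegativity holds trivially.

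Submodularity is the main step. I would not use the tree or SP structure for it. Instead I would use max-flow/min-cut: $f(S)=\min_{U}\kappa_S(U)$, minimised over node sets $U\ni s$ of the split network with root $\notin U$. The cut capacity splits as
\[
\kappa_S(U)=\infty\cdot|S\setminus U|+\kappa_0(U),
\]
where $\kappa_0$ is the cut function of the network without source arcs. Hence
\[
f(S)=\min\{\kappa_0(U): U\supseteq S\},
\]
with $U$ ranging over sets not containing the root and $\kappa_0(U)$ counting the internal-node arcs leaving $U$. The function $\kappa_0$ is submodular on the lattice of node sets, since directed cut functions with nonnegative capacities are submodular. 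Now take minimisers $U_S$ for $f(S)$ and $U_T$ for $f(T)$. The sets $U_S\cup U_T$ and $U_S\cap U_T$ are feasible for $S\cup T$ and $S\cap T$ respectively, because they contain those leaf sets and exclude the root. Therefore
\[
f(S\cup T)+f(S\cap T)\le\kappa_0(U_S\cup U_T)+\kappa_0(U_S\cap U_T)\le\kappa_0(U_S)+\kappa_0(U_T)=f(S)+f(T).
\]
This is the classical "min over an upward-closed family of a submodular function is submodular" argument, as in Megiddo's and Fujishige's treatment of network polymatroids.

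The step I expect to be the main obstacle is the modelling translation, not the inequality. The statement speaks of internal-node capacities on a rooted tree or SP DAG, and I need to check that "max-flow from the leaves in $S$ to the root" is faithfully captured by the split-node network. In the SP case, a leaf may reach the root along several parallel paths, so flow conservation at shared nodes must be the intended semantics. Once this is fixed, the tree/SP hypothesis serves only to guarantee that the network is well defined with a single sink. It is not needed for polymatroidality, which holds for any capacitated DAG. I would remark that the hypothesis matters for later results (P2/P3), not for this one. For trees, I would add a short sanity check that $f(S)=\min$ over laminar node cuts, which agrees with the general formula.
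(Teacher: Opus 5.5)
Your proof is correct. Its skeleton matches the paper's: node splitting, then normalisation, monotonicity and submodularity, with the remark that the tree/SP hypothesis is not consumed. The genuine difference is the submodularity step.

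The paper writes $f(T)$ as a minimum over cuts of a function that is modular in $T$ for each fixed cut. It then notes that a pointwise minimum of modular functions need not be submodular, and simply imports submodularity of the max-flow rank function from Schrijver (\S 44.6g) and Fujishige.

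You supply the argument yourself. You move the dependence on $S$ into the feasibility constraint $U\supseteq S$ and minimise the leaf-independent submodular cut function $\kappa_0$ over a family closed under union and intersection. The uncrossing inequality $\kappa_0(U_S\cup U_T)+\kappa_0(U_S\cap U_T)\le\kappa_0(U_S)+\kappa_0(U_T)$ then gives the result directly. This is exactly the repair the paper's counterexample calls for, and it yields a self-contained proof that makes it clear why topology plays no role.

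What the paper's citation carries and your plan omits is its step (iv). That step identifies the inequality-defined region with the set of allocations that admit a simultaneous feasible flow. The statement as written does not need this, since it defines $\mathcal{X}_{\mathrm{res}}$ by the inequalities, but the encapsulation result (P3) later relies on it. Your cut decomposition gives it in one line:
\begin{itemize}
\item Add source arcs $s\to\ell$ of capacity $x_\ell$; a cut $U$ then has capacity $x(E\setminus U)+\kappa_0(U)$.
\item So the max flow equals $x(E)$ iff $\kappa_0(U)\ge x(U\cap E)$ for every $U$.
\item Using $x\ge 0$, this holds iff $x(S)\le f(S)$ for every $S$.
\end{itemize}

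One point to state explicitly: the original dependency arcs must carry infinite capacity. Then $\kappa_0$ is extended-real-valued, and a finite cut severs only split-node arcs; it does not merely ``ignore'' the other arcs. The uncrossing inequality holds in the extended reals, and the minimisers are finite whenever every leaf-to-root path meets a capacitated node.
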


\begin{proof}
Let $E$ denote the set of leaf services and let $f: 2^E \to \mathbb{R}_{\ge 0}$ assign to each $S\subseteq E$ the value of a maximum flow from the leaves in $S$ to the root in $G_{\mathrm{res}}$, subject to the internal-node capacities $\{C_r\}_{r\in\mathcal{R}}$ (node capacities are absorbed into edge capacities by the standard node-splitting reduction). We verify the three polymatroid axioms and then identify $\mathcal{X}_{\mathrm{res}}$ with the associated polymatroid independence polytope.

\emph{(i) Normalisation.} If $S=\emptyset$, the source set is empty, no $s$--$t$ path carries positive flow, and $f(\emptyset)=0$.

\emph{(ii) Monotonicity.} Let $A\subseteq B\subseteq E$ and let $x^A$ be a maximum flow for $A$. Extending $x^A$ by zero on the additional source leaves $B\setminus A$ remains feasible (capacity constraints are unchanged), so $f(B)\ge \mathrm{val}(x^A)=f(A)$.

\emph{(iii) Submodularity.} For any $A,B\subseteq E$ and any cut $C\subseteq \mathcal{R}\cup\{\text{edges}\}$ separating sources from the root, write $\mathrm{cap}(C\mid T)$ for the capacity of $C$ when only the leaves in $T$ act as sources. By the max-flow min-cut theorem, $f(T)=\min_C \mathrm{cap}(C\mid T)$. The function $T\mapsto \mathrm{cap}(C\mid T)$ is modular for any fixed $C$ (each leaf contributes independently to the cut). Note that pointwise minima of modular functions are not in general submodular (counterexample on $E=\{1,2\}$: $f_1(S)=\mathbf{1}_{1\in S}$ and $f_2(S)=\mathbf{1}_{2\in S}$ are modular, but $\min(f_1,f_2)(\{1,2\})+\min(f_1,f_2)(\emptyset)=1>0=\min(f_1,f_2)(\{1\})+\min(f_1,f_2)(\{2\})$, so the minimum is supermodular). The submodularity of the max-flow rank function instead follows from the polymatroid-induction construction of Schrijver~\cite[\S44.6g, eqs.~(44.77)--(44.79)]{schrijver2003combinatorial}, which establishes that the maximum flow from a source set into a capacitated network is a nondecreasing submodular function of that set, and which records the vertex-capacity variant used here; the same construction is formulated in Fujishige~\cite{fujishige2005submodular}. Hence
\[
f(A\cup B) + f(A\cap B) \;\le\; f(A) + f(B).
\]

\emph{(iv) Identifying $\mathcal{X}_{\mathrm{res}}$.} By definition, $\mathcal{X}_{\mathrm{res}}$ is the set of allocations $x\in\mathbb{R}^E_{\ge 0}$ that admit a simultaneous feasible flow respecting the internal-node capacities. By the polymatroid characterisation~\cite{fujishige2005submodular}, $x$ admits such a flow iff $x(S)\le f(S)$ for every $S\subseteq E$, which is the polymatroid associated with $f$. Hence $\mathcal{X}_{\mathrm{res}}$ is a polymatroid. The tree/SP hypothesis is not consumed by the polymatroid property, which holds for arbitrary capacitated $G_{\mathrm{res}}$; it is retained because the rank oracle is efficiently evaluable exactly on that class, which \cref{prop:efficient-mechanism} and \cref{prop:encapsulation} use.
\end{proof}

\begin{proposition}[Efficient mechanism design (P2)]
\label{prop:efficient-mechanism}
If $\mathcal{X}_{\mathrm{res}}$ is polymatroidal and agents' valuations satisfy the gross substitutes (GS) condition, then (i)~a Walrasian equilibrium exists; (ii)~welfare maximisation is polynomial-time solvable; (iii)~the efficient allocation is implementable via a DSIC, individually rational mechanism: VCG~\cite{ausubel2005vickrey} applies in full generality, and \emph{in the single-type-per-task, unit-demand case} (where each task class maps to a unique admissible slice type and each agent demands at most one unit) the polymatroid clinching auction~\cite{ausubel2004ascending,goel2015polyhedral} implements the same allocation and the same payments in an ascending format. All guarantees hold within each decision epoch; cross-epoch strategic dynamics lie outside their scope.
\end{proposition}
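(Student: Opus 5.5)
The plan is to prove the three claims of \cref{prop:efficient-mechanism} in order, each resting on the polymatroid structure of \cref{prop:polymatroidal-structure} together with the GS hypothesis.

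For (i), I will use the Kelso--Crawford / Gul--Stacchetti existence theorem. Under GS valuations with unit-demand-per-item structure, a Walrasian equilibrium exists on a product of items. Here the ground set $E$ is constrained by the rank function $f$, so a reduction is needed. I will attach to each internal node $r$ (after node splitting) a resource with capacity $C_r$, and write the feasibility constraint $x(S)\le f(S)$ through the flow representation. Equivalently, I will view the supply side as a seller whose "production set" is the polymatroid. Its indicator (as a valuation of the seller) is $M^\natural$-concave, since a polymatroid's integer points form an $M^\natural$-convex set (Murota~\cite{murota2003discrete}). GS valuations are exactly $M^\natural$-concave valuations. The welfare problem is then a maximisation of a sum of $M^\natural$-concave functions subject to an $M^\natural$-convex constraint, which is an M-convex submodular flow / valuated matroid intersection problem. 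Murota's M-convex intersection theorem supplies dual prices supporting the optimum, and those prices form a Walrasian equilibrium.

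For (ii), I will use the same formulation: welfare maximisation is M-convex submodular flow. This is polynomial-time solvable given a value oracle for each GS valuation and a rank oracle for $f$. The rank oracle is a single max-flow computation, polynomial on any network and in particular on tree/SP networks.

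For (iii), VCG follows once (ii) holds. The allocation rule is exactly welfare-maximising, and Clarke-pivot payments make truth-telling dominant. Individual rationality and nonnegative payments follow because removing an agent cannot lower the others' optimal welfare, since feasibility is monotone (downward closed). Computing the payments requires $n+1$ calls to the (ii) solver.

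The clinching claim is where I expect the main obstacle. In the single-type-per-task, unit-demand case each agent is single-parameter with a value for one unit of one leaf type. I would invoke Goel et al.'s polyhedral clinching, which for budget-free agents on a polymatroid is DSIC and outputs a welfare-maximising allocation, namely the greedy base in decreasing value order. The hard part is showing equality of payments with VCG. My approach is to use that single-parameter DSIC mechanisms with the same allocation rule and the same zero-type normalisation charge the same payments, by Myerson's payment identity, $p_i = b_i x_i(b) - \int_0^{b_i} x_i(z,b_{-i})\,dz$. I would verify that both mechanisms charge zero to losing agents, and then conclude that the payments coincide.

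The per-epoch caveat needs no proof beyond noting that every argument treats a single static instance.
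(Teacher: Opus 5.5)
Your argument is sound, but it takes a different route from the paper, which treats all three parts as imports.

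For (i), the paper cites Kelso--Crawford and Gul--Stacchetti directly. It then asserts that upper hemicontinuity and convex-valuedness of the polymatroidal supply correspondence keep the equilibrium inside $\mathcal{X}_{\mathrm{res}}$. You handle the joint capacity constraint explicitly instead: you treat the polymatroid as an $M^\natural$-concave seller and obtain supporting prices from the M-convex intersection theorem. This is the more explicit reduction, since the Kelso--Crawford model carries no joint supply constraint.

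For (ii), the paper only treats single-parameter valuations. There welfare maximisation is a linear program over the polymatroid, solved by the Edmonds greedy in $O(n\log n+n\,T_f)$. Your M-convex submodular-flow formulation covers the general GS hypothesis the statement actually makes. The cost is heavier machinery and a running time tied to the submodular-flow algorithm and its oracles, rather than the explicit greedy bound that \cref{cor:constructive-trilemma} later relies on.

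For (iii), the paper simply imports from Goel et al.\ that clinching prices equal VCG payments. You import only the allocation equality and derive payment equality from the single-parameter identity \eqref{eq:archer-tardos} with zero-type normalisation. This also explains why the statement confines the clinching claim to the single-type, unit-demand case.

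Three points to tighten:
\begin{enumerate}
\item The Fujishige--Yang identification of GS with $M^\natural$-concavity holds for set-function valuations. For multi-unit bundles you need strong substitutes (GS plus the law of aggregate demand).
\item The $M^\natural$ machinery lives on $\mathbb{Z}^E$. For the divisible region $\mathcal{X}_{\mathrm{res}}\subseteq\mathbb{R}^E$ you should fall back on LP duality over the polymatroid, or polyhedral $M^\natural$-concavity; for the linear valuations $b_i x_i$ used throughout the paper this is immediate.
\item Individual rationality comes from the fact that adding agent $i$ cannot lower the optimum, since $x_i=0$ is allowed. Non-negativity comes from the others' part of $x^{*}$ remaining feasible without $i$. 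Both follow from downward closure, but they are two different inequalities.
\end{enumerate}
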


\begin{proof}
All three claims are imports. \emph{(i)} Under GS each agent's indirect utility is a substitutes valuation in the sense of Kelso and Crawford~\cite{kelso1982job}, whose salary-adjustment process terminates at a market-clearing price--allocation pair; Gul and Stacchetti~\cite{gul1999grosssubstitutes} show that GS is necessary as well as sufficient on the relevant valuation domain. The polymatroid structure of $\mathcal{X}_{\mathrm{res}}$ makes the supply correspondence upper hemicontinuous and convex-valued, so the equilibrium point lies in $\mathcal{X}_{\mathrm{res}}$. \emph{(ii)} For single-parameter valuations, welfare maximisation on $\mathcal{X}_{\mathrm{res}}$ is the linear program $\max\sum_i b_i x_i$ over the polymatroid, which Edmonds~\cite{edmonds1970submodular} solves exactly by the greedy $x_i^{*}=f(S_i\cup\{i\})-f(S_i)$ in decreasing-bid order, at cost $O(n\log n+n\,T_f)$ for rank-oracle cost $T_f$, with $T_f$ linear in $|\mathcal{R}|$ on the tree and SP cases of \cref{prop:polymatroidal-structure}, so the procedure is polynomial in the input size. \emph{(iii)} VCG payments $p_i=\max_{x\in\mathcal{X}_{\mathrm{res}}}\sum_{j\ne i}b_jx_j-\sum_{j\ne i}b_jx_j^{*}$ on that allocation rule are dominant-strategy incentive-compatible and individually rational~\cite{vickrey1961counterspeculation,clarke1971multipart,groves1973incentives,ausubel2005vickrey}, and pointwise non-negative because $x^{*}$ is itself feasible for the welfare problem without agent $i$, so $\sum_{j\ne i}b_jx_j^{*}\le\max_{x\in\mathcal{X}_{\mathrm{res}}}\sum_{j\ne i}b_jx_j$; in the single-type-per-task, unit-demand case Ausubel's ascending clinching auction~\cite{ausubel2004ascending} terminates at the Walrasian equilibrium of~(i), and its polymatroid extension~\cite{goel2015polyhedral} makes the clinching prices coincide with the VCG payments.
\end{proof}

\begin{proposition}[Encapsulation (P3)]
\label{prop:encapsulation}
For arbitrary DAGs $G_{\mathrm{res}}$, cross-domain \emph{integrators} can partition $G_{\mathrm{res}}$ into clusters, each exposing a single composite service (slice) with capacity equal to the max-flow of its sub-DAG. If the resulting quotient graph $G'$ is tree or series--parallel and the encapsulation conditions E1--E4 hold (scalar capacity summary, cut-exact max-flow, no external coupling, confluent clusters; full assumption-applicability table in \cref{app:structural-assumptions}), the agent-facing feasible region is polymatroidal and coincides with the set of allocations $G_{\mathrm{res}}$ can actually deliver.
\end{proposition}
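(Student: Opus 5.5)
The plan is to reduce Proposition~\ref{prop:encapsulation} to Proposition~\ref{prop:polymatroidal-structure} applied to the quotient graph $G'$. After that, the remaining work is an exactness argument showing that the quotient region equals what $G_{\mathrm{res}}$ can actually deliver. The steps, in order, are as follows.

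\emph{Step 1: build the quotient.} Contract each cluster $K$ to a single node whose capacity is the max-flow value $\bar C_K$ of its sub-DAG. Hypothesis E1 (scalar capacity summary) says this one number is a complete summary of the cluster. Hypothesis E4 (confluent clusters) guarantees a single entry/exit interface, so that the contracted node is well defined in $G'$.

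\emph{Step 2: invoke P1.} Since $G'$ is a tree or series--parallel network with positive capacities $\bar C_K$, Proposition~\ref{prop:polymatroidal-structure} applies verbatim. It gives that the agent-facing region $\mathcal{X}'=\{x: x(S)\le f'(S)\}$ is a polymatroid, where $f'$ is the max-flow rank on $G'$. This disposes of the first half of the claim.

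\emph{Step 3: show $\mathcal{X}'=\mathcal{X}_{\mathrm{res}}$ restricted to the agent-facing leaves.} Both regions are characterised by feasibility of a simultaneous flow (part (iv) of the proof of P1), so it suffices to prove $f'(S)=f(S)$ for every leaf set $S$.

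For $f'\le f$: take a max flow in $G'$. Inside each cluster $K$, the flow through $K$ is at most $\bar C_K$. Hypothesis E2 (cut-exact max-flow) then lets us route it inside the sub-DAG. Hypothesis E3 (no external coupling) ensures that these internal routings use disjoint resources, so they can be glued together into a feasible flow in $G_{\mathrm{res}}$.

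For $f\le f'$: take a flow in $G_{\mathrm{res}}$ and project it onto $G'$ by aggregating within clusters. The aggregated throughput of a cluster cannot exceed its internal max-flow $\bar C_K$, and by E3 no edge outside the clusters is double-counted. The projection is therefore feasible in $G'$.

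\emph{Main obstacle.} I expect the gluing step in Step 3 to be the hardest part. A cluster's max-flow value is defined for a fixed source/sink interface, and the quotient flow may enter a cluster in a way that does not match the interface used to compute $\bar C_K$.

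My first attempt is to read E2 and E4 together as saying that every cluster has a single source-side and a single sink-side cut, and that the minimum cut of the cluster is attained on that interface for every admissible inflow pattern. Max-flow min-cut inside the cluster then gives achievability of any throughput up to $\bar C_K$, independently of how the inflow is distributed.

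If that reading is too weak, the fallback is to take E2 literally as the statement $f|_{\text{cluster}}=\bar C_K$ on all relevant source subsets. In that form E2 directly supplies the equality needed in Step 3.
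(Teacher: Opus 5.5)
Your proposal is correct once E2 and E4 are read as the paper defines them. It has the same skeleton as the paper's proof: establish $f'(S)=f_{G_{\mathrm{res}}}(\iota(S))$ by two inequalities, then apply \cref{prop:polymatroidal-structure} to $G'$. The difference is that you prove both inequalities on the primal side with flows, while the paper works on the dual side with cuts.

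In the paper, E4 means a single boundary exit $\omega_k$ but possibly many contacts $\mathrm{con}(V_k)$, so your ``single entry/exit interface'' is slightly too strong. E2 is exactly your fallback: the max-flow inside $G_k$ from every non-empty $T\subseteq\mathrm{con}(V_k)$ to $\omega_k$ equals $\bar C_k$. The two inequalities then compare as follows.

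\emph{Deliverable $\le$ advertised.} The paper takes any cut of $G'$ and replaces each contracted node in it by a minimum contacts-to-$\omega_k$ cut of $G_k$. It then uses E4 to show the result is a cut of $G_{\mathrm{res}}$ of equal capacity. Your flow projection is the primal twin. The single exit and acyclicity make the restriction of a $G_{\mathrm{res}}$ flow to $G_k$ a contacts-to-$\omega_k$ flow, whose value is therefore at most $\bar C_k$.

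\emph{Advertised $\le$ deliverable.} The paper compresses an inclusion-minimal minimum cut of $G_{\mathrm{res}}$ into a cut of $G'$. It then uses minimality, E4 and E2 to charge each non-empty $C\cap V_k$ at least $\bar C_k$. Your lift is the primal twin, and the precise tool is the prescribed-supply form of max-flow/min-cut: attach a super-source with an arc of capacity $s(t)$ into each contact $t$. An inflow pattern $s$ is then routable to $\omega_k$ iff $s(T)\le f_{G_k}(T)$ for every $T\subseteq\mathrm{con}(V_k)$. This is the same characterisation used in part~(iv) of the proof of \cref{prop:polymatroidal-structure}, and E2 collapses all these constraints to $s(\mathrm{con}(V_k))\le\bar C_k$.

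Each route buys something. Your lifting and projection work for arbitrary flows with prescribed per-slice supplies, not only maximum ones. They therefore give the set equality between the agent-facing region and the aggregated deliverable set directly. The paper's cut route instead isolates exactly where each hypothesis is consumed, and pairs this with counterexamples showing that E2 and E4 are load-bearing.

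Two small corrections. First, the identity is $f'(S)=f_{G_{\mathrm{res}}}(\iota(S))$ on sets $S$ of slice tokens, with $\iota$ mapping a token to the leaves of its cluster; it is not $f'(S)=f(S)$ on leaf sets, since the ground sets differ. Second, E3 in the paper is a valuation condition (additive separability across slices). It is used only afterwards, to make \cref{prop:efficient-mechanism} applicable. The disjointness of internal routings and the absence of double-counting on inter-cluster edges that you attribute to it hold automatically, because the clusters partition the vertex set. Those invocations are therefore vacuous but harmless.
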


\begin{proof}
Let the vertex set of $G_{\mathrm{res}}$ admit a partition $V = V_1 \sqcup \cdots \sqcup V_K$, where each cluster $V_k$ is the responsibility of integrator $k$ and induces a sub-DAG $G_k=(V_k, E_k)$ with internal capacities inherited from $G_{\mathrm{res}}$. Inter-cluster edges retain their original capacities. Define the slice capacity
\[
\bar C_k \;=\; \max\text{-flow}(G_k) \;=\; \min_{C \text{ cut of }G_k}\ \mathrm{cap}(C),
\]
i.e., the value of a maximum flow inside $G_k$ from its \emph{contact set} $\mathrm{con}(V_k)$ --- the vertices of $V_k$ that receive an inter-cluster edge, together with the leaves of $G_{\mathrm{res}}$ contained in $V_k$ --- to its boundary exit $\omega_k$ (E4 below). Define the \emph{quotient graph} $G' = (V', E')$ by contracting each $V_k$ to a single node $\nu_k$ of capacity $\bar C_k$, retaining all inter-cluster edges with their original capacities, and absorbing each contracted node by the standard node-splitting reduction so that the flow through $\nu_k$ is bounded by $\bar C_k$. The agent-facing leaf set $E^{\mathrm{slice}}$ collects one slice token per leaf-bearing cluster (or per slice type, when integrators expose multiple slice families); clusters containing no leaf expose no token, and $\iota$ maps each token to the leaves inside its cluster.

\emph{(i) Encapsulation conditions.} The conditions E1--E4 specialise the contraction to information-preserving form.
\begin{itemize}
\item \textbf{E1 (scalar capacity).} Each integrator's slice exposes a single non-negative scalar $\bar C_k$ to the agent-facing market; no internal structure of $G_k$ is observable.
\item \textbf{E2 (cut-exact summary).} $\bar C_k$ equals the actual max-flow of $G_k$, and it does so subset-wise: for every non-empty $T\subseteq\mathrm{con}(V_k)$ the max-flow inside $G_k$ from $T$ to $\omega_k$ is again $\bar C_k$, so no proper subset of the cluster's contacts is cheaper to sever than the whole. Faithfulness of the aggregate value alone is strictly weaker and does not suffice: a scalar that matches only the full-contact max-flow can still hide a bottleneck that binds on a proper subset, and the contraction then advertises capacity the cluster cannot route.
\item \textbf{E3 (no cross-slice complementarity).} An agent's value for slice $k$ is additively separable from its value for slice $k'\ne k$: $v_i(x_i) = \sum_{k} v_{i,k}(x_{i,k})$. The condition constrains complementarity within an agent's own bundle; it places no restriction on externalities across agents.
\item \textbf{E4 (confluent cluster).} Every edge from $V_k$ to $V\setminus V_k$ leaves a single \emph{boundary exit} $\omega_k\in V_k$, and every vertex of $V_k$ lies on an internal path from $\mathrm{con}(V_k)$ to $\omega_k$. A source cluster, which contains every leaf it exposes and receives no inter-cluster edge, satisfies E4 as soon as it has one exit; this is the case of the tree-structured PAA tiers below, whose sensing, edge-inference, and cloud-reasoning stages funnel into a single egress.
\end{itemize}

\emph{(ii) Information-preserving contraction.} We claim that $f'(S) = f_{G_{\mathrm{res}}}(\iota(S))$ for every $S\subseteq E^{\mathrm{slice}}$, where $f'$ is the rank function of $G'$. Both inequalities are cut arguments, and each consumes one of E2 and E4.

\emph{Advertised capacity is not below deliverable capacity.} Let $C'$ be any cut of $G'$ separating $\{\nu_k\}_{k\in S}$ from the root, and let $C^*$ be obtained from $C'$ by replacing each contracted node $\nu_k\in C'$ with a minimum cut of $G_k$ separating $\mathrm{con}(V_k)$ from $\omega_k$, the inter-cluster edges of $C'$ retaining their capacities. Each replacement costs exactly $\bar C_k$, so $\mathrm{cap}_{G_{\mathrm{res}}}(C^*) = \mathrm{cap}_{G'}(C')$. Moreover $C^*$ is a cut of $G_{\mathrm{res}}$: a path from a leaf in $\iota(S)$ to the root projects onto a path in $G'$ from some $\nu_k$, $k\in S$, to the root, which meets $C'$; if it meets an inter-cluster edge of $C'$ it meets $C^*$, and if it meets some $\nu_j\in C'$ then by E4 the portion of the path inside $V_j$ runs from a contact of $V_j$ to the single exit $\omega_j$ and is therefore severed by the internal cut substituted for $\nu_j$. Minimising over $C'$ gives $f_{G_{\mathrm{res}}}(\iota(S))\le f'(S)$.

\emph{Advertised capacity is not above deliverable capacity.} Let $C$ be an inclusion-minimal minimum cut of $G_{\mathrm{res}}$ separating $\iota(S)$ from the root, and let $C'$ consist of the nodes $\nu_k$ with $C\cap V_k\ne\emptyset$ together with the inter-cluster edges of $C$. Then $C'$ is a cut of $G'$: lifting any $G'$ path from $\nu_{k_0}$, $k_0\in S$, to the root --- inside each cluster from a contact to its exit, which E4 guarantees exists --- produces a leaf-to-root path of $G_{\mathrm{res}}$, which $C$ meets, so $C'$ meets the $G'$ path it was lifted from. For the capacity, fix $k$ with $C\cap V_k\ne\emptyset$ and let $T$ be the set of contacts of $V_k$ that $\iota(S)$ still reaches in $G_{\mathrm{res}}$ once $C\setminus V_k$ is removed. If $T$ were empty, no source flow would enter $V_k$ and $C\cap V_k$ would be redundant. If instead some $t\in T$ still reached $\omega_k$ after removing all of $C$, then no path from $\omega_k$ to the root could survive $C$, since otherwise $C$ would not be a cut; but E4 makes $\omega_k$ the only way out of $V_k$, so every path leaving the cluster is already severed downstream and $C\cap V_k$ is again redundant. Inclusion-minimality therefore forces $T\ne\emptyset$ and $C\cap V_k$ to separate $T$ from $\omega_k$, so by E2 its capacity is at least $\bar C_k$. Summing over the clusters $C$ meets, $\mathrm{cap}_{G'}(C')\le\mathrm{cap}_{G_{\mathrm{res}}}(C)$, whence $f'(S)\le f_{G_{\mathrm{res}}}(\iota(S))$ and the two agree.

Both hypotheses are load-bearing, and dropping either lets contraction manufacture routes. Without E4, take internal nodes $u,v$ of capacity $1$ with no edge between them, feeding downstream nodes of capacity $1$ and $0.1$ respectively; the cluster $\{u,v\}$ has max-flow $2$, and the token whose leaf feeds $v$ is credited $\min(2,\,1+0.1)=1.1$ against a deliverable $0.1$. Without E2's subset clause the same gap opens at a single-exit cluster whose contact vertices carry capacities of their own. In both cases the advertised region strictly contains the deliverable one, and \cref{prop:efficient-mechanism} would clear allocations that $G_{\mathrm{res}}$ cannot route.

Submodularity, monotonicity, and normalisation transfer from $f_{G_{\mathrm{res}}}$ to $f'$ unchanged~\cite{fujishige2005submodular}, so $f'$ is itself a polymatroid rank function on $E^{\mathrm{slice}}$.

\emph{(iii) Reduction to \cref{prop:polymatroidal-structure}.} If the quotient graph $G'$ is a rooted tree or SP network, \cref{prop:polymatroidal-structure} applied to $G'$ yields that the agent-facing feasible region
\[
\mathcal{X}'_{\mathrm{res}} \;=\; \{\,x\in\mathbb{R}^{E^{\mathrm{slice}}}_{\ge 0} : x(S)\le f'(S)\;\forall S\subseteq E^{\mathrm{slice}}\,\}
\]
is a polymatroid. Combined with E3 (additive separability across slice types), the agent-facing market is a polymatroidal mechanism design problem to which \cref{prop:efficient-mechanism} applies.
\end{proof}

GS valuations arise under three conditions: (GS1)~unit demand, (GS2)~additive separability, (GS3)~fixed attributes within each epoch. Integrator encapsulation (\cref{prop:encapsulation}) supplies (GS2) and (GS3) by bundling multi-resource paths into substitutable slices, subject to encapsulation conditions E1--E4 (full applicability table in \cref{app:structural-assumptions}); (GS1) is a modelling restriction on the task-to-slice mapping: it is imposed on those slices, and P3 does not establish it. Together, \cref{prop:polymatroidal-structure,prop:efficient-mechanism,prop:encapsulation} establish that a hybrid architecture with integrators restores tractable, incentive-compatible coordination even for complex dependency structures. Except where a result states otherwise, the credibility results below are read on a feasible region supplied by P1--P3 with GS1--GS3 in force; the scope conditions that actually bind each result, and the results that need less, are collected in \cref{rem:scope-consolidated}, and \cref{app:appendix-z} establishes the trilemma, and prices the domain-separation channel, over primitives that do not assume them.

\paragraph*{Type spaces at two marketplace tiers.} Level~1 (cross-domain slice marketplace) is single-parameter and matroid: encapsulation (P3) makes each agent unit-demand on composite slices ($x_i \in \{0,1\}$), so the Archer--Tardos characterisation~\cite{archer2001truthful} applies. Level~2 (within-domain raw-resource marketplace) is multi-unit polymatroidal ($x_i \ge 0$), requiring the GS/clinching framework~\cite{gul1999grosssubstitutes,goel2015polyhedral}. The distinction matters for credibility: the deferred-revelation auction (DRA, defined formally in \cref{thm:commitment}(ii)) applies only to Level~1, while ascending clinching and domain separation cover both tiers.

\paragraph*{Running example (PAA, three-tier).} In the PAA scenario (\cref{fig:paa-architecture}), the building-side sensor--edge--cloud DAG is tree-structured~\cite{loven2026realtime}, so P1 holds, P3 bundles the three tiers into a context-aware slice, and PAAs' unit-demand GS valuations satisfy P2. The bundle is confluent in the sense of E4: the tier cluster contains the sensor leaves it exposes, receives no inter-cluster edge, and funnels into the single cloud-reasoning egress, so its scalar capacity is cut-exact and the agent-facing region coincides with the building-side DAG's deliverable set. The induced polymatroid is \emph{non-modular} because PAAs share the edge inference and cloud reasoning capacities ($f(\{i,j\}) < f(\{i\}) + f(\{j\})$ for any pair sharing a constrained tier); non-modularity creates the positive Archer--Tardos payments and the credibility problem of \cref{sec:trilemma}.

\paragraph*{A concurrent mediator-faithful baseline.}
\label{sec:bg-amin-baseline}
Independently and concurrently, Amin et al.~\cite{amin2026market} establish the same Walrasian--VCG baseline for capacity-sharing networks: gross substitutes on series--parallel topologies with homogeneous disutility (Lemma~3.8), polynomial-time integer equilibrium (Theorem~3.2), and a VCG-equivalent equilibrium that maximises utilities and minimises edge prices (Theorem~3.10), with their LP integrality gap on non--series--parallel topologies playing the role of $\gamma_{ij}$. Their citation chain~\cite{kelso1982job,gul1999grosssubstitutes} overlaps ours. The credibility gap below applies to their faithful-mediator baseline as well; \cref{app:appendix-z} establishes \cref{thm:trilemma} and \cref{prop:domain-separation} as instances over their primitives via the bridging lemmas \cref{lem:amin-mediator-regime,lem:amin-walrasian-gap,lem:amin-perturbation} and \cref{thm:amin-trilemma-instance}, and the mediator's deviation surplus there supplies a CoNC-type revenue-increment reading. The $\gamma_{ij}$-structural identification behind \cref{cor:conc-lb} does not transfer to their primitives.

\subsection{Credible Mechanism Design}
\label{sec:bg-credible}

The results above guarantee that no \emph{agent} benefits from misreporting its type. They are silent, however, on the behaviour of the entity that \emph{executes} the mechanism.

Akbarpour and Li~\cite{akbarpour2020credible} formalised this concern as \emph{credibility}: a mechanism is credible if the auctioneer has no incentive to deviate from the prescribed protocol, given that deviations must be undetectable to any individual bidder. They proved the following trilemma for single-item auctions: no static, sealed-bid mechanism is simultaneously (i)~revenue-optimal, (ii)~strategyproof for bidders, and (iii)~credible for the auctioneer. The ascending (English) auction is the unique credible, strategyproof mechanism, but it is not revenue-optimal; the first-price auction is the unique credible static mechanism, but it is not strategyproof. The revenue-maximising mechanism design tradition~\cite{myerson1981optimal,myerson1983efficient,mcafee1992dominant} provides the benchmark for property~(i): Myerson's optimal mechanism maximises expected revenue over regular distributions, and McAfee and McMillan's dominant-strategy analysis identifies conditions under which dominant-strategy and Bayesian-optimal revenue coincide.

Recent work extends these results to richer settings. Ferreira and Weinberg~\cite{ferreira2020credible} show that a \emph{deferred-revelation auction} (DRA), implemented over a secure and censorship-resistant blockchain, achieves credibility, strategyproofness, and revenue optimality for strongly regular distributions. Chitra et al.~\cite{chitra2024credible} generalise the DRA to any public broadcast channel, removing the blockchain requirement. Ganesh and Zhang~\cite{ec2025matroid} extend this approach to matroid feasibility constraints, proving that the DRA satisfies all three properties on matroid environments when bidder values are drawn from $\alpha$-strongly regular distributions; they also establish that DRA is \emph{not} credible beyond matroid feasibility.

\subsection{The Truthful Execution Gap}
\label{sec:bg-gap}

We now make explicit what P1--P3, and the mediator-faithful baseline of \cite{amin2026market}, implicitly assume. We use \emph{marketplace operator} for the entity that runs the auction (collects bids, computes allocations, charges payments) and \emph{integrator} for the entity that composes a sub-DAG into a slice (P3); the credibility problem arises when the same legal entity plays both roles, e.g., an edge provider that both integrates its on-prem inference resources and runs the slice marketplace.

\paragraph*{Information model.} In the sealed-bid setting, each agent $i$ observes only its own bid $b_i$, allocation $x_i$, and payment $p_i$ (no public transcript); in the ascending setting, agent $i$ additionally observes the broadcast price clock and clinching events, and undetectability requires consistency with the public record. The DSIC guarantee of P2 holds when:
\begin{enumerate}
    \item[\textbf{A1.}] The operator collects all bids without alteration.
    \item[\textbf{A2.}] The operator computes the welfare-maximising allocation (via ascending auction or equivalent polynomial-time procedure on the polymatroid).
    \item[\textbf{A3.}] The operator charges VCG payments (or clinching-auction prices) exactly as prescribed.
    \item[\textbf{A4.}] The operator does not selectively reveal bid information to affiliated agents.
\end{enumerate}

In a multi-domain service economy, the entity running the marketplace may violate any of A1--A4. We identify four \emph{operator deviations}, ordered to follow A1--A4:

\begin{itemize}
    \item \textbf{Capacity misreporting (violates A1+A2).} An operator that is simultaneously an integrator may understate its sub-DAG max-flow $\bar{C}_j$, creating artificial scarcity and inflating slice prices.

    \item \textbf{Discriminatory allocation (violates A2).} The operator may allocate off-equilibrium, favouring affiliated agents over higher-WTP outsiders rather than maximising welfare.

    \item \textbf{Price manipulation (violates A3).} Because losing bids are not visible to agents under the sealed-bid information model, the operator can compute payments that differ from the VCG/clinching prescription without contradicting any agent's view of the outcome. Shill bidding is the canonical implementation, and a phantom bid raises the Archer--Tardos integrand directly; \cref{lem:perturbation} shows that raising an existing bidder's report suffices, so the channel is open even against the population-preserving class of \cref{def:deviation}.

    \item \textbf{Selective information revelation (violates A4).} The operator may share bid information with affiliated agents, who then form an effective coalition with the operator for surplus-extracting joint deviations; coalition-aware mechanism design is out of scope here.
\end{itemize}

The credibility gap is thus that even if agents are truthful (guaranteed by DSIC), the mechanism's welfare and incentive properties are realised only if the operator is also truthful.

Simulation evidence in~\cite{loven2026realtime} sharpens this. Under truthful bidding, a value-greedy heuristic (full valuation visibility, no prices) achieves welfare within $1\%$ of the market mechanism across all tested conditions; under \textbf{P1} this follows from Edmonds' theorem~\cite{edmonds1970submodular}, since priority-by-bid and priority-by-value coincide when bids are truthful. The mechanism's contribution is therefore purely incentive-theoretic: DSIC induces the truthful reporting that the heuristic would need but cannot obtain across trust boundaries. Without truthful reporting, bid inflation breaks the value--bid coincidence and the equivalence collapses. Credibility carries the incentive guarantee: an operator deviation removes agents' reason to trust that truthful bidding is optimal, so DSIC fails in practice and the welfare-maximising allocation is lost.

A quantitative summary of this gap, the \emph{Cost of Non-Credibility} (CoNC), is introduced in \cref{sec:trilemma} alongside the trilemma; \cref{sec:eval-conc} reports its empirical value under sealed-bid VCG and under broadcast commitment.

\subsection{Scope of the Credibility Analysis}
\label{sec:bg-scope}

The credibility results that follow apply to any mechanism-mediated market whose feasible region is non-modular polymatroidal and whose operator is a strategic player observing the bid profile. The hybrid sensor--edge--cloud market of~\cite{loven2026realtime} is one canonical realisation; alternative realisations include network-slice brokering~\cite{sciancalepore2019slicebrokering,afolabi2018slicing}, multi-tenant resource auctions with sub-modular capacity, and any setting in which conditions P1--P3 (or their non-computing-continuum analogues) hold. We use the PAA scenario as a running example for concreteness, but the trilemma and its resolutions transfer wherever the structural conditions are met.

\subsection{Notation}
\label{sec:bg-notation}
\cref{tab:notation} collects the symbols used throughout the paper.

\begin{table}[ht]
\centering
\small
\caption{Summary of notation}\label{tab:notation}
\renewcommand{\arraystretch}{1.15}
\begin{tabular}{ll}
\toprule
\textbf{Symbol} & \textbf{Meaning} \\
\midrule
\multicolumn{2}{l}{\emph{Agents, types, and allocations (from~\cite{loven2026realtime})}} \\
$\mathcal{A}$ & Set of agents \\
$n$ & Number of agents ($|\mathcal{A}|$) \\
$v_i$ & Private valuation of agent $i$ \\
$x_i$ & Allocation for agent $i$ \\
$p_i$ & Payment by agent $i$ \\
$\mathbf{b}=(b_1,\ldots,b_n)$ & Bid vector \\
$f$ & Polymatroid rank function \\
$\mathcal{X}_{\mathrm{res}}$ & Feasible allocation region (polymatroid) \\
$G_{\mathrm{res}}$ & Service-dependency DAG \\
\midrule
\multicolumn{2}{l}{\emph{Credibility framework}} \\
$\mathcal{O}$ & Marketplace operator \\
$\delta$ & Operator deviation mapping $\mathbf{b} \mapsto (x', p')$ \\
$\delta$ (in $W_{ij|S}$) & Perturbation amplitude added to $b_j$ (\cref{lem:perturbation}) \\
$\gamma_{ij}$ & Non-modularity gap $f(\{i\}) + f(\{j\}) - f(\{i,j\})$ \\
$\gamma_{ij|S}$ & Conditional non-modularity gap at prefix $S$ (\cref{def:competitive-prior}) \\
$W_{ij|S}(\mathbf{b})$ & Contested window of the pair $(i,j)$ at prefix $S$ (\cref{def:contested-window}) \\
$\bar\delta_{ij|S}$ & Upper endpoint of the contested window (\cref{def:contested-window}) \\
$z_{ij}$ & Contested crossing $\bar{\varphi}_i^{-1}(\bar{\varphi}_j(b_j))$ (\cref{lem:perturbation}) \\
$\bar{\varphi}_i$ & Ironed virtual value of prior $F_i$ \\
$\phi$ & Per-unit fee (domain separation) \\
$\lambda$ & Operator ownership-stake fraction (knife-edge) \\
$\varepsilon$-credibility-ex-ante & Ex-ante $\varepsilon$-credibility \\
$\mathcal{D}_{\mathrm{undet}}(\mathcal{M})$ & Undetectable-deviation set on $\mathcal{M}$ \\
$\mathcal{F}_{\mathrm{perturb}}$ & Family of single applications of \cref{lem:perturbation} \\
$\mathcal{L}_{\mathrm{cred}},\mathcal{L}_{\mathrm{Salop}}$ & Credibility / Salop surplus-extraction components \\
$\Gamma$ & Aggregate non-modularity, $\sum_{(i,j)}\gamma_{ij}$ \\
$\mathrm{CoNC}^{\mathrm{op}},\mathrm{CoNC}^{\mathrm{W}},\mathrm{CoNC}^{\mathrm{ag}}$ & Cost of Non-Credibility: operator / welfare / agent variants (\cref{eq:conc}) \\
\midrule
\multicolumn{2}{l}{\emph{Simulation parameters}} \\
$N$ & Agents per simulation \\
$C$ & Tier capacity \\
\bottomrule
\end{tabular}
\end{table}
\section{Credible Mechanisms for Polymatroidal Service Markets}
\label{sec:credible-mechanisms}

This section presents the paper's main theoretical contributions: a credibility trilemma for polymatroidal service markets, a commitment-based resolution, and alternative credibility mechanisms based on domain separation and integrator competition.

\subsection{Operator Model}
\label{sec:operator-model}

We extend the mechanism design model of~\cite{loven2026realtime} by introducing the marketplace operator as a strategic player. The polymatroidal service market hosts four formally distinct roles, which real deployments may collapse into a single legal entity in various combinations.

\begin{definition}[Role taxonomy of the polymatroidal service market]
\label{def:role-taxonomy}
The market involves four roles:
\begin{enumerate}
    \item \emph{Resource owner.} Supplies the underlying capacity (compute, bandwidth, storage) into the polymatroidal feasible region $\mathcal{X}_{\mathrm{res}}$, and receives wholesale payments outside the auction loop (under settlement separation, the agent-to-owner transfer transits the operator without entering its books; see (C0) of \cref{prop:domain-separation}).
    \item \emph{Integrator.} Composes a sub-DAG of resources into a sellable slice (P3 of~\cite{loven2026realtime}, \cref{prop:encapsulation}); sets the slice's encapsulation parameters (max-flow $\bar C_k$, slice-type cardinality, eligibility set).
    \item \emph{Marketplace operator.} Runs the auction over the polymatroid of slices: receives bids $\mathbf{b}=(b_1,\ldots,b_n)$, computes an allocation $x\in\mathcal{X}_{\mathrm{res}}$ and payments $p=(p_1,\ldots,p_n)$. The strategic player whose deviation $\delta$ is the subject of \cref{thm:trilemma}.
    \item \emph{Agent.} The buyer, with private valuation $v_i$ over slices and quasi-linear utility $v_i x_i - p_i$.
\end{enumerate}
Real deployments may collapse multiple roles into a single legal entity: \emph{resource owner = integrator} in vertically integrated cloud providers; \emph{integrator = operator} in the dual-role case central to our credibility analysis (an edge provider that hosts inference and runs the slice marketplace). The trilemma is most consequential when roles~2 and~3 are united in one strategic entity, because that entity simultaneously holds private information about supply, controls the auction's allocation rule, and has custody of payment flows.
\end{definition}

Throughout we let $\mathcal{O}$ denote the marketplace operator of \cref{def:role-taxonomy}(3) over polymatroidal feasible region $\mathcal{X}_{\mathrm{res}}$. The operator receives the bid vector $\mathbf{b}$, computes an allocation and payment, and earns revenue $\sum_i p_i$ minus the cost of procuring the allocated resources. Our credibility analysis is most consequential in the integrator-as-operator dual-role case (roles~2 and~3 united); in single-role deployments (e.g., a neutral exchange satisfying (C0)--(C4)) the trilemma's third leg is restored.

\begin{definition}[Operator deviation]
\label{def:deviation}
An \emph{operator deviation} is a strategic action by the marketplace operator on the received bid profile $\mathbf{b}$. The operator's \emph{action space} comprises: (a) evaluating the prescribed mechanism $(x^*,p^*)$ on $\mathbf{b}$ and returning its honest output; or (b) constructing a counterfactual bid vector $\hat{\mathbf{b}}$ (by phantom-bidder insertion, losing-bid concealment, bid substitution, or arbitrary modification of $\mathbf{b}_{-i}$ for any $i$), evaluating $(x^*,p^*)$ on $\hat{\mathbf{b}}$, and returning the resulting per-agent outcomes; or (c) any mixture of (a) and (b) including the freedom to inflate or deflate $i$'s payment by an exogenous amount $\varepsilon_i$ that is rationalised by an honest execution on \emph{some} bid profile $\hat{\mathbf{b}}_{-i}$ with $\hat{b}_i=b_i$. A deviation is \emph{population-preserving} when every counterfactual profile it evaluates is a profile of the realised agent set, so that the operator alters only the reports of agents that actually bid; losing-bid concealment is of this kind, a losing bid being concealed by reporting it below the reserve. It is \emph{population-augmenting} when some counterfactual profile carries an identity outside the realised agent set, as in phantom-bidder insertion or identity substitution: the identity is drawn from the ground set $E$ but did not bid in this realisation, so $\hat{\mathbf{b}}_{-i}$ remains a point of $\mathrm{supp}(F_{-i})$ and the support criterion below applies unchanged to both classes. Formally, a deviation is a mapping $\delta: \mathbf{b} \mapsto (x', p')$ that differs from the prescribed mechanism $(x^*, p^*)$ and is \emph{undetectable}: no individual agent $i$ can distinguish its observed $(x'_i, p'_i)$ from $(x^*_i, p^*_i)$ given only its own bid $b_i$ and outcome, where undetectability is the Akbarpour--Li support criterion: agent $i$ accepts $(x_i,p_i)$ iff some $\hat{\mathbf b}_{-i}\in\mathrm{supp}(F_{-i})$ with $\hat b_i=b_i$ rationalises it (\cref{rem:akbarpour-li-alignment}).
\end{definition}

The impossibility results of this paper use population-preserving deviations only. The perturbation of \cref{lem:perturbation}, which drives \cref{thm:trilemma}, \cref{cor:conc-lb} and \cref{cor:constructive-trilemma}, raises one existing agent's bid inside the prior support and inserts nobody, so the trilemma holds already against the smaller preserving class and phantom insertion is not what breaks credibility. The augmenting class is what the resolutions close, each by a named condition: (S3) of \cref{def:broadcast-channel} for \cref{thm:commitment}, and (C0), (C2) and (C4) in case~5 of \cref{prop:domain-separation}.

\begin{definition}[Credible mechanism: realisation-wise and ex-ante variants]
\label{def:credible}
A mechanism $\mathcal{M}=(x^{*},p^{*})$ is \emph{realisation-wise credible} if no operator deviation $\delta$ yields strictly higher revenue than $\mathcal{M}$ for any realisation of bids $\mathbf{b}$. It is \emph{$\varepsilon$-credible-ex-ante} (for $\varepsilon\ge 0$) if no operator deviation $\delta$ yields expected revenue exceeding $\mathbb{E}[\mathrm{rev}^{*}(\mathcal{M})]+\varepsilon$ where the expectation is over the bid prior. Realisation-wise credibility is the strictly stronger notion: it implies $0$-credibility-ex-ante, while ex-ante credibility does not imply realisation-wise credibility. Throughout, ``credible'' without qualifier denotes the realisation-wise variant; the trilemma of \cref{thm:trilemma} is a realisation-wise impossibility.
\end{definition}

The key distinction from agent incentive compatibility is that the operator observes \emph{all} bids (complete information about the bid profile) and can modify both the allocation and payments, subject only to the undetectability constraint of the information model in \cref{sec:bg-gap}.

\subsection{Credibility Trilemma}
\label{sec:trilemma}

We extend the Akbarpour--Li trilemma of \cref{sec:bg-credible} from single-item auctions to polymatroidal feasible regions. Recall that a polymatroid is \emph{non-modular} when agents share capacity: $f(\{i\}) + f(\{j\}) > f(\{i,j\})$ for some pair $i,j$. Non-modularity is the structural condition that enables the operator deviation constructed below.

\begin{definition}[Contested window]
\label{def:contested-window}
Let $f$ be a non-modular polymatroid on ground set $E$ and let $F=\prod_i F_i$ be independent non-degenerate regular priors with interval supports; write $\bar{\varphi}_i$ for the (ironed) virtual value of $F_i$~\cite{myerson1981optimal} and $\underline{v}_i \triangleq \inf\mathrm{supp}(F_i)$. Allocations are computed by the Edmonds greedy in a priority order $\pi$, in one of two branches: \textup{(W)} the welfare branch, $\pi=$ decreasing bids; \textup{(V)} the virtual branch, $\pi=$ decreasing $\bar{\varphi}_l(b_l)$ over the active agents ($\bar{\varphi}_l(b_l)\ge 0$), with each $\bar{\varphi}_l$ strictly increasing and continuous on the bids involved. On branch \textup{(W)} read every $\bar{\varphi}_l$ below as the identity map, under which the two window rows coincide. Fix a pair $(i,j)$, a prefix set $S\subseteq E\setminus\{i,j\}$, and a profile $\mathbf{b}$ in the prior support with $\mathrm{supp}(F_j)$ an interval. Call the triple $(i,j,S)$ \emph{aligned} at $\mathbf{b}$ if $\pi$ processes the agents of $S$ before $i$, $i$ before $j$, and $j$ before every other active agent. The \emph{contested window} $W_{ij|S}(\mathbf{b})$ is empty when $(i,j,S)$ is not aligned at $\mathbf{b}$, and is the open interval $(0,\bar\delta_{ij|S})$ otherwise, where
\[
\bar\delta_{ij|S} \;=\;
\begin{cases}
\min\bigl(b_i-b_j,\; b_j-\max_{k\notin S\cup\{i,j\}} b_k\bigr) & \textup{(W)},\\[2pt]
\min\bigl(\bar{\varphi}_j^{-1}(\bar{\varphi}_i(b_i))-b_j,\;
 b_j-\bar{\varphi}_j^{-1}\bigl(\max_{k\notin S\cup\{i,j\}}\bar{\varphi}_k(b_k)\bigr)\bigr) & \textup{(V)},
\end{cases}
\]
in each case intersected with $\bigl(0,\ \sup\mathrm{supp}(F_j)-b_j\bigr)$, where on branch \textup{(V)} the inner maximum ranges over the active agents outside $S\cup\{i,j\}$, and on either branch a maximum over an empty index set is $-\infty$, the corresponding entry being read as $+\infty$; if $\bar{\varphi}_i(b_i)$ lies above the range of $\bar{\varphi}_j$, read $\bar{\varphi}_j^{-1}(\bar{\varphi}_i(b_i))$ as $\sup\mathrm{supp}(F_j)$, under which that entry is dominated by the support entry, and mirroring it, if $\max_{k\notin S\cup\{i,j\}}\bar{\varphi}_k(b_k)$ lies below the range of $\bar{\varphi}_j$, read $\bar{\varphi}_j^{-1}$ of that maximum as $\underline{v}_j$, under which the second entry reads $b_j-\underline{v}_j$. Alignment makes the first two entries strictly positive, so $W_{ij|S}(\mathbf{b})\ne\emptyset$ exactly when $(i,j,S)$ is aligned and $b_j<\sup\mathrm{supp}(F_j)$: the window is non-empty precisely when the priority order contests the pair and an in-support pullback above $b_j$ exists.
\end{definition}

\begin{definition}[Competitive prior profile]
\label{def:competitive-prior}
Let $f$, $F$, $\bar{\varphi}_i$ and $\underline{v}_i$ be as in \cref{def:contested-window}. For a pair $i,j\in E$ and $S\subseteq E\setminus\{i,j\}$ define the \emph{conditional non-modularity gap}
\[
\gamma_{ij|S} \;\triangleq\; f(S\cup\{i\})+f(S\cup\{j\})-f(S)-f(S\cup\{i,j\}),
\]
which is non-negative by submodularity and recovers the non-modularity gap $\gamma_{ij}=f(\{i\})+f(\{j\})-f(\{i,j\})$ at $S=\emptyset$. The pair $(f,F)$ is \emph{competitive} if, with positive prior probability (the \emph{competitive event}), the realised profile $\mathbf{b}$ admits a pair $(i,j)$ and a prefix set $S\subseteq E\setminus\{i,j\}$ such that
\begin{enumerate}
\item[(A)] $\gamma_{ij|S}>0$;
\item[(B)] $\bar{\varphi}_j(b_j)>\max\{0,\;\bar{\varphi}_i(\underline{v}_i)\}$, with $(b_i,b_j)$ in the strictly-increasing interior of $\bar{\varphi}_i,\bar{\varphi}_j$, so that the contested crossing $\bar{\varphi}_i^{-1}(\bar{\varphi}_j(b_j))$ lies strictly above agent $i$\'s active-bid threshold $\max\{\underline{v}_i,\,\bar{\varphi}_i^{-1}(0)\}$;
\item[(C)] the contested window is non-empty, $W_{ij|S}(\mathbf{b})\ne\emptyset$ (\cref{def:contested-window}, evaluated under the ironed-virtual-value priority order, branch \textup{(V)}), which contests $(i,j)$ at prefix $S$, places the crossing below $b_i$, and supplies an in-support perturbation amplitude.
\end{enumerate}
\end{definition}

\begin{remark}[Checkable sufficient conditions for competitiveness]
\label{rem:competitive-sufficient}
Competitiveness holds under either of two standard prior structures, provided some pair has $\gamma_{ij}>0$. \emph{(a) I.i.d.\ priors.} Let the agents share a regular prior $F_0$ with interval support, strictly increasing virtual value $\bar{\varphi}_0$, and upper endpoint $\omega=\sup\mathrm{supp}(F_0)$, and set $r=\max\{\inf\mathrm{supp}(F_0),\,\bar{\varphi}_0^{-1}(0)\}$; then $r<\omega$ because $\bar{\varphi}_0(\omega)=\omega>0$. Fix thresholds $r<q<q'<\omega$. The event $\{b_i>q'\}\cap\{q<b_j<q'\}\cap\{b_k<q\ \text{for all}\ k\notin\{i,j\}\}$ has positive prior probability, and on it (A)--(C) hold with $S=\emptyset$: the common $\bar{\varphi}_0$ makes bid order and priority order coincide, which aligns $(i,j,\emptyset)$, and $b_j<q'<\omega$ then gives (C); the contested crossing equals $b_j$ and lies in the interior of the common support above $r$, which is (B), since $b_j>q>r$ gives $\bar{\varphi}_0(b_j)>\bar{\varphi}_0(r)=\max\{0,\bar{\varphi}_0(\underline{v}_j)\}$; and any $\hat{b}_j\in(b_j,\min\{b_i,\omega\})$ exhibits an amplitude in the window. \emph{(b) Common support bottom.} Let every support be $[0,\omega_i]$ with continuous positive density. Then $\bar{\varphi}_i(0)=-1/f_i(0)<0$, so each agent falls below its reserve $r_i=\bar{\varphi}_i^{-1}(0)>0$ with positive probability, while $\bar{\varphi}_i(\omega_i)=\omega_i>0$. Pick $t\in(0,\min\{\omega_i,\omega_j\})$. The event $\{\bar{\varphi}_i(b_i)>t\}\cap\{0<\bar{\varphi}_j(b_j)<t\}\cap\{b_k<r_k\ \text{for all}\ k\notin\{i,j\}\}$ has positive prior probability by continuity and full support. On it every agent outside $\{i,j\}$ is inactive, so $(i,j,\emptyset)$ is aligned and (A) holds; (B) holds because $\bar{\varphi}_j(b_j)>0>\bar{\varphi}_i(0)=\bar{\varphi}_i(\underline{v}_i)$, which places the contested crossing in $(r_i,b_i)\subset(0,\omega_i)$; and (C) follows from alignment together with $b_j<\omega_j$, by continuity of $\bar{\varphi}_j$ at $b_j$. Both conditions cover the motivating deployments of this paper: PAAs competing for inference slots on a shared edge server draw from overlapping value distributions.
\end{remark}

\begin{theorem}[Credibility trilemma for polymatroidal markets]
\label{thm:trilemma}
Let $\mathcal{X}_{\mathrm{res}}$ be a non-modular polymatroidal feasible region with rank function $f$ over ground set $E$ with $|E| \ge 2$, populated by single-parameter agents with quasi-linear valuations and non-degenerate atomless regular priors $F=\prod_i F_i$ such that $(f,F)$ is competitive (\cref{def:competitive-prior}), operated by a single self-interested operator.\footnote{Coalition deviations among multiple operators, or between an operator and a subset of agents, are out of scope here. Strategy-proof mechanism design under rich interdependent-values type spaces faces structural obstacles~\cite{jehiel2001efficient}.} Within the class of revenue-optimal DSIC static sealed-bid mechanisms over $\mathcal{X}_{\mathrm{res}}$ (i.e., monotone allocation rules with Archer--Tardos payments that maximise expected operator revenue under independent regular priors), no mechanism is simultaneously:
\begin{enumerate}
    \item[(i)] \emph{Revenue-optimal}: maximises expected operator revenue when agents' values are independently drawn from regular distributions in the sense of Myerson~\cite{myerson1981optimal};
    \item[(ii)] \emph{DSIC for agents}: truthful reporting is a dominant strategy;
    \item[(iii)] \emph{Credible for the operator in the realisation-wise sense}: no profitable undetectable deviation exists for any bid realisation in the support.
\end{enumerate}
The ``static sealed-bid'' format is defined formally in \cref{def:static-sealed-bid} below: the operator executes a single round and each agent observes only its own bid, allocation, and payment.
\end{theorem}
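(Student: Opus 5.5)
The plan is to take an arbitrary mechanism $\mathcal{M}$ in the stated class and show that it admits a profitable, undetectable, population-preserving deviation on the competitive event. Properties (i) and (ii) pin $\mathcal{M}$ down. By Myerson's characterisation for single-parameter agents with independent regular priors, any revenue-optimal DSIC mechanism over the polymatroid must, almost surely, maximise expected virtual surplus $\sum_l \bar{\varphi}_l(b_l)x_l$ over $\mathcal{X}_{\mathrm{res}}$. Because the objective is linear over a polymatroid, Edmonds' greedy in decreasing $\bar{\varphi}_l(b_l)$ order over the active agents is an optimiser (\cref{prop:efficient-mechanism}(ii)); this is branch~(V) of \cref{def:contested-window}. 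DSIC then forces Archer--Tardos payments $p_l(\mathbf b)=b_l x_l(\mathbf b)-\int_0^{b_l}x_l(t,\mathbf b_{-l})\,dt$.

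I would first dispose of tie-breaking and measure-zero ambiguity. Atomless priors and strictly increasing $\bar{\varphi}_l$ on the relevant interior make the greedy order a.s.\ strict. Any other optimal allocation rule must therefore agree with the greedy one off a null set, and I would restrict attention to the positive-probability competitive event where it does.

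Fix a realisation $\mathbf b$ in the competitive event, with its pair $(i,j)$ and prefix $S$. By alignment, the greedy gives $S$ its marginals first, then $i$ receives $f(S\cup\{i\})-f(S)$ and $j$ receives $f(S\cup\{i,j\})-f(S\cup\{i\})$. The deviation raises $j$'s report to $\hat b_j=b_j+\delta$ with $\delta\in W_{ij|S}(\mathbf b)$, evaluates the honest mechanism on $\hat{\mathbf b}=(\hat b_j,\mathbf b_{-j})$, returns the honest outcome to every agent except $i$, and charges $i$ the payment computed on $\hat{\mathbf b}$.

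Three checks follow. (a) The window condition keeps the priority order of all agents unchanged, so every allocation is identical and nobody can detect an allocation change. (b) $\hat b_j\in\mathrm{supp}(F_j)$, so the profile $\hat{\mathbf b}_{-i}$ rationalises $i$'s observed outcome under the Akbarpour--Li support criterion of \cref{def:deviation}. Every other agent $l$ receives the honest $(x_l,p_l)$ on $\mathbf b$, which is trivially rationalised. The deviation therefore raises one existing bid and is population-preserving. (c) The deviation gains revenue. Agent $i$'s allocation curve $t\mapsto x_i(t,\mathbf b_{-i})$ steps from $f(S\cup\{i,j\})-f(S\cup\{j\})$ up to $f(S\cup\{i\})-f(S)$ at the crossing $z_{ij}=\bar{\varphi}_i^{-1}(\bar{\varphi}_j(b_j))$, a jump of exactly $\gamma_{ij|S}>0$ by (A). Raising $b_j$ moves this crossing up to $\bar{\varphi}_i^{-1}(\bar{\varphi}_j(\hat b_j))$, which stays below $b_i$ inside the window. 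The integral in $i$'s payment therefore shrinks by $\gamma_{ij|S}\cdot\bigl(\bar{\varphi}_i^{-1}(\bar{\varphi}_j(\hat b_j))-z_{ij}\bigr)>0$, and $i$'s payment rises strictly. Condition (B) ensures the crossing lies above $i$'s reserve threshold, so this jump genuinely contributes to the integral and is not cut off by the reserve. Strict monotonicity of $\bar{\varphi}_j$ and $\bar{\varphi}_i^{-1}$ makes the increment positive. I would package this computation as \cref{lem:perturbation}.

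The contradiction then follows. Revenue strictly increases on a realisation in the support, contradicting realisation-wise credibility (iii), so no mechanism in the class has all three properties.

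The main obstacle is the bookkeeping in step (c). One must verify that, across the whole interval of $i$'s hypothetical bids $t\in[0,b_i]$, moving $b_j$ changes $i$'s allocation curve only at the $j$-crossing. Other agents $k\notin S\cup\{i,j\}$ could in principle also interleave with $t$. I would handle this by noting that $\bar{\varphi}_j(\hat b_j)$ remains above every such $\bar{\varphi}_k(b_k)$ (the second window entry). As $t$ sweeps upward, the induced orders before and after the perturbation then differ only in the relative position of $i$ and $j$. Submodularity controls the remaining marginals, which coincide pointwise in $t$ outside the shifted interval. A secondary subtlety is ensuring the lifted profile is in the support of $F_{-i}$ rather than merely of $F_j$, but independence makes this immediate.
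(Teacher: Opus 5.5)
Your proposal is correct and follows the paper's proof essentially step for step. Myerson pins the candidate to the ironed-virtual-value Edmonds greedy with Archer--Tardos payments, and on the competitive event the in-window raise of $b_j$ from \cref{lem:perturbation} inflates agent $i$'s payment by $\gamma_{ij|S}$ times the upward shift of the contested crossing, undetectably because $\hat b_j\in\mathrm{supp}(F_j)$ (with (B) keeping the swept crossings above the reserve). One small correction: that $j$ stays ahead of every outside agent under the upward perturbation follows from alignment alone, since raising $b_j$ only raises its key, so it is not the second window entry that secures this.
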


\begin{remark}[Scope and what we do not claim]
\label{rem:scope-consolidated}
We state the scope of the trilemma and the CoNC bound once, here, and do not restate it at each downstream result. Two conditions delimit the results. \emph{(1) Single-parameter agents.} The trilemma and the CoNC lower bound are proved for single-parameter quasi-linear agents; integrator encapsulation produces this structure for the slice marketplace (P3, \cref{prop:encapsulation}). The multi-parameter and Bayesian-IC extensions are open, with identified structural obstructions (\cref{rem:multi-d}). \emph{(2) Topology-class scaling is out of scope.} The trilemma is a single-capacity-sharing-pair construction and holds on every non-modular polymatroid carrying a competitive prior profile (\cref{def:competitive-prior}); the asymptotic topology-class analysis of the credibility gap, including its tightness conventions and the envelope scope of the perturbation family, is left to future work. The constructive resolutions (\cref{thm:commitment,prop:domain-separation,prop:competition}) are likewise single-parameter and realisation-wise.
\end{remark}

\begin{remark}[Multi-parameter extension: open problem with structural obstructions]
\label{rem:multi-d}
The single-parameter scope is registered in \cref{rem:scope-consolidated}; the multi-parameter extension is open, and two obstructions sit upstream of the revenue-optimal leg. Efficient Bayes--Nash implementation under interdependent values requires a generically failing integrability condition~\cite{jehiel2001efficient}, and multi-parameter DSIC requires weak monotonicity~\cite{bikhchandani2006weakmono}, strictly stronger than the monotonicity of \cref{eq:archer-tardos}. Under Bayesian incentive compatibility the envelope of $\mathcal{D}_{\mathrm{undet}}$ widens~\cite{cai2017duality}, so the perturbation family of \cref{lem:perturbation} no longer characterises it. Whether the single-pair construction of \cref{thm:trilemma} extends coordinate-wise on product polymatroids is open; we claim neither that nor a BIC analogue.
\end{remark}

The non-modularity condition ($\exists\, i,j$ with $\gamma_{ij} = f(\{i\})+f(\{j\})-f(\{i,j\}) > 0$, the \emph{non-modularity gap}, i.e.\ the $S=\emptyset$ case of the conditional gap $\gamma_{ij|S}$ of \cref{def:competitive-prior}) ensures that agents share capacity; if $f$ is additive, $f(S)=\sum_{i\in S}w_i$ for weights $w_i\ge 0$, then agents are independent and credibility is trivially satisfied. Vanishing pairwise gaps alone do not deliver additivity, since a conditional gap $\gamma_{ij|S}$ can be positive at a non-empty prefix $S$ while every $\gamma_{ij}$ vanishes; the sufficiency runs from additivity to trivial credibility, and we claim no converse. In the DAG-induced polymatroid of~\cite{loven2026realtime}, non-modularity holds whenever two agents' service-dependency DAGs share a capacity-constrained internal node. This is the standard condition in the computing continuum, where edge servers, network links, and cloud endpoints are shared across organisational boundaries~\cite{deng2025agenticservicescomputing}. In the PAA scenario, all PAAs competing for inference slots on a building's edge server satisfy this condition.

We now prove the trilemma in three steps; the central technical step is a Payment Perturbation Lemma constructing a profitable, undetectable operator deviation. An instance-level statement over the edge-pricing primitives of~\cite{amin2026market} on the SP-with-homogeneous-disutility class is established as Theorem~\ref{thm:amin-trilemma-instance} in Appendix~\ref{app:appendix-z}.

\begin{definition}[Static sealed-bid mechanism]
\label{def:static-sealed-bid}
A mechanism is \emph{static} if the operator executes a single round (no within-round observation-and-response dynamics between agents and operator) and \emph{sealed-bid} if each agent $i$'s information on the execution path comprises only its own bid $b_i$, its allocation $x_i$, and its payment $p_i$ (no public broadcast, no third-party transcript, no cross-agent observation). The undetectability condition of \cref{def:deviation} is defined with respect to this information model.
\end{definition}

We adopt the support notion of \cref{def:deviation} rather than statistical indistinguishability because a one-shot agent must reject on impossibility, not on improbability.

\begin{remark}[Akbarpour--Li safe-deviation property]
\label{rem:akbarpour-li-alignment}
The undetectability requirement in \cref{def:deviation} is the \emph{safe-deviation property} of Akbarpour and Li~\cite[\S3, Definition~3]{akbarpour2020credible}, and the Payment Perturbation Lemma (\cref{lem:perturbation}) constructs such a safe deviation by producing a population-preserving rationalising profile that supports the inflated payment uniformly over agent~$i$'s bid neighbourhood rather than only at the realised bid. A companion paper by two of the present authors develops a parallel extension of the Akbarpour--Li impossibility to polymatroidal feasible regions in an elicitation-theoretic frame, in which a strategic operator is scored by a strictly proper compliance rule and undetectability is formalised as Bayesian indistinguishability of the agents' signal distributions, a strictly stronger requirement than the support-style safe deviation adopted here~\cite{loven2026endogeneity}. The present paper instead works in the extensive-form frame under competitive priors and proves the three-legged trilemma with the conditional non-modularity gap; the two formulations share the same underlying economic force, but neither implies the other.
\end{remark}

\paragraph*{DSIC payment characterisation.} In any static sealed-bid DSIC mechanism for single-parameter agents, the allocation rule $x(\cdot)$ must be monotone non-decreasing in each agent's bid, and the payment rule is pinned by the allocation rule~\cite{archer2001truthful}:
\begin{equation}
\label{eq:archer-tardos}
p_i(\mathbf{b}) = b_i\, x_i(b_i, b_{-i}) - \int_0^{b_i} x_i(z, b_{-i})\, dz,
\end{equation}
where $x_i(z, b_{-i})$ denotes agent~$i$'s allocation when it bids $z$ while all other bids $b_{-i}$ remain fixed. The welfare-maximising and (ironed-)virtual-welfare-maximising allocations on the polymatroid are computed by the Edmonds greedy algorithm~\cite{edmonds1970submodular,fujishige2005submodular}: sort agents by a priority order (decreasing bid for welfare-maximising VCG, decreasing ironed virtual value for revenue-optimal mechanisms), and allocate each the maximum feasible amount given the polymatroid constraint and prior allocations.

\begin{lemma}[Payment perturbation, branch-split]
\label{lem:perturbation}
Let $f$ be a non-modular polymatroid, $(i,j)$ a pair and $S\subseteq E\setminus\{i,j\}$ a prefix set with $\gamma_{ij|S}>0$ (\cref{def:competitive-prior}; $S=\emptyset$ recovers $\gamma_{ij}$). Consider a DSIC mechanism whose allocation is the Edmonds greedy in priority order $\pi$ with Archer--Tardos payments~\eqref{eq:archer-tardos}, on one of the two branches \textup{(W)}, \textup{(V)} of \cref{def:contested-window}, and fix a profile $\mathbf{b}$ in the prior support at which $(i,j,S)$ is aligned in the sense of that definition, with $\mathrm{supp}(F_j)$ an interval; on branch \textup{(V)} require additionally that $\bar{\varphi}_j(b_j)>0$ and $\bar{\varphi}_j(b_j)\in\bar{\varphi}_i\bigl([0,b_i]\bigr)$, so that the contested crossing $z_{ij}\triangleq\bar{\varphi}_i^{-1}(\bar{\varphi}_j(b_j))\in(0,b_i)$ is well defined. Then for every $\delta$ in the contested window $W_{ij|S}(\mathbf{b})=(0,\bar\delta_{ij|S})$ the perturbation $\hat{b}_j=b_j+\delta$ (all other bids fixed) leaves agent $i$\'s allocation at $b_i$ unchanged and increases agent $i$\'s Archer--Tardos payment by exactly
\[
\Delta p_i \;=\; \gamma_{ij|S}\cdot
\bigl[\bar{\varphi}_i^{-1}(\bar{\varphi}_j(b_j+\delta))-\bar{\varphi}_i^{-1}(\bar{\varphi}_j(b_j))\bigr]
\;>\;0,
\]
which degenerates to $\delta\cdot\gamma_{ij|S}$ on branch \textup{(W)} and on branch \textup{(V)} with a common prior ($\bar{\varphi}_i\equiv\bar{\varphi}_j$). The window is determined by the virtual-value gap at the contested pair; on the welfare branch the virtual-value gap and the bid gap coincide. The attainable increment range is the image of $(0,\bar\delta_{ij|S})$ under $\delta\mapsto\Delta p_i$; since $\hat{b}_j\in\mathrm{supp}(F_j)$, the deviation is undetectable in the sense of \cref{def:deviation}.
\end{lemma}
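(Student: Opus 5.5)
The plan is to compute agent $i$'s Archer--Tardos payment~\eqref{eq:archer-tardos} explicitly as a threshold integral along the Edmonds greedy, and then compare it before and after the perturbation $\hat b_j=b_j+\delta$. Fix $b_{-i}$ and view $x_i(z,b_{-i})$ as a step function of $z$. Under the greedy with priority $\pi$, agent $i$ receives $f(P(z)\cup\{i\})-f(P(z))$, where $P(z)$ is the set of active agents processed before $i$ when $i$ bids $z$. On the aligned profile, $P(b_i)=S$. Lowering $z$ from $b_i$ moves $i$ backward in the order: first past $j$ at the crossing $z_{ij}=\bar\varphi_i^{-1}(\bar\varphi_j(b_j))$, and then past the agents outside $S\cup\{i,j\}$ and those in $S$ at their own crossings. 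Hence $x_i(z,b_{-i})=f(S\cup\{i\})-f(S)$ for $z\in(z_{ij},b_i]$ and $x_i(z,b_{-i})=f(S\cup\{i,j\})-f(S\cup\{j\})$ just below $z_{ij}$. The jump at $z_{ij}$ is therefore exactly $\gamma_{ij|S}$. I will state this decomposition first for branch (V) and then specialise to (W) by reading $\bar\varphi$ as the identity.

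Next I check what the perturbation preserves and what it moves. The window constraint $\delta<\bar\varphi_j^{-1}(\bar\varphi_i(b_i))-b_j$ gives $\bar\varphi_j(b_j+\delta)<\bar\varphi_i(b_i)$, so $i$ still precedes $j$ at bid $b_i$. Its allocation $f(S\cup\{i\})-f(S)$ is thus unchanged, and the term $b_ix_i$ is unchanged. The second window constraint $\delta<b_j-\bar\varphi_j^{-1}(\max_k\bar\varphi_k(b_k))$ is, by my reading, meant to keep $j$ strictly ahead of every other agent $k\notin S\cup\{i,j\}$. On that reading, raising $j$ moves only $j$'s crossing with $i$ and changes no relative order among the agents other than $i$. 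Strict monotonicity and continuity of $\bar\varphi_j$ keep $\bar\varphi_j(b_j+\delta)>0$, so $j$ stays active. This lets me argue that $x_i(z,\cdot)$ is unchanged for every $z$ outside the interval $(z_{ij},\hat z_{ij}]$, where $\hat z_{ij}=\bar\varphi_i^{-1}(\bar\varphi_j(b_j+\delta))$. The crossing stays below $b_i$ by the first window entry, and it is well defined because $\bar\varphi_j(b_j)\in\bar\varphi_i([0,b_i])$ together with monotonicity.

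On $(z_{ij},\hat z_{ij}]$, agent $i$ now comes after $j$, so its allocation drops from $f(S\cup\{i\})-f(S)$ to $f(S\cup\{i,j\})-f(S\cup\{j\})$, a loss of $\gamma_{ij|S}$. The integral in~\eqref{eq:archer-tardos} therefore falls by $\gamma_{ij|S}(\hat z_{ij}-z_{ij})$, and since $b_ix_i$ is fixed, $p_i$ rises by exactly that amount. This is the claimed $\Delta p_i$. It is strictly positive because $\gamma_{ij|S}>0$ and $\bar\varphi_i^{-1}\circ\bar\varphi_j$ is strictly increasing. With a common prior the composition is the identity, and on (W) $\bar\varphi$ is read as the identity, so in both cases $\Delta p_i=\delta\gamma_{ij|S}$. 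For undetectability, the support intersection in the window gives $\hat b_j<\sup\operatorname{supp}(F_j)$, and $\hat b_j>b_j$ with $\operatorname{supp}(F_j)$ an interval places $\hat b_j$ in the support. The profile $\hat{\mathbf b}$ has $\hat b_i=b_i$ and rationalises agent $i$'s observed outcome $(x_i,p_i+\Delta p_i)$ under the support criterion of \cref{def:deviation}; no identity is added, so the deviation is population-preserving.

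The main obstacle is the second step. I have to verify that the relative order of agents other than $i$ in the greedy is invariant for every $z\in[0,b_i]$, not only at $z=b_i$. If some agent $k$ had priority between $\bar\varphi_j(b_j)$ and $\bar\varphi_j(b_j+\delta)$, then $j$ would pass $k$ and the allocation of $i$ at low bids $z$ would change through the sets $P(z)$. The second window entry is designed to rule this out, and I would check it case by case over the crossings $z$ below $z_{ij}$. The remaining care point is the step structure itself: the paper's notation writes these allocations with $f$ on sets of agents, treating each agent as a ground-set element. The marginal formula for $x_i$ at each $z$ comes from Edmonds' greedy (\cref{prop:efficient-mechanism}), and ties have measure zero because the priors are atomless.
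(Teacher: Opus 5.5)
You follow the paper's route: express $x_i(z,\cdot)$ through the set of agents processed before $i$, use the first window entry to freeze the allocation at $b_i$, confine the change in the allocation curve to $(z_{ij},\hat z_{ij})$ where it drops by $\gamma_{ij|S}$, integrate, and rationalise agent $i$'s outcome by the in-support profile $\hat{\mathbf b}$. What is left open is the step you call the main obstacle, and the hypothesis you assign to it cannot close it. The second window entry, $b_j-\bar\varphi_j^{-1}(\max_k\bar\varphi_k(b_k))$, measures how far $b_j$ could move \emph{down} before $j$ ties an outside agent. It says nothing about an upward move, and the paper's proof never uses it.

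The step closes by the observation you already made in your first paragraph. Because $x_i(z,\cdot)=f(P(z)\cup\{i\})-f(P(z))$ depends only on the set $P(z)$ of active agents whose key exceeds $\bar\varphi_i(z)$, and not on the order inside it, you only need to track membership, pointwise in $z$.
\begin{itemize}
\item The perturbation changes no key except $j$'s, so every $k\neq j$ has the same membership under both profiles at every $z$.
\item Agent $j$'s membership flips exactly on $\bar\varphi_j(b_j)<\bar\varphi_i(z)<\bar\varphi_j(b_j+\delta)$. Below $z_{ij}$, $j$ is in $P(z)$ under both profiles, so no case analysis over lower crossings is needed.
\item On the flip band, $P(z)$ is $S$ before the perturbation and $S\cup\{j\}$ after, because no other key lies in $[\bar\varphi_j(b_j),\bar\varphi_j(b_j+\delta)]$. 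Agents outside $S\cup\{i,j\}$ lie below $\bar\varphi_j(b_j)$ by alignment. Agents of $S$ lie above $\bar\varphi_i(b_i)>\bar\varphi_j(b_j+\delta)$ by alignment together with the first window entry.
\end{itemize}

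Three smaller corrections. The relative order of agents other than $i$ does not depend on $z$ at all. Lowering $z$ never makes $i$ pass members of $S$, since they already precede it at $b_i$. Well-definedness of $\hat z_{ij}$ needs continuity of $\bar\varphi_i$ (intermediate values), not monotonicity alone.
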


\begin{proof}
We construct the perturbation on a generic bid profile with positive other-agents' bids; a two-agent construction requiring $b_k=0$ for all $k\notin\{i,j\}$ would be measure-zero on continuous regular priors for $|E|\ge 3$. On branch (W) the profiles satisfying the priority hypothesis with $S=\emptyset$ form the open set $\{b_i>b_j>\max_{k\notin\{i,j\}}b_k\}\cap\{b_j>0\}$ of positive prior probability; on branch (V) the positive-probability supply of admissible profiles is the competitive event of \cref{def:competitive-prior}.

\emph{Setup.} All maps $\bar{\varphi}_l$ are read as the identity on branch (W). Write $\kappa_l=\bar{\varphi}_l(b_l)$ for the realised priority keys, so the hypothesis gives $\min_{s\in S}\kappa_s>\bar{\varphi}_i(b_i)>\kappa_j>\kappa_k$ for every other active agent $k$. Set $z_{ij}=\bar{\varphi}_i^{-1}(\kappa_j)$ and, for $\delta\in(0,\bar\delta_{ij|S})$, $\hat{z}_{ij}=\bar{\varphi}_i^{-1}(\bar{\varphi}_j(b_j+\delta))$: the first window entry gives $\bar{\varphi}_j(b_j+\delta)<\bar{\varphi}_i(b_i)$ through strictly increasing $\bar{\varphi}_j$, so $\bar{\varphi}_j(b_j+\delta)\in(\kappa_j,\bar{\varphi}_i(b_i))\subseteq\bar{\varphi}_i([0,b_i])$ by continuity of $\bar{\varphi}_i$, and $z_{ij}<\hat{z}_{ij}<b_i$. Whenever agent $i$ is active at own-bid $z$, the greedy allocation to agent $i$ depends on $(z,b_{-i})$ only through the set $B_i(z,b_{-i})=\{k\ne i:\ k\ \text{active},\ \bar{\varphi}_k(b_k)>\bar{\varphi}_i(z)\}$ of agents processed before $i$, and equals $f(B_i\cup\{i\})-f(B_i)$; marginals telescope, so the internal order of $B_i$ is immaterial. At inactive own-bids the allocation is zero under either profile.

\emph{(i) Allocation at $b_i$ is unchanged.} At $z=b_i$ the hypothesis gives $B_i(b_i,b_{-i})=S$. The perturbation moves only agent $j$'s key, upward to $\bar{\varphi}_j(b_j+\delta)<\bar{\varphi}_i(b_i)$, so $B_i(b_i,\hat{b}_{-i})=S$ as well. Hence $x_i(b_i,\hat{b}_{-i})=x_i(b_i,b_{-i})=f(S\cup\{i\})-f(S)$.

\emph{(ii) The counterfactual allocation difference is supported on $(z_{ij},\hat{z}_{ij})$ and equals $\gamma_{ij|S}$ there.} Partition the counterfactual bids $z\in(0,b_i)$ by the value of $\bar{\varphi}_i(z)$ against the two keys of agent $j$. If $\bar{\varphi}_i(z)>\bar{\varphi}_j(b_j+\delta)$, then $j\notin B_i$ under either profile and no other membership changes, so $x_i(z,b_{-i})=x_i(z,\hat{b}_{-i})$. If $\bar{\varphi}_i(z)<\bar{\varphi}_j(b_j)$, then $j\in B_i$ under either profile (the perturbation only raises $j$'s key), so the allocations again coincide; this covers in particular all $z$ below agent $i$'s reserve, where both allocations vanish. If $\bar{\varphi}_j(b_j)<\bar{\varphi}_i(z)<\bar{\varphi}_j(b_j+\delta)$, which by strict monotonicity and continuity of $\bar{\varphi}_i$ happens exactly for $z\in(z_{ij},\hat{z}_{ij})$, then $\bar{\varphi}_i(z)$ exceeds every outside active key (the hypothesis places $j$ before all remaining agents) and stays below $\bar{\varphi}_i(b_i)<\min_{s\in S}\kappa_s$, so $B_i(z,b_{-i})=S$ while $B_i(z,\hat{b}_{-i})=S\cup\{j\}$, giving
\[
x_i(z,b_{-i})-x_i(z,\hat{b}_{-i})
=\bigl[f(S\cup\{i\})-f(S)\bigr]-\bigl[f(S\cup\{i,j\})-f(S\cup\{j\})\bigr]
=\gamma_{ij|S}.
\]
The boundary points $z\in\{z_{ij},\hat{z}_{ij}\}$ form a measure-zero set covered by the tie-breaking convention of \cref{thm:trilemma}, Step~1.

\emph{(iii) Payment increase.} By the Archer--Tardos formula~\eqref{eq:archer-tardos} and part~(i),
\[
\hat{p}_i - p_i \;=\; \int_0^{b_i}\bigl[x_i(z, b_{-i}) - x_i(z, \hat{b}_{-i})\bigr]\,dz
\;=\; \gamma_{ij|S}\cdot(\hat{z}_{ij}-z_{ij})
\;=\; \gamma_{ij|S}\cdot\bigl[\bar{\varphi}_i^{-1}(\bar{\varphi}_j(b_j+\delta))-\bar{\varphi}_i^{-1}(\bar{\varphi}_j(b_j))\bigr],
\]
strictly positive because $\bar{\varphi}_j$ and $\bar{\varphi}_i^{-1}$ are strictly increasing on the window. On branch (W), and on branch (V) under a common prior, $\bar{\varphi}_i^{-1}\circ\bar{\varphi}_j$ is the identity and the increment is $\delta\cdot\gamma_{ij|S}$. Monotone continuity of $\delta\mapsto\Delta p_i$ makes the attainable increment range the image of $(0,\bar\delta_{ij|S})$ as claimed.

\emph{Undetectability.} The support term in $\bar\delta_{ij|S}$ and the interval-support hypothesis give $\hat{b}_j=b_j+\delta\in\mathrm{supp}(F_j)$, so $\hat{b}_{-i}\in\mathrm{supp}(F_{-i})$ rationalises agent $i$'s inflated outcome $(x_i,\,p_i+\Delta p_i)$ with $\hat{b}_i=b_i$; every other agent keeps its honest outcome, rationalised by the truthful profile itself. The deviation is therefore undetectable in the sense of \cref{def:deviation}.
\end{proof}

\paragraph*{Worked example.} Consider a minimal non-modular polymatroid with two agents sharing an edge server: $f(\{1\}) = f(\{2\}) = 2$, $f(\{1,2\}) = 3$, so $\gamma_{12} = 2 + 2 - 3 = 1$. Agent~1 bids $b_1 = 10$, agent~2 bids $b_2 = 5$. Greedy processes agent~1 first: $x_1 = f(\{1\}) = 2$; then agent~2: $x_2 = f(\{1,2\}) - x_1 = 1$. The Archer--Tardos payment for agent~1 is $p_1 = 10 \cdot 2 - \int_0^{10} x_1(z)\,dz$. For $z > 5$, agent~1 is processed first and receives $x_1(z) = 2$; for $z \le 5$, agent~2 is processed first and agent~1 receives $f(\{1,2\}) - f(\{2\}) = 1$. So $p_1 = 20 - (5 \cdot 1 + 5 \cdot 2) = 5$. Now the operator perturbs: raise agent~2's bid to $\hat{b}_2 = 5 + \delta$ with $\delta = 1$. Agent~1's allocation at $b_1 = 10$ is unchanged ($x_1 = 2$), but for counterfactual bids $z \in (5, 6)$, agent~2 now has priority and agent~1's allocation drops from~$2$ to~$1$. The integral decreases by $\delta \cdot \gamma_{12} = 1$, so the payment rises to $\hat{p}_1 = 6$. Agent~1 observes $(x_1, p_1) = (2, 6)$, which is consistent with an honest execution under a profile where a more aggressive competitor bids~$6$. The deviation is undetectable and yields $+1$ extra revenue for the operator.

\begin{proof}[Proof of \cref{thm:trilemma}]
The proof adapts Akbarpour and Li~\cite{akbarpour2020credible} from single-item auctions to polymatroidal feasible regions in three steps.

\textbf{Step 1 (Unique candidate mechanism, up to tie-breaking and flat-ironed-region indeterminacy).} The theorem rules out the simultaneous achievement of (i)--(iii). The candidate mechanism is unique up to the two standard Myerson caveats (measure-zero tie-breaking and flat ironed-virtual-value regions, detailed below); the surviving mechanisms share the same Archer--Tardos payment structure and so stand or fall together against the perturbation deviation. By the Myerson characterisation~\cite{myerson1981optimal} (extended to single-parameter settings by Archer and Tardos~\cite{archer2001truthful}), any revenue-optimal DSIC mechanism on independently drawn regular values must have an allocation rule that maximises the (ironed) virtual welfare $\sum_i \bar{\varphi}_i(b_i)\, x_i$, with a payment rule satisfying~\eqref{eq:archer-tardos} (unique up to an additive constant normalised to zero by individual rationality). On the polymatroid, this maximisation is solved by the Edmonds greedy with agents sorted by decreasing $\bar{\varphi}_i(b_i)$. Two distinct sources of non-uniqueness apply: (a)~\emph{tie-breaking on measure-zero virtual-value-tie sets} (the standard Myerson caveat: at bid profiles where $\bar\varphi_i(b_i)=\bar\varphi_j(b_j)$ exactly, the priority order between $i$ and $j$ in the greedy is unconstrained, but the event has measure zero and the integrated payment is unaffected); and (b)~\emph{flat ironed-virtual-value regions} (the ironing procedure of~\cite{myerson1981optimal} produces a piecewise-monotone $\bar\varphi$ with possibly flat regions where two distinct bid values map to the same ironed virtual value; on such regions the polymatroid LP can have a continuum of revenue-optimal extreme points, all sharing the same expected revenue). The conjunction of~(i) and~(ii) selects an \emph{equivalence class} of mechanisms differing on (a)+(b), all with the same Archer--Tardos payment structure and therefore equally susceptible to the perturbation deviation; if this equivalence class fails~(iii), then no mechanism in the class achieves all three properties. The two-agent perturbation construction of \cref{lem:perturbation} uses bid values $(b_i,b_j)$ in the strictly-monotone interior of $\bar\varphi$, avoiding the flat-region indeterminacy.

\emph{Tie-breaking convention at the boundary of the contested window.} At the measure-zero events $z\in\{z_{ij},\hat z_{ij}\}$, where the priority order between $i$ and $j$ flips, we adopt the lexicographic convention $j\prec i$ when the branch's priority keys $\bar{\varphi}_l(b_l)$ are equal and otherwise priority by that key; on branch \textup{(W)} the key is the bid itself and the two points reduce to $z=b_j$ and $z=b_j+\delta$. The convention does not affect the integral of step~(iii) of the proof of \cref{lem:perturbation}, because the boundary points form a measure-zero subset of the integration domain.

\textbf{Step 2 (Payment-inflation deviation).} We construct a deviation that is profitable and undetectable.

\emph{Construction.} The operator reports the true allocation $x^*(\mathbf{b})$ but charges a chosen agent~$i$ an inflated payment $p'_i = p_i + \varepsilon_i$ for some $\varepsilon_i > 0$, keeping all other agents' outcomes unchanged.

\emph{Revenue gain.} The operator's revenue increases by $\Delta R = \varepsilon_i > 0$.

\emph{Undetectability.} By \cref{lem:perturbation}, there exists a bid profile $\mathbf{b}$ and a perturbation $\hat{\mathbf{b}} = (b_{-j}, \hat{b}_j)$ with $\hat{b}_j = b_j + \delta$ such that $x_i(b_i, \hat{b}_{-i}) = x_i(b_i, b_{-i})$ and $\hat{p}_i = p_i + \varepsilon_i$. Agent~$i$ observes only $(b_i, x_i, p'_i)$ in a sealed-bid mechanism. Since $(x_i, p'_i) = (x_i(b_i, \hat{b}_{-i}), \hat{p}_i)$, the outcome is identical to an honest execution under the legitimate profile $\hat{\mathbf{b}}$. Agent~$i$ cannot distinguish the deviation from a market in which agent~$j$ bid more aggressively; the rationalising profile is population-preserving in the sense of \cref{def:deviation}, carrying no identity beyond the realised agent set. Formally, the undetectability condition holds: $\exists\, \hat{\mathbf{b}}$ with $\hat{b}_i = b_i$ such that $x'_i = x^*_i(\hat{\mathbf{b}})$ and $p'_i = p^*_i(\hat{\mathbf{b}})$. The rationalising profile $\hat{\mathbf{b}}$ explains agent~$i$'s outcome at every $b_i$ outside the contested-crossing interval $(z_{ij},\hat z_{ij}]$ of step~(ii) of the proof of \cref{lem:perturbation}, that is, at every $b_i$ at which the allocation $x_i(\cdot,\hat{\mathbf{b}}_{-i})$ coincides with $x_i(\cdot,\mathbf{b}_{-i})$, so the deviation is safe in the uniform sense, not only at the realised bid.

\textbf{Step 3 (Extension to revenue-optimal mechanisms).} The Perturbation Lemma applies to any greedy-based allocation on the polymatroid, because it uses only three ingredients: (a)~the greedy processes agents in a priority order, (b)~payments follow the Archer--Tardos identity~\eqref{eq:archer-tardos}, and (c)~the non-modularity condition provides a capacity-sharing pair. All three hold for the revenue-optimal mechanism: (a)~the virtual-welfare greedy is still an Edmonds greedy on the same polymatroid, with agents sorted by decreasing $\bar{\varphi}_i(b_i)$; (b)~Archer--Tardos applies to any monotone DSIC allocation rule; (c)~non-modularity is a property of $f$, not of the allocation rule.

For the virtual-value greedy the construction must be supported on a positive-probability event on which a capacity-sharing contest is realised inside the mechanism's active bid range; the competitive hypothesis supplies exactly this event. On the competitive event of \cref{def:competitive-prior}, the realised profile $\mathbf{b}$ admits a pair $(i,j)$ and a prefix set $S\subseteq E\setminus\{i,j\}$ satisfying (A)--(C). Clauses (A) and (B), together with the alignment that (C) carries, place $\mathbf{b}$ in the hypotheses of \cref{lem:perturbation}, branch (V), with prefix $S$ and conditional gap $\gamma_{ij|S}>0$, and (C) makes the contested window $W_{ij|S}(\mathbf{b})=(0,\bar\delta_{ij|S})$ non-empty. For every $\delta\in(0,\bar\delta_{ij|S})$ the lemma yields the payment inflation
\[
\Delta p_i \;=\; \gamma_{ij|S}\cdot\bigl[\bar{\varphi}_i^{-1}(\bar{\varphi}_j(b_j+\delta))-\bar{\varphi}_i^{-1}(\bar{\varphi}_j(b_j))\bigr] \;>\; 0
\]
with agent $i$'s allocation at $b_i$ unchanged; since $\hat{b}_j=b_j+\delta\in\mathrm{supp}(F_j)$, the rationalising profile lies in the prior support and the deviation is undetectable exactly as in Step~2. When agents share a common regular prior, $\bar{\varphi}_i\equiv\bar{\varphi}_j$ is strictly increasing on its support, the pullback map $\bar{\varphi}_i^{-1}\circ\bar{\varphi}_j$ is the identity, and the inflation reduces to $\delta\cdot\gamma_{ij|S}$; \cref{rem:competitive-sufficient} shows that i.i.d.\ priors, and more generally common-support-bottom priors, make the competitive event automatic whenever some capacity-sharing pair exists. Two robustness points close the step. First, Myerson reserves and, for supports bounded away from zero, the rejection of out-of-support counterfactual bids that revenue optimality forces modify the allocation curve $x_i(\cdot,b_{-i})$ only at bids below $\max\{\underline{v}_i,\bar{\varphi}_i^{-1}(0)\}$; by (B) the window swept by the lemma lies strictly above this threshold, so the increment computation is unaffected. Second, the submodularity inequality $f(\{i\})+f(E\setminus\{i\})-f(E)\ge\gamma_{ij}$ shows that agent $i$'s counterfactual allocation curve always carries a total drop of at least $\gamma_{ij}$ across its priority crossings; the competitive hypothesis places at least one such drop, with positive prior probability, at an in-support crossing that the operator can sweep undetectably. Bid pairs on ironed flats are excluded by (B)'s strictly-increasing-interior requirement, which restates the flat-region caveat of Step~1.

Since the deviation is profitable and undetectable, and the unique candidate satisfying~(i) and~(ii) fails~(iii), no static sealed-bid mechanism over a non-modular polymatroid with a competitive prior profile achieves revenue-optimality, DSIC, and credibility simultaneously.
\end{proof}

\begin{remark}[The competitiveness hypothesis marks a boundary]
\label{rem:competitive-boundary}
Competitiveness cannot be dropped from \cref{thm:trilemma}. Take $E=\{1,2\}$ with $f(\{1\})=f(\{2\})=2$ and $f(\{1,2\})=3$ (so $\gamma_{12}=1>0$), $F_1=U[10,11]$, and $F_2=U[1,2]$: the priors are independent, non-degenerate, and regular, so every remaining hypothesis of the theorem holds. The profile fails competitiveness: $\bar{\varphi}_1(b_1)\ge 9>2\ge\bar{\varphi}_2(b_2)$ at every profile, so the contested crossing $\bar{\varphi}_1^{-1}(\bar{\varphi}_2(b_2))\in[5.5,6.5]$ lies below $\underline{v}_1=10$, so clause (B) of \cref{def:competitive-prior} has no witness even though (A) holds and the contested window of \cref{def:contested-window} is non-empty at every profile with $b_2<2$. On this instance the revenue-optimal member of the Step-1 candidate class rejects counterfactual bids below the support bottom (allocation mass below $\underline{v}_i$ leaves rent $u_i(\underline{v}_i)>0$ and forfeits revenue), and it therefore allocates $x=(2,1)$ and charges $p=(20,1)$ at every profile in the support, attaining $\mathbb{E}[\mathrm{rev}]=21$ exactly (for regular $F_i$ the expected virtual value equals the support bottom, giving $2\cdot 10+1\cdot 1=21$). Each agent's honest-outcome support is a singleton, so the honest outcome is the only rationalisable report and the mechanism is realisation-wise credible. The impossibility thus has a genuine boundary in the competitiveness hypothesis. We make no converse claim; whether every non-competitive instance admits a realisation-wise credible revenue-optimal mechanism is left open.
\end{remark}

\begin{proposition}[Pairwise compatibility of the trilemma's three legs]
\label{rem:trilemma-pairwise-witnesses}
The three properties of \cref{thm:trilemma} are pairwise compatible; explicit witness mechanisms exist for each pair, none of which achieves the third leg.
\begin{itemize}
\item \emph{(i)+(ii) revenue-optimal and DSIC, but not credible.} The Myerson auction~\cite{myerson1981optimal} (and its polymatroidal generalisation via the ironed-virtual-value Edmonds greedy with Archer--Tardos payments~\cite{archer2001truthful,goel2015polyhedral}) is the unique mechanism achieving both~(i) and~(ii) on regular priors; \cref{lem:perturbation} shows directly that it fails~(iii).
\item \emph{(ii)+(iii) DSIC and credible, but not revenue-optimal.} The ascending clinching auction of Ausubel~\cite{ausubel2004ascending}, extended to the polymatroidal setting by Goel et al.~\cite{goel2015polyhedral}, is DSIC and (under broadcast/transcript assumptions of \cref{thm:commitment}(i), \cref{def:broadcast-channel}, \cref{def:public-reconstructibility}) credible, but achieves welfare-maximisation rather than revenue-optimality, so it does not satisfy~(i).
\item \emph{(i)+(iii) revenue-optimal and credible, but not DSIC.} The pay-as-bid (first-price) sealed auction over the polymatroid is credible --- the prescribed allocation already maximises the operator's own revenue functional $\sum_i b_i x_i$ over $\mathcal{X}_{\mathrm{res}}$, so no deviation of either class of \cref{def:deviation} improves on it --- and admits revenue-optimal Bayes--Nash equilibria for symmetric regular priors, but it is not DSIC: equilibrium bidding shades by the inverse-hazard-rate factor. Whether a non-DSIC, revenue-optimal, credible mechanism exists on the polymatroid across all regular priors is open.
\end{itemize}
The three witnesses confirm that the impossibility in \cref{thm:trilemma} is the conjunction, not any binary subset.
\end{proposition}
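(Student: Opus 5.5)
The plan is to treat the three bullets as independent witness claims. For each one, I verify the two properties that hold and then show that the third fails. Two of the failures come from results already in hand, so the new work is mostly the credibility arguments for the ascending and pay-as-bid formats.

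\emph{(i)+(ii).} Step~1 of the proof of \cref{thm:trilemma} already shows that the ironed-virtual-value Edmonds greedy with Archer--Tardos payments~\eqref{eq:archer-tardos} is the unique class meeting~(i) and~(ii), up to ties and ironed flats.
\begin{itemize}
\item For~(ii), I will note that the greedy allocation $x_i(\cdot,b_{-i})$ is monotone in $b_i$. Raising $\bar{\varphi}_i(b_i)$ only shrinks the predecessor set $B_i$, and by submodularity $f(B\cup\{i\})-f(B)$ is non-increasing in $B$.
\item For~(i), I will use Myerson's identity that expected revenue equals expected virtual welfare, together with Edmonds optimality of the greedy on the polymatroid.
\item For the failure of~(iii), the competitive hypothesis of \cref{thm:trilemma} lets me invoke \cref{lem:perturbation} on branch~(V).
\end{itemize}

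\emph{(ii)+(iii).} For the ascending clinching auction, DSIC and welfare-optimality come from \cref{prop:efficient-mechanism}(iii) via Goel et al.
\begin{itemize}
\item \textbf{Credibility.} The target is to show that, under the broadcast-channel assumptions of \cref{thm:commitment}(i), every clinch price and clock step is public. Any payment inflation or phantom-driven clock movement must then be reconstructible from the public record, which makes it detectable. I would cite \cref{thm:commitment}(i) rather than reprove this.
\item \textbf{Failure of~(i).} I need one explicit instance where the clinching auction earns strictly less than Myerson. The plan is a single-unit-style polymatroid with i.i.d.\ $U[0,1]$ priors, where the optimal reserve $1/2$ strictly raises revenue over the reserve-free VCG/clinching payments.
\end{itemize}

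\emph{(i)+(iii).} For the pay-as-bid auction, credibility is the cleanest of the three arguments.
\begin{itemize}
\item An agent's payment equals its own bid times its allocation, so $p_i=b_ix_i$. Any undetectable deviation must charge $p'_i$ consistent with agent $i$'s own bid, which pins $p'_i=b_ix'_i$.
\item Operator revenue is therefore $\sum_i b_ix'_i$ over a feasible $x'\in\mathcal{X}_{\mathrm{res}}$. The prescribed greedy allocation already maximises this by Edmonds, so no deviation of either class in \cref{def:deviation} can profit.
\item The allocation being maximised must be exactly the operator's revenue functional. If a reserve is used, I would restrict to active bids so this remains true.
\item \textbf{Failure of~(ii).} Bid shading already gives this in the single-item embedding.
\item \textbf{Revenue optimality.} For symmetric regular priors I would appeal to revenue equivalence: the symmetric monotone Bayes--Nash equilibrium with the Myerson reserve induces the virtual-welfare greedy allocation, and so earns Myerson revenue.
\end{itemize}

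The main obstacle is this last equivalence on a general polymatroid. With multi-unit marginals $f(B\cup\{i\})-f(B)$, existence of a symmetric increasing equilibrium that reproduces the priority order is not automatic. I would first try to prove existence via the single-parameter reduction, where the allocation depends only on rank order under symmetric monotone strategies. If that fails, I would state the witness only for matroid or symmetric instances, consistent with the open-question caveat in the proposition.
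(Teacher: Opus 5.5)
Your first two witnesses follow the paper's inline justification. For (i)+(ii) you use Step~1 of \cref{thm:trilemma} together with \cref{lem:perturbation} on branch (V). For (ii)+(iii) you use \cref{thm:commitment}(i) together with the welfare-not-revenue observation, and your two-bidder $U[0,1]$ comparison ($1/3$ versus $5/12$) supplies a concrete instance the paper does not give. Your credibility argument for pay-as-bid is the paper's revenue-functional argument. Restricting to active bids is the right fix when a reserve is used, because an allocated sub-reserve bidder detects the deviation: no in-support profile allocates to a bid below the reserve.

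\textbf{The gap: revenue-optimality of the pay-as-bid witness.} The paper only asserts this step. Your route (``the allocation depends only on rank order under symmetric monotone strategies'') fails on a general polymatroid. The greedy marginal $f(B_i\cup\{i\})-f(B_i)$ depends on $i$'s identity and on which agents outrank it, so value-ordered allocation gives different agents different interim curves $X_i$. With a zero-rent reserve type, the envelope identity then pins $\beta_i(v)=v-\int_r^v X_i(s)\,ds/X_i(v)$ separately for each $i$.

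\emph{Counterexample.} Take $f(\{1\})=2$, $f(\{2\})=1$, $f(\{1,2\})=2$ (so $\gamma_{12}=1$), i.i.d.\ $U[0,1]$ priors, and reserve $r=1/2$. This instance is competitive by \cref{rem:competitive-sufficient}(a).
\begin{itemize}
\item The Myerson allocation gives $X_1(v)=1+v$ and $X_2(v)=v$ on $[1/2,1]$.
\item Any BNE implementing it would therefore need $\beta_1(1)=9/16$ and $\beta_2(1)=5/8$, with $\beta_1$ non-decreasing.
\item By continuity of $\beta_2$, agent~2 then outbids agent~1 on a positive-measure set where $v_2<v_1$, contradicting the allocation.
\end{itemize}
So no BNE of pay-as-bid attains Myerson revenue here, even though the priors are symmetric and regular.

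\emph{The matroid fallback does not help.} On the graphic matroid of a triangle with one doubled edge, the value-ordered interim allocations under $U[0,1]$ are $2v^2-v^3$ for a doubled edge and $v+v^2-v^3$ for a simple edge. These are not proportional, so the same envelope argument applies.

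\emph{The restriction that works.} Restrict to rank functions invariant under permutations of the agents, $f(S)=g(|S|)$ with $g$ concave, or to direct sums of such blocks. There every agent in a block shares one non-decreasing $X$. The deviation payoff $(v-w)X(w)+\int_r^w X$ is maximised at $w=v$, so the symmetric envelope bid is an efficient BNE with Myerson revenue. The uniform matroid $U_{1,2}$ already gives a non-modular, competitive witness, which is all that pairwise compatibility needs. The proposition's claim ``for symmetric regular priors'' on an arbitrary polymatroid, however, is false as stated and should be scoped this way.
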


\cref{thm:trilemma} implies that VCG, the canonical DSIC mechanism of~\cite{loven2026realtime}, is \emph{not credible} when the operator has private access to bids. In the PAA scenario, the edge provider running the service marketplace cannot simultaneously maximise revenue, guarantee truthful bidding, and be trusted to run the auction faithfully. Since the mechanism's primary value lies in making truthful reporting a dominant strategy (\cref{sec:bg-gap}), a non-credible mechanism fails to deliver the property that distinguishes it from a simple value-greedy rule.

\paragraph*{Cost of non-credibility.} To make the trilemma quantitatively comparable to the price-of-anarchy bounds familiar from routing games~\cite{koutsoupias2009worstcase,roughgarden2002anarchy}, we introduce the \emph{Cost of Non-Credibility} (CoNC). Let $\mathrm{rev}^*$, $W^* = \sum_i v_i x_i^*$, and $p_i^*$ denote, respectively, the operator revenue, total welfare, and agent~$i$'s payment under the prescribed mechanism on bid profile $\mathbf{b}$, and let the corresponding $\delta$-superscripted quantities denote the same under the operator deviation $\delta$. The expectations are taken over the prior on bid profiles. All variants share the denominator $\mathbb{E}[W^*]$, the expected welfare under the prescribed mechanism. The operator-side and welfare variants are
\begin{align}\label{eq:conc}
\mathrm{CoNC}^{\mathrm{op}}(\delta) &= \frac{\mathbb{E}[\mathrm{rev}^\delta - \mathrm{rev}^*]}{\mathbb{E}[W^*]}, &\qquad \mathrm{CoNC}^{\mathrm{W}}(\delta) &= \frac{\mathbb{E}[W^* - W^\delta]}{\mathbb{E}[W^*]}.
\end{align}

The shared denominator makes the two variants commensurable, and their sum is the agent-side variant $\mathrm{CoNC}^{\mathrm{ag}}(\delta)\triangleq\mathrm{CoNC}^{\mathrm{W}}(\delta)+\mathrm{CoNC}^{\mathrm{op}}(\delta)$, an accounting identity: agents bear the operator's revenue increment, which is their own aggregate payment increment $\mathbb{E}[\sum_i p_i^\delta - \sum_i p_i^*]$, together with the welfare the deviation destroys. Because a deviation can displace an agent as well as inflate payments, $\mathrm{CoNC}^{\mathrm{op}}$ must be kept apart from the operator's \emph{extraction}, the increment in the payments the operator collects from the agents that remain allocated under the deviation. Extraction leaves out the payment forgone by a displaced agent, which the revenue delta in the numerator of $\mathrm{CoNC}^{\mathrm{op}}$ nets out, so the two agree exactly when the deviation displaces nobody and can differ in sign when it does. Both are reported for Exp.~R-5 in \cref{sec:eval-r5}, under those names. For comparability with revenue-based price-of-anarchy readings we also report the revenue-normalised operator ratio $\mathbb{E}[\mathrm{rev}^\delta - \mathrm{rev}^*] / \mathbb{E}[\mathrm{rev}^*]$ alongside $\mathrm{CoNC}^{\mathrm{op}}$ where it is informative; that ratio is read one condition at a time, since its denominator moves with the condition's truthful revenue, so conditions of different revenue scale do not pool. The measure scope differs from the price-of-anarchy ratio it is modelled on~\cite{koutsoupias2009worstcase,roughgarden2002anarchy}: PoA is a worst case over Nash equilibria of a routing game, whereas $\mathrm{CoNC}$ is an expectation over the bid prior under the worst undetectable operator deviation for the prescribed mechanism.

\begin{corollary}[CoNC lower bound under non-modularity]\label{cor:conc-lb}
On a non-modular polymatroid with a competitive prior profile (\cref{def:competitive-prior}) whose competitive event realises the capacity-sharing pair $(i,j)$ with gap $\gamma_{ij} > 0$, the perturbation deviation of the Payment Perturbation Lemma (\cref{lem:perturbation}) achieves
\[
\mathrm{CoNC}^{\mathrm{op}}(\delta) \;=\; \frac{\mathbb{E}[\mathrm{rev}^\delta - \mathrm{rev}^*]}{\mathbb{E}[W^*]} \;\ge\; \frac{c\,\gamma_{ij}}{\mathbb{E}[W^*]}
\]
for some constant $c>0$ depending only on the prior and controlled by the probability of the competitive event of \cref{def:competitive-prior}, where the numerator is the expected operator-revenue increment over a positive-measure set of bid profiles on which the perturbation is undetectable. The scaling of $\gamma_{ij}$ is read on the \emph{ratio scale} relative to $\mathbb{E}[W^{*}]$; in any concrete simulator the absolute magnitude of $\gamma_{ij}$ reflects the chosen value units (e.g., post-latency-discount realised value).
\end{corollary}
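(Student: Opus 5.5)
The plan is to integrate the pointwise payment increment of \cref{lem:perturbation} over the competitive event and divide by $\mathbb{E}[W^*]$. First fix the deviation. On the competitive event $\mathcal{E}$ of \cref{def:competitive-prior}, choose (measurably) a witnessing triple $(i,j,S)$. Let the operator apply the perturbation with amplitude $\delta(\mathbf{b})=\tfrac12\bar\delta_{ij|S}(\mathbf{b})$. Off $\mathcal{E}$ it executes honestly. Because the corollary fixes the realised pair $(i,j)$ with $\gamma_{ij}>0$, I restrict to the sub-event $\mathcal{E}_{ij}\subseteq\mathcal{E}$ on which that pair is the witness; by hypothesis this sub-event has positive probability. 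The deviation leaves every allocation unchanged by part~(i) of \cref{lem:perturbation}. It raises only agent $i$'s payment, so $\mathrm{rev}^\delta-\mathrm{rev}^*=\Delta p_i\ge 0$ pointwise and no agent is displaced. Undetectability on each profile is also supplied by the lemma.

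Second, I will bound $\Delta p_i$ from below in terms of $\gamma_{ij}$. The lemma gives $\Delta p_i=\gamma_{ij|S}\,[\bar\varphi_i^{-1}(\bar\varphi_j(b_j+\delta))-z_{ij}]$. Two issues arise. The gap is conditional, and $\gamma_{ij|S}$ can be smaller than $\gamma_{ij}$ by submodularity. The cleanest fix is to shrink $\mathcal{E}_{ij}$ to the profiles with $S=\emptyset$, where $\gamma_{ij|\emptyset}=\gamma_{ij}$ exactly. The sufficient conditions in \cref{rem:competitive-sufficient}(a),(b) produce exactly such events with positive probability. In general, if only $S\neq\emptyset$ witnesses occur, I would instead absorb the ratio $\gamma_{ij|S}/\gamma_{ij}$, which is a property of $f$ and positive by (A), into the constant. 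The bracket is positive on $\mathcal{E}_{ij}$ by strict monotonicity on the interior required in (B). So the random variable $h(\mathbf{b})=\bar\varphi_i^{-1}(\bar\varphi_j(b_j+\delta(\mathbf{b})))-z_{ij}(\mathbf{b})$ is strictly positive there.

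Third, I will take expectations: $\mathbb{E}[\mathrm{rev}^\delta-\mathrm{rev}^*]\ge\gamma_{ij}\,\mathbb{E}[h\,\mathbf{1}_{\mathcal{E}_{ij}}]$. I then set $c=\mathbb{E}[h\,\mathbf{1}_{\mathcal{E}_{ij}}]$, which depends only on the prior (and the fixed pair). It is positive because it is the integral of a strictly positive function over a positive-measure set. It is controlled by $\Pr(\mathcal{E}_{ij})$ in the sense that $c\ge\eta\Pr(\mathcal{E}'_{ij})$ on any sub-event $\mathcal{E}'_{ij}$ where $h\ge\eta$. Such a sub-event with positive mass exists since $\{h>1/m\}\uparrow\mathcal{E}_{ij}$. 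Dividing by $\mathbb{E}[W^*]$, which is finite and positive on non-degenerate priors with bounded $f$, gives the stated ratio-scale bound. Under a common prior, $h=\delta$ and $c=\tfrac12\mathbb{E}[\bar\delta_{ij|\emptyset}\mathbf{1}_{\mathcal{E}_{ij}}]$ becomes explicit.

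The main obstacle is measurability, together with keeping $c$ free of $\gamma_{ij}$ and of $f$ beyond the pair. The selection of $(i,j,S)$ and of $\delta(\mathbf{b})$ must be measurable so that the expectation is defined. I would handle this by fixing a deterministic lexicographic selection over the finitely many triples. I would also note that $\bar\delta_{ij|S}$ is a minimum of continuous functions of $\mathbf{b}$ on each alignment cell, which are finitely many open sets. A secondary point is that window endpoints depend on $f$ only through alignment and not through magnitudes, so $c$ genuinely depends on the prior alone once the pair and prefix are fixed.
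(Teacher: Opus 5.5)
Your main line is the argument the paper intends. The corollary has no separate proof; it is read off \cref{lem:perturbation} by integrating the per-profile increment over the positive-probability competitive event, just as the proof of \cref{cor:constructive-trilemma} obtains expected profitability. Your argument is correct on the reading in which the pair $(i,j)$ is realised at $S=\emptyset$ with positive probability. There $\gamma_{ij|\emptyset}=\gamma_{ij}$, revenue changes only through $\Delta p_i=\gamma_{ij}h\ge 0$, and $c=\mathbb{E}[h\,\mathbf{1}_{\mathcal{E}_{ij}}]$ depends only on the prior, since alignment, clauses (B)--(C) and the window involve only bids and virtual values. One small repair: on branch (V) with $F_j$ of unbounded support the window can be all of $(0,\infty)$. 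Take $\delta(\mathbf{b})=\min\{1,\tfrac12\bar\delta_{ij|S}(\mathbf{b})\}$ so that the amplitude and $c$ stay finite.

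The fallback for $S\neq\emptyset$ does not work, and it contradicts your closing claim that $c$ depends on the prior alone. Absorbing $\gamma_{ij|S}/\gamma_{ij}$ into $c$ makes $c$ depend on $f$, and in that case no prior-only constant exists. Take $E=\{i,j,k\}$ with $f(\{i\})=f(\{j\})=f(\{k\})=f(\{i,j\})=1$ and $f(\{i,k\})=f(\{j,k\})=f(E)=1+\theta$, $\theta\in(0,1]$. This is a polymatroid with $\gamma_{ij}=1$ and $\gamma_{ij|\{k\}}=\theta$. Use priors $F_k=U[10,11]$ and $F_i=F_j=U[0,1]$. Then $k$ is always active and processed first, so $(i,j)$ is competitive only with $S=\{k\}$, on $\{1>b_i>b_j>1/2\}$. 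The perturbation then earns $\theta\,\mathbb{E}[\delta\,\mathbf{1}_{\mathcal{E}_{ij}}]$, which tends to $0$ at fixed prior and fixed $\gamma_{ij}$. So the corollary must be read with the pair realised at $S=\emptyset$, or restated with $\gamma_{ij|S}$ in place of $\gamma_{ij}$; that is exactly your main line.

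Relatedly, submodularity does not order $\gamma_{ij|S}$ against $\gamma_{ij}$. It only gives $\gamma_{ij|S}\ge 0$, and the paper itself notes that $\gamma_{ij|S}$ can be positive where $\gamma_{ij}=0$.
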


The empirical CoNC induced by the deviations of \cref{sec:bg-gap} is reported in \cref{sec:eval-conc}. Step~2 of \cref{thm:amin-trilemma-instance} in Appendix~\ref{app:appendix-z} prices the same deviation over the edge-pricing market of~\cite{amin2026market}: an admissible price perturbation of amplitude $\varepsilon$ on the saturated edge $e^*$ raises the mediator's revenue by $\varepsilon\cdot q_{e^*}$ under the fixed-remittance schedule, which is the CoNC-type revenue increment available in that setting. The $\gamma_{ij}$-structural identification of the increment does not transfer there (residual gap~1 of Appendix~\ref{app:appendix-z}).

\begin{corollary}[Constructive trilemma]
\label{cor:constructive-trilemma}
Under the hypotheses of \cref{thm:trilemma}, given any prescribed mechanism $(x^{*},p^{*})$ in the polymatroidal class, an undetectable operator deviation $\delta:\mathbf{b}\mapsto(x',p')$ with strictly positive expected revenue gain can be computed in time $O(n\log n + n\,T_f)$, where $T_f$ is the cost of one polymatroid rank-oracle call.
\end{corollary}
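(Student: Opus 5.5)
The plan is to make the proof of \cref{thm:trilemma} algorithmic: Steps~2--3 of that proof, with \cref{lem:perturbation}, already exhibit a profitable undetectable deviation on the competitive event. So I would present a procedure that, given $\mathbf{b}$ and oracle access to $f$, either detects that $\mathbf{b}$ admits an aligned triple $(i,j,S)$ with $\gamma_{ij|S}>0$ and non-empty contested window, or returns the honest outcome. The deviation $\delta$ is then \emph{honest except on a detected witness}. Its expected revenue gain is at least the probability of the competitive event times an in-window increment, which is strictly positive, and undetectability on every profile is inherited from the lemma. It remains to bound the cost of finding a witness and computing the perturbed payment.

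\textbf{Step 1 (sorting and greedy).} Compute the priority keys $\kappa_l=\bar{\varphi}_l(b_l)$ (the identity on branch (W)), assuming each $\bar{\varphi}_l$ is evaluable in $O(1)$ from the prior. Sort the active agents by decreasing key in $O(n\log n)$ to obtain $\pi$. Run the Edmonds greedy along $\pi$, recording the prefix ranks $f(P_k)$ for $P_k$ the first $k$ agents. This costs $n$ oracle calls, so $O(n\,T_f)$.

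\textbf{Step 2 (witness search restricted to consecutive pairs).} Alignment in \cref{def:contested-window} forces $S$ to be exactly the set of agents preceding $i$, and $j$ to be the immediate successor of $i$ in $\pi$. Hence the candidate triples are the $n-1$ consecutive pairs $(\pi_k,\pi_{k+1})$ with $S=P_{k-1}$. For each candidate:
\begin{itemize}
\item Compute $\gamma_{ij|S}=f(S\cup\{i\})+f(S\cup\{j\})-f(S)-f(S\cup\{i,j\})$. Here $f(S\cup\{i\})=f(P_k)$ and $f(S\cup\{i,j\})=f(P_{k+1})$ are already stored, so only $f(P_{k-1}\cup\{\pi_{k+1}\})$ is new. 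That is one oracle call per candidate, $O(n\,T_f)$ in total.
\item Compute $\bar\delta_{ij|S}$ from the neighbouring keys and $\sup\mathrm{supp}(F_j)$ in $O(1)$.
\item Check clause (B) in $O(1)$.
\end{itemize}
Return the first pair passing all three tests.

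\textbf{Step 3 (apply the perturbation).} Pick $\delta=\bar\delta_{ij|S}/2$. Output the honest allocation $x^*(\mathbf{b})$ together with $p'_i=p^*_i+\Delta p_i$, where $\Delta p_i$ is given in closed form by \cref{lem:perturbation} and needs $O(1)$ inverse-virtual-value evaluations. Leave every other outcome honest. The honest Archer--Tardos payments themselves only require the breakpoints of $x_i(\cdot,b_{-i})$ along $\pi$. However, the deviation needs only the increment, so we never recompute full payments beyond what the prescribed mechanism already returns.

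\textbf{Main obstacle.} The hard step is justifying that restricting to consecutive pairs in $\pi$ loses nothing. A witness guaranteed by \cref{def:competitive-prior} might a priori use a non-consecutive pair with a nonstandard $S$. I would handle this directly from the alignment definition, which pins down $S$ and adjacency of $i$ and $j$, so every witness is of the enumerated form. A second concern is positivity of the expected gain. It requires the procedure to fire on a positive-measure set, and that set is exactly the competitive event, so the gain is at least $\Pr[\text{competitive}]\cdot\mathbb{E}[\Delta p_i\mid\cdot]>0$. The other candidate difficulty is the oracle accounting for $\gamma_{ij|S}$, which the prefix-reuse trick keeps at $O(1)$ calls per pair.
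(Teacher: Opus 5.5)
Your proposal is correct and follows the paper's proof. Both sort by the priority order, locate a pair $(i,j)$ and prefix $S$ with $\gamma_{ij|S}>0$ using $O(n)$ rank-oracle calls, displace $b_j$ inside the contested window via \cref{lem:perturbation}, and obtain a positive expected gain by integrating over the competitive event; your observation that alignment forces $S$ to be the full prefix before $i$ and $j$ its immediate successor, combined with reusing the stored greedy prefix ranks, makes explicit the oracle count that the paper only asserts.
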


\begin{proof}
The deviation is the one constructed in the proof of \cref{lem:perturbation}: sort by the priority order in $O(n\log n)$, find a pair $(i,j)$ and prefix set $S$ with $\gamma_{ij|S}>0$ in at most $n$ rank-oracle calls, and displace the auxiliary bid inside the contested window $W_{ij|S}(\mathbf{b})$ (\cref{def:contested-window}). Profitability is \cref{lem:perturbation}; expected profitability follows by integrating over the positive-measure undetectability event of \cref{thm:trilemma}'s proof. The sort dominates when the rank oracle is constant-time after a one-shot linear-time preprocessing pass, as on rooted trees; on general DAGs $f$ is a source--sink max-flow, so $T_f=O(VE)$~\cite{orlin2013maxflow}.
\end{proof}

\subsection{Restoring Credibility via Commitment}
\label{sec:commitment}

The trilemma can be resolved by making the auction transcript \emph{publicly reconstructible}: any participant can independently verify all clinch quantities from the broadcast data and authenticated on-path rank values, so that any operator deviation produces a detectable inconsistency. \cref{thm:trilemma} applies to \emph{static sealed-bid} mechanisms and concerns \emph{revenue optimality}. The commitment mechanisms below restore credibility by moving to an \emph{ascending} format (Part~(i), achieving \emph{welfare optimality}) or a \emph{deferred-revelation} format (Part~(ii), achieving \emph{revenue optimality} under matroid feasibility with strongly regular distributions).

\begin{theorem}[Credible mechanisms via commitment]
\label{thm:commitment}
Let $\mathcal{X}_{\mathrm{res}}$ be a polymatroidal feasible region over divisible goods defined on a resource ground set $E_R$ with rank function $f:2^{E_R}\to\mathbb{R}_{\ge 0}$, populated by single-parameter agents with quasi-linear valuations $v_i \cdot x_i - p_i$ and continuous allocations $x_i \ge 0$. Assume: (B0)~the broadcast channel satisfies \cref{def:broadcast-channel}, including the participant-authenticity axiom (S3); (B1)~agents broadcast their own demand/exit messages directly to all participants (not relayed by the operator); (B2$^{\prime}$)~\emph{verifiable rank function on the equilibrium path}: for every active-agent subset that arises during execution, the on-path values $f(D(p))$ and $f(D_{-i}(p))$ are broadcast with an authenticated proof (e.g., a Merkle commitment to the capacity topology, or a Trusted Execution Environment (TEE) attestation of the max-flow computation). If the marketplace operator commits to the allocation rule via a \emph{public broadcast channel} (carrying the operator's price announcements, agent-broadcast demand messages per~(B1), and the authenticated on-path rank values per~(B2$^{\prime}$)), then:
\begin{enumerate}
    \item[(i)] the ascending clinching auction~\cite{ausubel2004ascending,goel2015polyhedral} over $\mathcal{X}_{\mathrm{res}}$ is credible, DSIC, and welfare-maximising;
    \item[(ii)] if additionally $\mathcal{X}_{\mathrm{res}}$ is a matroid (as at the Level-1 slice marketplace under integrator encapsulation), a deferred-revelation auction (DRA) with deposits~\cite{ferreira2020credible,chitra2024credible,ec2025matroid} is credible, DSIC, and revenue-optimal for $\alpha$-strongly regular distributions (i.e., distributions whose virtual value function $\varphi$ satisfies $\varphi(v) - \varphi(\hat{v}) \ge \alpha(v - \hat{v})$ for all $v \ge \hat{v}$; $\alpha = 0$ reduces to Myerson regularity). DRA is \emph{not} credible beyond matroid feasibility~\cite{ec2025matroid}.
\end{enumerate}
\end{theorem}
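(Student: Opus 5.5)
Part~(i) and Part~(ii) are proved separately, and the heavy lifting in each is an import; the new work is checking that the broadcast assumptions (B0)--(B2$^{\prime}$) leave every deviation class of \cref{def:deviation} either detectable or unprofitable.

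\emph{Part~(i).} First, I import DSIC and welfare-maximisation of the polymatroid clinching auction from \cref{prop:efficient-mechanism}(iii) and~\cite{ausubel2004ascending,goel2015polyhedral}. That result gives truthful demand reporting as a dominant strategy along the ascending clock, and the final allocation coincides with the Edmonds greedy in bid order. Second, I establish credibility by showing that, under the public-record information model of \cref{sec:bg-gap}, every operator deviation yields an inconsistency visible to some agent. I split deviations into three cases, following A1--A3.

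\begin{enumerate}
\item[(a)] \emph{Bid alteration or concealment.} By (B1), demand and exit messages are broadcast by the agents themselves, so the operator cannot suppress or rewrite them without the author noticing.
\item[(b)] \emph{Phantom insertion.} Population-augmenting messages are excluded by the participant-authenticity axiom (S3) of \cref{def:broadcast-channel}.
\item[(c)] \emph{Misallocation or mispricing.} At each clock price $p$, the clinch quantity of agent $i$ equals $f(D(p))-f(D_{-i}(p))$ minus what it has already clinched, and its payment increment equals $p$ times the clinch. Both are deterministic functions of the broadcast demand sets and the authenticated rank values of (B2$^{\prime}$). Any agent can therefore recompute its own allocation and payment, and any discrepancy is detected.
\end{enumerate}

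Since the only undetectable action is honest execution, no profitable undetectable deviation exists at any realisation, which is exactly realisation-wise credibility in the sense of \cref{def:credible}. Capacity misreporting is also covered: an understated $\bar C_j$ would contradict the authenticated rank proof.

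\emph{Part~(ii).} On a matroid, the Level-1 slice marketplace of \cref{sec:bg-dags}, I invoke Ganesh and Zhang~\cite{ec2025matroid}. Their result states that the DRA with sufficiently large deposits is credible, DSIC and revenue-optimal for $\alpha$-strongly regular priors. Chitra et al.~\cite{chitra2024credible} show that the cryptographic commitment in that result can be realised by any public broadcast channel satisfying (B0). My task is then to check that the commit--reveal phases run over the channel of \cref{def:broadcast-channel}, with (S3) blocking shill commitments that lack deposits. The negative clause, that the DRA is not credible beyond matroids, is quoted directly from~\cite{ec2025matroid}.

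I expect the main obstacle to be case~(c) of Part~(i) on divisible polymatroids, in two respects. First, continuous clinching must be reconstructible from finitely many broadcast rank values. My first attempt is to discretise the clock at the breakpoints where $D(p)$ changes; between breakpoints, $f(D(p))$ and $f(D_{-i}(p))$ are constant, so only finitely many authenticated values are needed on the path. Second, I must argue that an agent's verification of its own clinches suffices to catch deviations that harm other agents. This follows because every agent performs the same check on public data, so any falsified clinch is detected by the agent it concerns.
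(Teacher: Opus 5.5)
Your proposal is correct and follows the paper's proof. Part~(i) imports DSIC and efficiency from Ausubel and Goel et al.\ and gets credibility by showing that every deviation is detectable from the public record under (B1), (B2$^{\prime}$) and (S3); the paper enumerates five cases, and besides yours it closes price equivocation across agents through the broadcast channel's no-equivocation axiom (S2), and closes forgery of a message under a rostered but silent agent through the identity-binding clause of (S3). Part~(ii) is the same Ganesh--Zhang import, which the paper first routes through \cref{lem:encapsulation-matroid} to obtain a matroid on the bidder set, the object the DRA needs. Drop your side remark that capacity misreporting is caught, though: (B2$^{\prime}$) authenticates rank values only against the operator's own prior commitment, so a dual-role operator that commits an understated topology before bidding passes every check, and the paper makes no such claim, listing integrator misreporting of $\bar C_k$ as open.
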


Part~(i): (B1)+(B2$^\prime$) over a reliable causal-broadcast channel (\cref{def:broadcast-channel} below) make the clinching transcript publicly reconstructible (\cref{def:public-reconstructibility}), so any operator deviation is detectable in a BAR adversary model that treats the operator as Byzantine and agents as Altruistic/Rational. (B2$^\prime$) is weaker than requiring a globally public rank function: only on-path values require authentication, preserving commercial confidentiality of the full feasibility structure. Part~(ii) imports Ganesh and Zhang~\cite{ec2025matroid} with deposit-threshold bound $d_{\mathcal{O}}^{*}=\Theta(\bar v \cdot n/\alpha)$, after the Level-1 reduction to matroid feasibility (\cref{lem:encapsulation-matroid}). The Ganesh--Zhang impossibility rules out DRA beyond matroid feasibility, but does not rule out all credibility devices at the cross-domain tier.

\begin{definition}[Broadcast channel formalism]
\label{def:broadcast-channel}
The broadcast channel is a \emph{reliable causal-broadcast} primitive over a participant set $\mathcal{P}$ comprising agents, the operator, and observers. It exposes a single operation $\mathrm{bcast}(m)$ delivered to every $q \in \mathcal{P}$ as $\mathrm{deliver}(q, m)$. We require:
\begin{enumerate}
\item[\textup{(S1)}] \emph{Integrity (safety).} If $\mathrm{deliver}(q, m)$ occurs at any honest $q$, then some sender previously executed $\mathrm{bcast}(m)$. Messages are not forgeable, dropped, or reordered relative to causal predecessors.
\item[\textup{(S2)}] \emph{Agreement (safety).} If two honest participants $q_1, q_2$ each deliver $m$, both observe the same content; no equivocation.
\item[\textup{(S3)}] \emph{Participant authenticity (closed roster).} The participant set $\mathcal{P}$ is fixed and published as part of the operator's pre-execution commitment; every message carries an identity binding verifiable against the roster (e.g., a signature under a key registered to a $\mathcal{P}$-member), and honest participants deliver only roster-bound messages. Roster admission is operator-independent in the sense of (C2) of \cref{prop:domain-separation}: an identity affiliated with the operator is not admitted. A change of membership requires a fresh commitment. In the terms of \cref{def:deviation}, (S3) is what removes the population-augmenting class from the operator's action space on this channel.
\item[\textup{(L1)}] \emph{Eventual delivery (liveness).} Every $\mathrm{bcast}(m)$ by an honest participant is eventually $\mathrm{deliver}$ed to every honest participant within a bounded delay $\Delta$.
\end{enumerate}
We adopt the BAR (Byzantine--Altruistic--Rational) adversary model of Aiyer et al.~\cite{aiyer2005bar}: the operator $\mathcal{O}$ is treated as \emph{Byzantine} (may deviate arbitrarily, subject to undetectability); agents are a mix of \emph{Altruistic} (follow the protocol) and \emph{Rational} (best-respond to incentives) participants. Credibility (\cref{def:credible}) is robust to a Byzantine operator under (S1)--(S3) and (L1) because public reconstructibility (below) makes any deviation detectable, and rational agents will prefer the truthful path once detection is positive-probability.
\end{definition}

Every implementation of \cref{def:broadcast-channel} must carry the (S3) roster: an open-membership substrate satisfies (S1)--(S2) and (L1) but admits identities that no roster check can resolve, and so does not implement \cref{def:broadcast-channel}. Causal order suffices; total order is not required, because the auction's price-clock advance imposes its own happens-before relation. Clock-stalling and message-delay tactics within the delivery bound $\Delta$ are constrained by (L1) and by the price clock's happens-before relation. We treat them informally here, without adding a separate deviation case.

\begin{definition}[Public reconstructibility]
\label{def:public-reconstructibility}
An execution of an ascending auction is \emph{publicly reconstructible} if each quantity needed to verify every agent's clinch (the current price $p$, the active set $D(p)$, and the on-path rank values $f(D(p))$, $f(D_{-i}(p))$) is derivable from messages on the broadcast channel of \cref{def:broadcast-channel} using publicly known protocol rules. (B1)+(B2$^{\prime}$) over a channel satisfying (S1)--(S3) and (L1) ensures public reconstructibility.
\end{definition}

\begin{lemma}[Level-1 encapsulation yields matroid feasibility]
\label{lem:encapsulation-matroid}
Under integrator encapsulation (P3 of~\cite{loven2026realtime}, \cref{prop:encapsulation}, with encapsulation conditions E1--E4), assume in addition: (a) \emph{integrality of all agent-facing capacities}, i.e., every capacity of the quotient graph $G'$ (the slice capacities $c_k=\bar C_k$ and the retained inter-cluster capacities, shared downstream bottlenecks included) is a non-negative integer, so that the agent-facing rank function $f'$ of \cref{prop:encapsulation} is integer-valued; and (b) unit demand over the slice types ($x_i\in\{0,1\}$, supplied by GS1), with $A_i\subseteq\{1,\ldots,K\}$ the set of slice types admissible for agent $i$. Write $[n]=\{1,\ldots,n\}$ for the agent index set, and call $J\subseteq[n]$ \emph{servable} if some assignment of each $i\in J$ to a type in $A_i$ has type-load vector $y$ with $y(S)\le f'(S)$ for all $S\subseteq\{1,\ldots,K\}$, so that the feasible 0/1 allocation vectors are exactly the indicators of servable sets. Then
\[
\mathcal I \;=\; \bigl\{\,J\subseteq[n] : J \text{ is servable}\,\bigr\}
\]
is the independent-set family of a matroid on the agent ground set $[n]$, namely the matroid induced from the integer polymatroid $f'$ through the eligibility bipartite graph that joins each agent $i$ to the types in $A_i$ (Rado's theorem for polymatroids~\cite{mcdiarmid1975rado}, see also \cite[\S44.6g]{schrijver2003combinatorial}, together with the Edmonds--Rota induction theorem~\cite[Prop.~11.1.1, Cor.~11.1.2]{oxley2011matroid}). When the clusters are capacity-disjoint, so that $f'$ is modular, this matroid is the transversal matroid of the eligibility graph with type-$k$ multiplicity $c_k$~\cite{perfect1969independence}. The Level-1 feasible region is therefore a matroid in the sense of Ganesh--Zhang~\cite{ec2025matroid} over the bidders, which is the object their DRA requires.
\end{lemma}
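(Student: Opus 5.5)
The plan is to identify $\mathcal I$ with the independent sets of the matroid induced from an integer polymatroid through a bipartite graph, and then to verify that the two hypotheses supply exactly what the induction theorem needs.

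\emph{Step 1: integrality of the rank function.} By \cref{prop:encapsulation}, $f'$ is a polymatroid rank function on the type set $\{1,\ldots,K\}$, and $f'(S)=f_{G_{\mathrm{res}}}(\iota(S))$ equals the max-flow value of $G'$ from the tokens in $S$. Under hypothesis~(a) every capacity of $G'$ is an integer, so the integral max-flow theorem makes every such value an integer. Hence $f'$ is an integer-valued, normalised, monotone, submodular function. We also record that the polymatroid $P(f')$ is integral, so that every integer point with $y(S)\le f'(S)$ is realised by an integral flow.

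\emph{Step 2: reformulating servability.} We build the bipartite graph $\Gamma$ with parts $[n]$ and $\{1,\ldots,K\}$, joining $i$ to each $k\in A_i$. By the definition of servability and unit demand~(b), $J$ is servable iff there is a map $\sigma:J\to\{1,\ldots,K\}$ with $\sigma(i)\in A_i$ whose load vector $y_k=|\sigma^{-1}(k)|$ lies in $P(f')$. This is precisely the independence notion of the induced matroid in the Edmonds--Rota / McDiarmid sense. We then invoke Rado's theorem for integer polymatroids in the form of \cite[\S44.6g]{schrijver2003combinatorial}: $J$ is servable iff $|J'|\le f'(\Gamma(J'))$ for every $J'\subseteq J$, where $\Gamma(J')=\bigcup_{i\in J'}A_i$. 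Since $f'\circ\Gamma$ is integer-valued, non-decreasing, and submodular (a monotone submodular function composed with a union-coverage map), the Edmonds--Rota construction \cite[Prop.~11.1.1, Cor.~11.1.2]{oxley2011matroid} shows that $\{J : |J'|\le f'(\Gamma(J'))\ \forall J'\subseteq J\}$ is the independent-set family of a matroid. Combining the two gives $\mathcal I$.

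\emph{Step 3: the matroid axioms directly, as a cross-check.} Hereditarity is immediate, since restricting $\sigma$ lowers loads and $P(f')$ is down-closed. For augmentation, the cleanest route is again Rado's characterisation. Alternatively, one can use exchange along alternating paths in $\Gamma$, combined with the polymatroid exchange property of integral bases of $P(f')$; this mirrors the standard proof for transversal matroids. We expect this to be the main obstacle if done by hand: when $|J_2|>|J_1|$, lifting a free unit from $y^{J_2}$ to $y^{J_1}$ requires the integral exchange property of $P(f')$, which depends on integrality from Step~1. Hypothesis~(a) is therefore load-bearing, and the proof should say so explicitly.

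\emph{Step 4: the modular specialisation and the conclusion.} When the clusters are capacity-disjoint, $f'(S)=\sum_{k\in S}c_k$, and the constraint reduces to $y_k\le c_k$. Replacing each type $k$ by $c_k$ parallel copies turns servability into the existence of a matching in the blown-up graph. This is Perfect's transversal matroid with multiplicities~\cite{perfect1969independence}. Finally, the feasible 0/1 allocation vectors are exactly the indicators of sets in $\mathcal I$. The Level-1 region is therefore a matroid on bidders, which is the object required by Ganesh--Zhang~\cite{ec2025matroid}.
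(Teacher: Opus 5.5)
Your proposal is correct and follows essentially the same route as the paper. It establishes integrality of $f'$ from integral max-flow on $G'$, characterises servability by Rado's condition $|B|\le f'(N(B))$, applies Edmonds--Rota induction to the integer-valued, monotone, submodular map $B\mapsto f'(N(B))$, and reads off the capacity-disjoint case as Perfect's transversal matroid. The only difference is that the paper proves the sufficiency half of Rado's condition directly, by a min-cut computation on the eligibility network glued to $G'$, where you cite McDiarmid's theorem; your Step~3 alternating-path cross-check is not needed.
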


\begin{proof}[Proof of \cref{lem:encapsulation-matroid}]
Deferred to \cref{app:routine-proofs}; the argument is that $J$ is servable if and only if Rado's condition
\begin{equation}
|B|\;\le\;f'(N(B))\qquad\text{for every }B\subseteq J
\label{eq:rado-condition}
\end{equation}
holds, which a max-flow/min-cut computation on the eligibility network establishes under the integrality hypothesis~(a), and that $B\mapsto f'(N(B))$ is integer-valued, monotone and submodular, so the Edmonds--Rota induction theorem~\cite[Prop.~11.1.1, Cor.~11.1.2]{oxley2011matroid} makes $\mathcal I$ the independent-set family of a matroid on $[n]$.
\end{proof}

\begin{remark}[Boundary of \cref{lem:encapsulation-matroid}: fractional shared capacity]
\label{rem:fractional-capacity-boundary}
Within polymatroidal Level-1 feasibility, the boundary of the matroid property is fractional shared capacity. Shared integer bottlenecks are covered by the lemma: two unit-demand agents $\{i,j\}$ on two capacity-$1$ slices behind a shared downstream capacity-$1$ bottleneck yield the servable family $\{\emptyset,\{i\},\{j\}\}$, the uniform matroid $U_{1,2}$ (the augmentation axiom quantifies over pairs with $|I_1|<|I_2|$ and holds here). Fractional shared capacity breaks augmentation. Take three slice types with $f'(\{1\})=f'(\{2\})=f'(\{3\})=1$, $f'(\{1,2\})=f'(\{2,3\})=3/2$, $f'(\{1,3\})=f'(\{1,2,3\})=2$, which is monotone and submodular, hence a genuine polymatroid rank function, and three unit-demand agents $a,b,c$ eligible for types $1,2,3$ respectively. The servable family is $\{\emptyset,\{a\},\{b\},\{c\},\{a,c\}\}$, because $\{a,b\}$ and $\{b,c\}$ each place two units on a type pair of joint capacity $3/2$; augmentation fails at $I_1=\{b\}$, $I_2=\{a,c\}$. Integrality of all agent-facing capacities (hypothesis (a) of \cref{lem:encapsulation-matroid}) is therefore a genuine boundary in the shared-capacity case, whereas with capacity-disjoint clusters a fractional $c_k$ could simply be floored. Beyond this boundary, and for the Level-2 multi-unit allocations discussed in the applicability paragraph of the proof of \cref{thm:commitment}, the feasible region leaves the matroid class and the DRA-non-credibility impossibility of Ganesh--Zhang~\cite{ec2025matroid} attaches; the broadcast-commitment (\cref{thm:commitment}(i)) and domain-separation (\cref{prop:domain-separation}) resolutions remain the operative alternatives there. Ganesh and Zhang's work is concurrent and independent of the present paper; the boundary their impossibility identifies is the structural boundary of \cref{thm:commitment}(ii)'s applicability in the present framework.
\end{remark}

\begin{proof}[Proof of \cref{thm:commitment}]
\textbf{Part (i): Ascending clinching auction.} We use the ascending clinching auction of Ausubel~\cite{ausubel2004ascending}, extended to polymatroidal environments by Goel et al.~\cite{goel2015polyhedral}, on the polymatroid $\mathcal{X}_{\mathrm{res}}$ over the resource ground set $E_R$ with rank function $f$. A price clock $p$ starts at $0$ and increases continuously. At price $p$, each agent $i$ reports its demand $d_i(p) = \max\{x_i : v_i(x_i) \ge p \cdot x_i\}$. The operator computes the residual supply for each agent:
\[
  s_i(p) = f(D(p)) - f(D_{-i}(p))
\]
where $D(p) = \{j\in\mathcal{A} : d_j(p) > 0\}$ is the set of active agents and $D_{-i}(p) = D(p) \setminus \{i\}$. Agent $i$ ``clinches'' $\min(d_i(p), s_i(p))$ units at price $p$. The auction ends when demand equals supply.

\emph{Economic properties (DSIC and efficiency).} DSIC and welfare maximisation are properties of the ascending clinching auction itself, established independently of any broadcast channel. DSIC follows from the clinching mechanism: at each price, agent~$i$'s clinch $s_i(p)$ depends only on other agents' demands, so truthful demand reporting is a dominant strategy~\cite{ausubel2004ascending}. Welfare maximisation follows from the efficiency of the clinching auction on polymatroidal feasible regions with single-parameter valuations satisfying gross substitutes~\cite{goel2015polyhedral,gul1999grosssubstitutes}.

\emph{Credibility (requires (B1) and (B2$^{\prime}$)).} The broadcast channel carries three types of messages: (1)~the operator's price announcements; (2)~each agent's own demand/exit messages, broadcast directly by agents to all participants (not relayed by the operator); and (3)~the operator's announcements of on-path rank values $f(D(p))$, $f(D_{-i}(p))$ together with an authentication witness satisfying (B2$^{\prime}$) and the implied clinch quantities $s_i(p)$. The active set $D(p)$ is derived from the agent-broadcast messages, so no participant must trust the operator's report of who is active. Under (B2$^{\prime}$), each on-path rank value is accompanied by a verifiable proof; agents and observers check the proof against the published authentication root before accepting the value. The clinch computation $s_i(p) = f(D(p)) - f(D_{-i}(p))$ is therefore publicly reconstructible (\cref{def:public-reconstructibility}) from the transcript.

Given the broadcast of $(p, D(p), f(D(p)), \{f(D_{-i}(p))\})$ with authentication witnesses, each agent can verify its clinch independently. Consider a deviation by the operator: (1)~announcing a different price to different agents is immediately detectable via the broadcast channel; (2)~misreporting $D(p)$ is detectable by any agent whose own participation status is misrepresented; (3)~miscomputing $s_i(p)$ is detectable by agent $i$, which recomputes $f(D(p)) - f(D_{-i}(p))$ from the authenticated broadcast values; (4)~submitting an inauthentic rank value is detectable by any participant because (B2$^{\prime}$) requires a verifiable proof anchored to the operator's prior commitment; (5)~originating a demand or exit message that its purported sender did not broadcast, whether under an identity outside the committed roster or under that of a rostered participant that stayed silent, is detectable by every honest participant. Under (S3) the roster is fixed before bidding begins, and every transcript entry is bound to a roster identity; the roster clause of (S3) closes the first sub-case, since an inserted participant fails the roster check even when the price path is internally consistent, and the identity-binding clause closes the second, since the operator cannot produce a signature under a key it does not hold. An honest verifier accepts a message only if its identity binding resolves to a member of $\mathcal{P}$, and the derived active set $D(p)\subseteq\mathcal{A}$ is itself observer-checkable against the published roster, so a message originated under an unregistered identity is rejected on delivery. (S1) alone does not close this case: integrity binds a delivered message to a sender that broadcast it, while (S3) binds that sender to the roster. The population-augmenting class of \cref{def:deviation} is therefore excluded. Thus the only undetectable action is to follow the protocol.

A roster member that the operator registers and controls is an operator--agent coalition, which lies outside the scope fixed by the coalition footnote of \cref{thm:trilemma}; (S3) closes the insertion of unregistered identities, and coalition-proofness is a separate question.

\textbf{Part (ii): Deferred-revelation auction with deposits.} This builds on the matroid-feasibility result of Ganesh and Zhang~\cite{ec2025matroid}. By \cref{lem:encapsulation-matroid}, the Level-1 slice marketplace under integrator encapsulation with integral agent-facing capacities (hypothesis~(a) of that lemma) is a matroid; the DRA of~\cite{ec2025matroid} therefore applies directly.

The DRA operates in three phases. \emph{Phase~1 (Commitment).} Before bids are submitted, the operator publishes a commitment $\sigma = \mathrm{Hash}(x^*(\cdot), p^*(\cdot))$ to the allocation and payment rules on a secure, append-only broadcast channel. Each agent deposits $d_i$ into an escrow controlled by the smart contract. \emph{Phase~2 (Execution).} Agents submit sealed bids $\mathbf{b}$. The operator computes $x^*(\mathbf{b})$ using the greedy algorithm on $\mathcal{X}_{\mathrm{res}}$ and payments $p^*(\mathbf{b})$ per the committed rule. \emph{Phase~3 (Verification).} The operator reveals the full allocation and payment vector. Any agent $i$ can verify that $(x_i, p_i)$ is consistent with the committed rule $\sigma$ by checking that $x^*(\mathbf{b})$ is the output of the greedy algorithm (deterministic given the bid ordering~\cite{edmonds1970submodular}) and that $p_i$ matches the committed payment formula. If agent $i$ finds a discrepancy, it presents the evidence to the smart contract, which slashes the operator's deposit.

\emph{Credibility (deposit threshold imported from Ganesh--Zhang~\cite{ec2025matroid}).} Any deviation from the committed rule produces verifiable evidence (a mismatch between $\sigma$ and the executed $(x, p)$). The deposit $d_{\mathcal{O}}$ must be set large enough that the expected slashing penalty strictly exceeds the maximum gain from any deviation the operator can profitably execute given the detection process. The explicit bound is supplied by~\cite{ec2025matroid}: $d_{\mathcal{O}}^{*} = \Theta(\bar{v}\,n/\alpha)$ for value support $[0,\bar v]$, $n$ agents, and $\alpha$-strong regularity parameter. Under this choice the DRA is credible. Revenue optimality for $\alpha$-strongly regular distributions at Level~1 follows from the same reserve-price analysis as in~\cite{ec2025matroid}.

\emph{Applicability boundary.} Ganesh and Zhang~\cite{ec2025matroid} prove that DRA is \emph{not} credible for any downward-closed feasibility constraint that violates the matroid augmentation property. This impossibility applies to polymatroids with multi-unit allocations ($x_i > 1$), which arise at Level-2 raw-resource marketplaces. At Level~2, where multi-unit allocations are needed, credibility is ensured by domain separation (\cref{prop:domain-separation}) or by the ascending clinching auction of Part~(i), which extends to polymatroids~\cite{ausubel2004ascending,goel2015polyhedral}.
\end{proof}

\paragraph*{Deployment gap.} The formal guarantee of Part~(i) rests on the public reconstructibility of the auction transcript (\cref{def:public-reconstructibility}), not on a generic notion that ``broadcast helps.'' In a deployed system, achieving (B1) requires a communication substrate independent of the marketplace operator (e.g., a peer-to-peer broadcast overlay or an independent message bus); achieving (B2$^{\prime}$) requires an on-path authentication scheme for rank values (e.g., a Merkle commitment to the capacity topology, with membership proofs for each on-path subset, or TEE attestation of max-flow evaluations). Unlike the stronger (B2) that would require publishing the full feasibility structure, (B2$^{\prime}$) allows the operator to keep the infrastructure topology proprietary while still providing verifiable on-path values.

\subsection{Complementary Marketplace Properties}
\label{sec:alternatives}

Beyond operator credibility via commitment, two complementary structural properties strengthen marketplace governance: \emph{domain separation} (operator-side) achieves credibility through revenue-channel separation, eliminating the operator's profitable deviation; \emph{integrator competition} (integrator-side) provides market-power discipline that constrains monopoly markup at the slice-supply layer but is orthogonal to mechanism-execution credibility. The two propositions thus address structurally distinct entities and attack surfaces.

\begin{proposition}[Domain separation]
\label{prop:domain-separation}
Let the marketplace be operated by an entity $\mathcal{O}$ satisfying: (C0)~\emph{settlement separation}: agent payments transit directly from agents to resource owners (e.g., via a verifiable escrow or third-party clearing agent), so $\mathcal{O}$'s books record only the per-unit fee income $\phi \cdot \sum_i x_i$ and not the gross agent-to-owner payments; (C1)~no ownership stake in the allocated resources; (C2)~no affiliation with any participating agent; (C3)~all participating agents have positive valuations and the polymatroid's capacity is binding (aggregate desired allocation strictly exceeds $f(E)$, so $\sum_i x^*_i = f(E)$ at the welfare-maximising allocation); (C4)~\emph{no fee on undelivered allocation}: $\mathcal{O}$ collects $\phi$ only on units that materialise into a delivered service path, so phantom or undeliverable allocations contribute zero to $\mathcal{O}$'s revenue. Under (C0)--(C4), VCG is credible.
\end{proposition}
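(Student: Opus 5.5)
The plan is to show directly that, under (C0)--(C4), the operator's revenue is the same constant on every undetectable deviation, so no deviation can strictly beat honest execution realisation-wise in the sense of \cref{def:credible}. By (C0) and (C4), the operator's book revenue on any executed outcome $(x',p')$ is exactly $\phi\cdot\sum_i x'_i$, summed over delivered units only. The agents' payments $p'$ never enter the operator's books, so the payment channel of \cref{lem:perturbation} is neutralised: inflating $p_i$ by $\varepsilon_i$ moves money from agents to resource owners, and by (C1) none of it reaches $\mathcal{O}$. The first step is therefore to reduce the operator's objective to the single functional $R(x')=\phi\sum_i x'_i$ evaluated on delivered allocations.

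The second step bounds $R$ uniformly. Any delivered allocation must be routable, and by \cref{prop:polymatroidal-structure} (or \cref{prop:encapsulation} on the slice quotient) this means it lies in $\mathcal{X}_{\mathrm{res}}$. Hence $\sum_i x'_i\le f(E)$ for every deviation. By (C3) the honest VCG allocation is an Edmonds-greedy output with every agent's valuation positive and capacity binding, so $\sum_i x^*_i=f(E)$. Consequently $R(x')\le\phi f(E)=R(x^*)$ for every deviation $\delta$, and honest execution attains the maximum.

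The third step runs through the deviation classes of \cref{def:deviation} to confirm that none escapes this bound. For payment inflation, case (c), $R$ is unchanged by (C0). For bid substitution or concealment among realised agents, case (b) population-preserving, the resulting allocation is again feasible, so $R$ does not increase. For phantom insertion or identity substitution, case (b) population-augmenting, (C2) rules out an affiliated phantom that could pass its surplus back to $\mathcal{O}$. (C4) then gives zero fee on any phantom's allocation, so units diverted to phantoms only lower $R$. Discriminatory allocation is excluded by (C2) together with the feasibility bound. Capacity misreporting understates $f$ and only shrinks $\sum_i x'_i$, and (C1) removes any ownership rent that could offset this.

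The main obstacle I expect is the phantom and augmenting case. There one must argue that (C4)'s ``delivered'' notion and (C2) together leave no indirect channel, for instance a phantom that displaces a real agent and changes prices without altering $\sum x'$. The plan is to note that even then $R$ is unchanged, not increased, which suffices for credibility as the absence of a strictly profitable deviation. Finally, I will remark that the $\lambda>0$ knife-edge appears exactly where (C1) fails: the objective then acquires a term $\lambda\sum_i p'_i$, and \cref{lem:perturbation} again yields a strictly profitable undetectable deviation.
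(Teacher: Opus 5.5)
Your proof is correct and follows the paper's route: reduce $\mathcal{O}$'s payoff to the delivered-fee functional $\phi\sum_i x_i$ via (C0)--(C1), use (C3) to show that VCG attains the maximum $f(E)$, and dispose of payment inflation via (C0) and ghost bids via (C2) and (C4). Your single feasibility bound $\sum_i x'_i\le f(E)$ on delivered allocations neatly covers the paper's separate reallocation, total-reduction, capacity-violation and displacement cases; one correction is that your opening claim that revenue is ``the same constant'' on every deviation should read ``at most'' that constant, since deviations that reduce the total allocation strictly lower it.
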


(C0) and (C4) are the operative conditions: under settlement separation $\mathcal{O}$'s revenue reduces to $\phi$ times the \emph{delivered} allocation, which VCG maximises on the polymatroid. We enumerate the five operator deviation cases and show none is strictly profitable.

\begin{proof}[Proof of \cref{prop:domain-separation}]
Under (C0)--(C1), $\mathcal{O}$'s payoff is exactly $\phi \cdot \sum_i x_i$, proportional to the total allocation and independent of per-agent payment levels.

\emph{VCG fills capacity.} VCG computes the welfare-maximising allocation $x^* \in \mathcal{X}_{\mathrm{res}}$, which maximises $\sum_i v_i x_i$ subject to the polymatroid constraint. Under (C3), $\sum_i x^*_i = f(E)$: any allocation with $\sum_i x_i < f(E)$ could be improved by increasing some agent's allocation (the additional welfare $v_i \cdot \Delta x_i > 0$ is strictly positive given the excess aggregate demand). Therefore VCG achieves the maximum total allocation $f(E)$.

\emph{No deviation is profitable.} Consider an arbitrary operator deviation $\delta: \mathbf{b} \mapsto (x', p')$.
\begin{enumerate}
    \item \emph{Reallocation (same total):} If $\delta$ reallocates among agents while maintaining $\sum_i x'_i = f(E)$, $\mathcal{O}$'s revenue is unchanged at $\phi \cdot f(E)$.
    \item \emph{Total reduction:} If $\delta$ results in $\sum_i x'_i < f(E)$ (e.g., by excluding agents from the allocation or by inflating payments beyond agents' values, violating individual rationality and causing agents to drop out), $\mathcal{O}$'s revenue strictly decreases.
    \item \emph{Capacity violation:} If $\delta$ attempts $\sum_i x'_i > f(E)$, it violates the polymatroid constraint and is infeasible.
    \item \emph{Payment inflation with allocations unchanged (the \cref{lem:perturbation} perturbation, and equally a non-allocated fictitious bid).} If $\delta$ applies the perturbation of \cref{lem:perturbation} to raise some agent's Archer--Tardos payment by $\varepsilon > 0$ without changing allocations, the agent-to-owner transfer rises by $\varepsilon$; but by (C0) this transfer flows directly to resource owners and is not recorded in $\mathcal{O}$'s books. $\mathcal{O}$'s revenue $\phi\cdot\sum_i x_i$ is unchanged.
    \item \emph{Fictitious-agent (ghost-bid) insertion.} This is the population-augmenting class of \cref{def:deviation}, the one class \cref{thm:trilemma}'s construction does not use. A phantom that wins nothing leaves every allocation unchanged and only raises a genuine winner's Archer--Tardos integrand, so it falls under case~4. If $\delta$ inserts a fictitious agent whose bid is allocated, two sub-cases arise.

    \emph{Case~5(a) [displacement of zero-marginal genuine agent].} If the fictitious agent displaces a genuine agent with zero or near-zero marginal allocation (the displaced agent's clinch was already zero, so its absence does not reduce delivered allocation), aggregate \emph{delivered} allocation is preserved at $f(E)$ and $\mathcal{O}$'s fee revenue $\phi\cdot\sum_i x_i$ is unchanged. The fictitious agent's notional payment $p_{\mathrm{fict}}$ inflates the gross-payment account $\sum_i p_i$ by exactly the Archer--Tardos area attributable to the inserted bid; under (C0), this account flows in full to resource owners (more precisely, to whichever \emph{phantom owner} is registered as the supplier of the fictitious slice; the operator's bookkeeping still records only the fee component $\phi\cdot\sum_i x_i$). Because (C0) routes \emph{all} agent-to-owner transit through escrow regardless of agent identity (genuine or fictitious), and (C2) bars the operator from being affiliated with the phantom-owner endpoint, the fictitious payment does not reach $\mathcal{O}$'s books.

    \emph{Case~5(b) [displacement of positive-marginal genuine agent], formalised at Level-2.} If the fictitious agent displaces a genuine agent whose marginal allocation was strictly positive, two consequences follow. First, by (C4) (\emph{no fee on undelivered allocation}), the fictitious agent's allocated units do \emph{not} contribute to $\phi\cdot\sum_i x_i$ in $\mathcal{O}$'s books, because the units are not redeemed against a deliverable service path: $\mathcal{O}$ collects $\phi=0$ on those units. Second, the displaced genuine units would have contributed positive $\phi$ to delivered allocation; their removal therefore reduces $\phi\cdot\sum_i x_i^{\mathrm{deliv}}$ by exactly the displaced quantity, which triggers the total-reduction case~(2) and strictly decreases $\mathcal{O}$'s revenue. Third, the fictitious agent's notional payment again transits through escrow to a phantom owner under (C0), and by (C2) is barred from accruing to $\mathcal{O}$. The combined effect on $\mathcal{O}$'s books is therefore strictly negative under (C4) and zero under (C0)+(C2) for the payment side, so case~5(b) is dominated by the no-deviation baseline.

    The Level-2 formalisation matters because the multi-unit polymatroid setting (Level~2) is precisely where ghost-bid insertion can change the active set of agents: at Level~1 unit demand (hypothesis GS1) allocates at most one unit per agent and the ghost-bid attack reduces to a payment inflation at the displaced agent's rank position (case~4), but at Level~2 a ghost-bid can shift integer flow to a phantom successor. The case~5(b) argument shows that (C0) and (C4) jointly close this Level-2 attack vector: (C0) routes the gross-payment increment to a phantom owner outside $\mathcal{O}$'s books, and (C4) zeroes the fee on the undelivered ghost units.
\end{enumerate}
Since no undetectable deviation yields strictly higher revenue for $\mathcal{O}$ under (C0)--(C4), VCG is credible under domain separation. For matroid/binary-allocation settings ($x_i \in \{0,1\}$), the per-unit fee reduces to a per-transaction fee.
\end{proof}

\begin{corollary}[Knife-edge breakdown under partial stake]
\label{cor:knife-edge}
If (C1) is relaxed to allow an ownership-stake fraction $\lambda \in (0,1]$ of gross payments to accrue to $\mathcal{O}$ (equivalently, $\mathcal{O}$'s revenue becomes $\phi \cdot \sum_i x_i + \lambda \cdot \sum_i p_i$), then the operator can profit from the \cref{lem:perturbation} perturbation by exactly $\lambda \varepsilon$ for any $\varepsilon > 0$ in the achievable range. VCG is therefore not credible for any $\lambda > 0$, recovering the trilemma setting of \cref{thm:trilemma}. \cref{prop:domain-separation} is thus a knife-edge result: credibility holds at $\lambda = 0$ and fails for every $\lambda > 0$. The reading is structural: at $\lambda=0$ the operator's incentive is exactly aligned with the prescribed allocation and at any $\lambda>0$ it is exactly misaligned, so commission-based platform revenue places every commercial marketplace on the misaligned side of a discontinuity.
\end{corollary}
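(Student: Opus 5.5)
The plan is to rerun the payment-inflation deviation of \cref{lem:perturbation} in the domain-separated setting of \cref{prop:domain-separation}, with (C1) relaxed, and to read off the operator's payoff change. First I would write the relaxed payoff as $\Pi_{\mathcal{O}}(x,p)=\phi\cdot\sum_i x_i+\lambda\cdot\sum_i p_i$. Settlement separation (C0) still routes the remaining share $(1-\lambda)$ of the gross agent-to-owner transfer outside $\mathcal{O}$'s books, and (C2)--(C4) are kept. Under honest VCG execution, (C3) gives $\sum_i x^*_i=f(E)$, exactly as in the ``VCG fills capacity'' step of the proof of \cref{prop:domain-separation}. So the honest payoff is $\phi f(E)+\lambda\sum_i p^*_i$.

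Next, on a profile in the competitive event of \cref{def:competitive-prior}, I would apply branch (W) of \cref{lem:perturbation}, since VCG is the welfare greedy. This requires an aligned triple $(i,j,S)$ with $\gamma_{ij|S}>0$. The non-modular polymatroid supplies one at $S=\emptyset$, on the positive-probability open set $\{b_i>b_j>\max_{k\notin\{i,j\}}b_k\}$ noted at the start of that proof. For any $\delta$ in the window $W_{ij|S}(\mathbf{b})$, the lemma gives the following. Agent $i$'s allocation is unchanged, every other agent keeps its honest outcome, and agent $i$'s payment rises by $\varepsilon=\delta\gamma_{ij|S}>0$. The rationalising profile $\hat{\mathbf{b}}$ lies in $\mathrm{supp}(F_{-i})$, so the deviation is undetectable in the sense of \cref{def:deviation}. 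It is also population-preserving, so (C2) and (C4) do not bite: no phantom identity or undelivered unit is involved.

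Then I would compute the payoff change directly. The allocation vector, and hence the delivered total and the fee term $\phi\sum_i x_i$, is unchanged; this is case~4 of \cref{prop:domain-separation}. The only change is in $\lambda\sum_i p_i$, which rises by exactly $\lambda\varepsilon$. The increment is strictly positive for every $\lambda\in(0,1]$ and every achievable $\varepsilon$, that is, every $\varepsilon$ in the image of the window under $\delta\mapsto\delta\gamma_{ij|S}$. By \cref{def:credible} this is a profitable undetectable deviation at a realisation in the support, so VCG is not realisation-wise credible. The same computation at $\lambda=0$ gives increment $0$, and \cref{prop:domain-separation} applies unchanged, which is the knife-edge.

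The step needing most care is ruling out that the inflated payment breaks individual rationality and so triggers the total-reduction case~2. I would handle this with the Archer--Tardos identity: $\hat p_i=p^*_i(\hat{\mathbf{b}})$ is an honest VCG payment at $b_i$ on a support profile. It is therefore at most $b_i x_i=v_ix_i$ under truthful bidding, so agent $i$ does not drop out and the delivered allocation stays at $f(E)$. The claim that $\lambda>0$ ``recovers the trilemma setting'' then follows because the operator's objective now contains a positive multiple of payment revenue. That is exactly the channel Step~2 of the proof of \cref{thm:trilemma} exploits.
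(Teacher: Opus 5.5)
Your proposal is correct and follows the paper's own argument: under the relaxed payoff $\phi\sum_i x_i+\lambda\sum_i p_i$, the perturbation of \cref{lem:perturbation} raises one agent's payment by $\varepsilon$ while leaving the allocation vector, and hence the fee term, unchanged, so the operator gains exactly $\lambda\varepsilon>0$. Your added checks are sound elaborations of steps the paper leaves implicit: that VCG is the branch-(W) greedy with Archer--Tardos payments; that the inflated payment stays below $b_i x_i$, so case~2 of \cref{prop:domain-separation} is not triggered; and that the deviation is population-preserving. (Your appeal to the competitive event is superfluous, since that event is a branch-(V) object and you then work directly with a (W)-aligned in-support profile.)
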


\begin{proof}
Under the relaxed setting, any payment increment $\Delta p_i$ yields $\mathcal{O}$ additional revenue $\lambda \Delta p_i$. Applying \cref{lem:perturbation} to construct $\Delta p_i = \varepsilon$ without changing allocations gives the operator a detectability-free revenue gain of $\lambda\varepsilon > 0$, which is strictly positive for any $\lambda > 0$.
\end{proof}

\cref{prop:domain-separation} is a \emph{narrow sufficient} condition: any positive ownership stake, side payments, or dynamic incentives re-introduce the credibility problem. The five-case enumeration translates to the edge-pricing market of~\cite{amin2026market} on the SP-with-homogeneous-disutility class, with edge-price inflation in place of payment perturbation; the Amin-instance counterpart of (C0) is the settlement-separated schedule of \cref{lem:amin-mediator-regime}(b), under which the deviation is revenue-neutral. Integrator competition addresses pricing-layer exploitation rather than mechanism execution; the orthogonality of credibility and competition under disjoint actors is the structural content of \cref{thm:competition-credibility-orthogonality}.

\begin{proposition}[Integrator competition]
\label{prop:competition}
Suppose $k \ge 2$ integrators compete to offer slices for the same service path, and agents can freely choose among integrators. (a)~With $k \ge 2$ homogeneous integrators, Bertrand competition drives pricing to marginal cost with zero welfare loss. (b)~With differentiated slices modelled as a Salop circular city~\cite{salop1979monopolistic} with linear transport cost $t>0$, common marginal cost $c$, and consumer reservation value $\bar v$ (an outside option), the symmetric equilibrium is regime-conditional in $R\triangleq\bar v-c>0$:
\begin{itemize}
\item \emph{Competitive regime} ($R\ge 3t/(2k)$): equilibrium markup $t/k$, full coverage, and zero exclusion deadweight loss, so the markup is a pure transfer.
\item \emph{Kinked regime} ($t/k\le R<3t/(2k)$): equilibrium price $\bar v-t/(2k)$, hence markup $R-t/(2k)$, with the market exactly covered and zero exclusion deadweight loss.
\item \emph{Local-monopoly regime} ($R<t/k$): markup $R/2$, partial coverage, and exclusion deadweight loss $kR^{2}/(4t)$ for $R\le t/(2k)$ and $R-t/(4k)-3kR^{2}/(4t)$ for $t/(2k)<R<t/k$.
\end{itemize}
The exclusion loss is confined to the local-monopoly regime and vanishes for every $k\ge t/R$; inside the regime it rises with $k$, since $kR^{2}/(4t)$ is increasing in $k$ at fixed $R$ and $t$. Without an outside option the market is always covered and the markup is a pure transfer with zero deadweight loss.
\end{proposition}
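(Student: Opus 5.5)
Part~(a) is a direct import of the Bertrand argument, and part~(b) reduces to a case analysis of the symmetric Salop equilibrium followed by explicit integration of the uncovered mass. For (a), with $k\ge 2$ homogeneous integrators at common marginal cost $c$ and agents free to switch, I would argue in the standard way. Any common price $p>c$ invites a unilateral undercut to $p-\epsilon$ that captures the whole market. Any price $p<c$ makes losses. Hence $p=c$ in every equilibrium. Every agent with value at least $c$ is served, and the allocation coincides with the efficient one, so welfare loss is zero. The credibility of the underlying mechanism plays no role here, which is the orthogonality point.

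For (b), place the $k$ integrators symmetrically at spacing $1/k$ on the unit circle, and let a consumer at distance $y$ buy from integrator $\ell$ iff $\bar v-p_\ell-ty\ge 0$ and this surplus beats the neighbour's. I would compute each integrator's demand as a function of its own price $p$ against neighbours at a common price $p^{*}$. This demand is piecewise. In the local-monopoly piece the reach is $(\bar v-p)/t$ on each side and does not touch the neighbour, giving demand $2(\bar v-p)/t$. In the competitive piece the indifference point against the neighbour gives demand $1/k+(p^{*}-p)/t$. The two pieces meet at a kink. I then solve the first-order condition on each piece and check consistency.

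The monopoly piece gives $p=(\bar v+c)/2$, hence markup $R/2$ and reach $R/(2t)$. This is consistent, i.e.\ the reaches do not overlap, iff $R/(2t)\le 1/(2k)$, that is $R\le t/k$. The competitive piece gives the Salop price $c+t/k$. It is consistent iff the marginal consumer at distance $1/(2k)$ retains non-negative surplus, $\bar v-c-t/k-t/(2k)\ge 0$, that is $R\ge 3t/(2k)$. In the gap $t/k\le R<3t/(2k)$ neither interior solution is consistent. There I would show that the kink price $\bar v-t/(2k)$, at which the boundary consumer is exactly indifferent, is an equilibrium. The left derivative of profit (monopoly side) must be $\le 0$ there and the right derivative (competitive side) must be $\ge 0$. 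These two inequalities reduce precisely to $R\ge t/k$ and $R\le 3t/(2k)$.

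The step I expect to be the main obstacle is equilibrium verification rather than first-order conditions. In the competitive regime I must rule out a large upward deviation into the monopoly piece, and in the kinked regime I must confirm that profit is quasi-concave across the kink. My plan is to use concavity of profit on each linear-demand piece, together with the sign conditions on the one-sided derivatives at the kink, to conclude that the candidate price is a global best response.

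Finally, the deadweight-loss formulas follow by integration. Under first best, a consumer at distance $y$ is served iff $R-ty\ge 0$, capped at $1/(2k)$. For $R\le t/(2k)$ the uncovered loss is $2k\int_{R/(2t)}^{R/t}(R-ty)\,dy=kR^{2}/(4t)$. For $t/(2k)<R<t/k$ it is $2k\int_{R/(2t)}^{1/(2k)}(R-ty)\,dy=R-t/(4k)-3kR^{2}/(4t)$. In the competitive and kinked regimes coverage is full, so the markup is a pure transfer. The closing claims are then immediate. For $k\ge t/R$ we have $R\ge t/k$, so the local-monopoly regime is excluded. Monotonicity follows from $\partial_k\bigl(kR^{2}/(4t)\bigr)>0$. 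Without an outside option the competitive piece is the only relevant one, so the market is always covered.
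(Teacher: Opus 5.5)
Your proposal is correct and follows essentially the same route as the paper: the same two demand branches, consistency of each branch optimum through the reach and midpoint-surplus conditions, a one-sided-derivative check at $p_{\mathrm{kink}}=\bar v-t/(2k)$ backed by concavity of profit on each branch, and the same two exclusion integrals. One slip needs fixing. With own price on the axis, the contested branch lies to the \emph{left} of the kink (a cut steals the midpoint consumer) and the local-monopoly branch lies to the \emph{right}. The conditions are therefore a left (contested) derivative $3/(2k)-R/t\ge 0$ and a right (monopoly) derivative $2(t/k-R)/t\le 0$; read literally, your left/right assignment describes a local minimum, although your per-branch signs and the resulting bounds $R\le 3t/(2k)$ and $R\ge t/k$ are correct.
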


\begin{proof}[Proof of \cref{prop:competition}]
Deferred to \cref{app:routine-proofs}; the argument is a symmetric-equilibrium check on a circular city with an outside option, whose three regimes are the three positions of the symmetric price relative to the kink $p_{\mathrm{kink}}=\bar v-t/(2k)$ at which the midpoint consumer's surplus is exactly zero: above the kink each integrator is a local monopolist, at it the market is exactly covered, and below it the contested-boundary demand $1/k+(p-p_{0})/t$ applies.
\end{proof}

Competition disciplines integrator \emph{pricing} but does not prevent allocation-layer deviations within a single integrator's marketplace; credibility and competition therefore address orthogonal attack surfaces. We strengthen this orthogonality observation to a structural decomposition theorem with explicit preconditions for its validity.

\begin{theorem}[Surplus-extraction-rate orthogonality decomposition]
\label{thm:competition-credibility-orthogonality}
Consider a two-tier service market with $k\ge 2$ integrators competing for slice supply (the setup of \cref{prop:competition}) and a single operator $\mathcal{O}$ executing the allocation mechanism on the polymatroidal feasibility region $\mathcal{X}_{\mathrm{res}}$. Suppose:
\begin{itemize}
\item[\textnormal{(O1)}] \emph{Disjoint actors:} the operator $\mathcal{O}$ and each integrator are distinct entities; no single firm acts on both the allocation layer and the pricing layer.
\item[\textnormal{(O2)}] \emph{Allocation-layer credibility deviation:} the operator's strategic action is a deviation in the family $\mathcal{F}_{\mathrm{perturb}}$, defined as the set of payment perturbations obtained by a single application of \cref{lem:perturbation} at one capacity-sharing pair $(i,j)$ and prefix set $S$ of the realised profile, with admissible offset $\delta\in(0,\bar\delta_{ij|S})$. Such a deviation $\delta$ carries stake $\lambda\in[0,1]$ and amplitude $\varepsilon(\delta)$ equal to the realised payment increment $\Delta p_i$ of that application. The family collects single applications; the increments of two applications at pairs that contest a shared saturated element do not add, so a composed deviation admits no per-pair summation and would have to be priced by its own realised increment.
\item[\textnormal{(O3)}] \emph{Pricing-layer competition:} the integrators' strategic action is the Salop pricing of \cref{prop:competition}, with differentiation parameter $t>0$ and equilibrium markup $t/k$.
\item[\textnormal{(O4)}] \emph{IR slack:} the consumer outside option does not bind at the symmetric Salop equilibrium ($\bar v-c\ge 3t/(2k)$ for the marginal consumer's reservation value $\bar v$), which places the market in the competitive regime of \cref{prop:competition}(b), with full coverage and equilibrium markup $t/k$.
\end{itemize}
Define the credibility surplus-extraction rate $\mathcal{L}_{\mathrm{cred}}(\lambda)\triangleq\sup_{\delta\in\mathcal{F}_{\mathrm{perturb}}}\mathbb{E}[\lambda\varepsilon(\delta)]$ and the competition consumer-surplus transfer $\mathcal{L}_{\mathrm{Salop}}(t,k)\triangleq (t/k)\cdot M$ from \cref{prop:competition}, where $M$ is the (unit-normalised) consumer mass on the Salop circle. Then the total expected operator-plus-integrator surplus-extraction rate decomposes additively:
\begin{equation}\label{eq:orthogonality-decomposition}
\mathcal{S}_{\mathrm{extract}} \;=\; \mathcal{L}_{\mathrm{cred}}(\lambda) \;+\; \mathcal{L}_{\mathrm{Salop}}(t,k),
\end{equation}
where $\mathcal{S}_{\mathrm{extract}}$ denotes the total surplus extracted by operator and integrators per round, and the interaction term $\partial^{2}\mathcal{S}_{\mathrm{extract}}/\partial\lambda\,\partial(t,k)$ vanishes identically.
\end{theorem}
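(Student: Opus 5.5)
The plan is to show that, under (O1)--(O4), the total extraction splits into two terms, each depending on its own parameter block only. We write $\mathcal{S}_{\mathrm{extract}}$ as the sum of the operator's incremental take from the allocation layer and the integrators' markup take from the pricing layer. Then we check that neither term depends on the other block's parameters, which forces the mixed partial to vanish.

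\emph{Step 1 (accounting split via (O1)).} Because operator and integrators are disjoint entities, every unit of extracted surplus is booked to exactly one of them. The operator's books carry $\phi\sum_i x_i+\lambda\sum_i p_i$, as in the relaxed setting of \cref{cor:knife-edge}; its extraction beyond the prescribed mechanism is therefore $\lambda\,\Delta p_i$ for the chosen $\delta\in\mathcal{F}_{\mathrm{perturb}}$. The integrators' books carry their Salop markup. There is no shared revenue line and no single firm optimising jointly, so the two are summed without a cross term.

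\emph{Step 2 (the credibility term is independent of $(t,k)$).} By \cref{lem:perturbation}, $\varepsilon(\delta)=\Delta p_i=\gamma_{ij|S}\,[\bar\varphi_i^{-1}(\bar\varphi_j(b_j+\delta))-\bar\varphi_i^{-1}(\bar\varphi_j(b_j))]$. This is a function of the rank function $f$, the priors, the realised bids and $\delta$ only. The window $\bar\delta_{ij|S}$ of \cref{def:contested-window} is likewise built from these primitives alone. The Salop prices enter only as the integrators' wholesale charge for slices. They could matter in two ways: through slice capacities, which would change $f$, or through the agents' effective values, which would change the bids. Under (O4) the market is fully covered at markup $t/k$, so the set of served consumers and the capacities $\bar C_k$ exposed to $\mathcal{O}$ do not move with $(t,k)$. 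The markup is then a pure transfer that shifts no allocation, and we take the bid prior on the allocation layer as given. Consequently $\sup_{\delta}\mathbb{E}[\lambda\varepsilon(\delta)]=\lambda\cdot\sup_\delta\mathbb{E}[\varepsilon(\delta)]$ is linear in $\lambda$ and constant in $(t,k)$. Restricting to single applications, as (O2) stipulates, avoids the non-additivity of composed perturbations.

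\emph{Step 3 (the Salop term is independent of $\lambda$).} By \cref{prop:competition}(b) in the competitive regime, which (O4) guarantees, the equilibrium markup is $t/k$ with full coverage. The transfer is therefore $(t/k)M$. The integrators' first-order conditions involve only $t$, $k$, $c$ and the location structure. The operator's stake $\lambda$ and its perturbation alter only payments internal to the allocation layer. By (O1), with no integrator sharing in $\lambda$, these leave the integrators' demand and cost primitives unchanged, so the equilibrium is $\lambda$-free.

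\emph{Step 4 (conclusion).} Adding the two terms gives \eqref{eq:orthogonality-decomposition}. Since the first term is a function of $\lambda$ alone and the second of $(t,k)$ alone, the mixed partial vanishes identically; for discrete $k$ it is read as the mixed difference.

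The main obstacle is Step 2: making rigorous that pricing-layer markups do not feed back into the bid profile or the polymatroid on which \cref{lem:perturbation} is evaluated. I would first try to argue that under full coverage the markup is a lump transfer. Failing that, I would add an explicit separability assumption that agents' auction values are net of a markup common to all agents. Such a common shift leaves the priority order and $\gamma_{ij|S}$ invariant, and on branch (W) leaves the increment $\delta\gamma_{ij|S}$ unchanged.
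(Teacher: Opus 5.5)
Your proposal is correct and follows essentially the paper's route: (O1) separates the two extraction channels; the Salop first-order condition depends only on $(t,k)$, so $\partial(t/k)/\partial\lambda=0$; and the credibility amplitude is fixed by the contested window and the pullback map of \cref{lem:perturbation}, so each component depends only on its own parameter block and the mixed partial vanishes. The one difference is emphasis: the translation-invariance argument you hold in reserve (a markup common to all agents and constant in $\mathbf{b}$, leaving the priority order and the window unchanged) is the paper's primary justification for $\partial\mathcal{L}_{\mathrm{cred}}/\partial(t,k)=0$, which the paper secures through the Salop timing of (O3), where integrators price before bids are realised, together with the no-exclusion clause (O4).
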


\begin{remark}[Surplus-extraction vs.\ welfare-loss units]
\label{rem:orthogonality-units}
Under (O4) IR-slack, every consumer purchases and Salop deadweight loss is zero (per \cref{prop:competition}'s welfare analysis); the quantity $(t/k)\cdot M$ is therefore a \emph{consumer-surplus transfer} from consumers to integrators, not a deadweight loss. \cref{thm:competition-credibility-orthogonality} is therefore stated as orthogonality of \emph{surplus-extraction rates} (credibility surplus extracted by the operator; markup transfer extracted by the integrator), not as orthogonality of welfare losses; both quantities are non-zero under (O4) by construction. Welfare-loss-rate orthogonality does not follow by relaxing (O4). When (O4) fails, the equilibrium leaves the competitive regime: the markup becomes $R-t/(2k)$ in the kinked band and $R/2$ under local monopoly, coverage is partial in the latter, and $\mathcal{L}_{\mathrm{Salop}}=(t/k)M$ is then no longer the transfer, so the decomposition would have to be restated with the regime-appropriate transfer and exclusion loss of \cref{prop:competition}(b). The surplus-extraction decomposition~\eqref{eq:orthogonality-decomposition} holds on the full (O1)--(O4) regime as stated.
\end{remark}

\begin{proof}[Proof of \cref{thm:competition-credibility-orthogonality}]
Deferred to \cref{app:routine-proofs}; the argument is that under (O1) the two extraction channels have disjoint sources and sinks, and that the Salop markup reaches an agent only as a constant additive shift in the slice-supply price, which quasi-linear utility absorbs, so neither cross-derivative survives.
\end{proof}

The theorem's structure is the surplus-decomposition analog of the polymatroid direct-sum~\cite[Theorem~3.5]{fujishige2005submodular}: the allocation-layer factor and the pricing-layer factor produce surplus-extraction components that sum without interaction whenever the polymatroidal feasibility region admits a direct-sum decomposition into allocation and pricing tiers. Same-entity violations of (O1) collapse to a knife-edge: at $\lambda>0$ the integrator-as-operator firm extracts both $\lambda\varepsilon$ and $t/k\cdot M$, the surplus-extraction components couple, and orthogonality fails. (O4)'s IR-slack precondition rules out a feedback channel through consumer exclusion at the Salop boundary that would re-couple the layers via the agent population. The theorem is therefore restricted to the disjoint-actors, IR-slack regime; same-entity and binding-IR cases are excluded by construction.

The assumptions-to-results mapping is in \cref{app:assumptions-results}. The trilemma-illustration arms of \cref{fig:exp2} exhibit the ghost-bid deviation surplus under all four mechanisms (no enforcement, broadcast commitment, blockchain, and domain separation).
\section{Evaluation}
\label{sec:evaluation}

We illustrate the trilemma empirically through three baseline experiments (Exps.~1--3) that operationalise the credibility gap of \cref{thm:trilemma} in a simulated polymatroidal marketplace, supplemented by Experiment~R-5, which probes mechanism-class robustness across VCG, first-price, and posted-price; the agent-side $\mathrm{CoNC}^{\mathrm{ag}}$ vs.\ the operator-side $\mathrm{CoNC}^{\mathrm{op}}$; and the empirical $\gamma_{ij}$ distribution. We use ``illustrate'' rather than ``validate'' throughout this section to mark these as implementation-level consistency checks under the manuscript's modelling assumptions; theorem-level generality is established by the proofs in \cref{sec:credible-mechanisms}. The scope conditions under which the illustrated claims are read are those of \cref{rem:scope-consolidated}.\footnote{Simulation codebase: \url{https://github.com/Future-Computing-Group/credible-marketplace-sim}}

\subsection{Setup}
\label{sec:eval-setup}

The simulation reuses the three-tier sensor--edge--cloud topology and latency-aware valuations of~\cite{loven2026realtime}: $N = 40$ agents (Exps.~1 and~3 and Exp.~R-5; Exp.~2 sweeps $N \in \{10, 30, 50\}$ to illustrate the broadcast-commitment closure across population sizes), tasks with exponentially decaying value ($\lambda_l = 0.005$/ms) and deadlines in $\{100, 150, 200\}$~ms. Task arrivals are Poisson at rate $1.0$ task per agent per round in Exps.~1--3; Exp.~R-5 instead fixes exactly one task per agent per round, so that the active-bidder count equals $N$ in every round. Task values are drawn from $U(1,2)$ in Exps.~1--3, with a second Exp.~3 arm on $U(0.5,2)$ that is run in the grid but not reported here, and from $U(0,1)$ in Exp.~R-5, the latter chosen so that the bid scale $\bar v$ coincides with the support's upper endpoint at $1$. Tiers $C \in \{200, 300, 500\}$ at base latencies $\{5, 15, 50\}$~ms, with capacities divided by the load factor $\ell \in \{0.5, 1.0, 1.5\}$. Each run is $100$ rounds, averaged over $5$ seeds per condition. Default clearing is sealed-bid VCG with externality payments over a greedy allocation; the simulator's greedy sorts tasks by value density (value divided by deadline) subject to the tier capacities; that order is a heuristic and differs from the ironed-virtual-value Edmonds greedy of \cref{prop:efficient-mechanism}. The broadcast-commitment baseline replaces the sealed-bid clearing with a simplified public price clock over a fixed $20$ rounds, which reproduces the public-transcript property that \cref{thm:commitment}(i) requires without computing that auction's per-agent residual supply $s_i(p)$. Two further \texttt{expE} cells, a blockchain-commitment arm and a domain-separation arm, are reported for completeness in \cref{fig:exp2} and are not analysed further here. Three DAG topologies (tree, series--parallel, entangled) span the structural-complexity spectrum; the credibility results of \cref{sec:credible-mechanisms} depend on polymatroidal structure rather than DAG specifics, so the three-topology evidence is illustrative of a topology-invariant claim, not a benchmark across an exhaustive topology set. The simulator parameterises the three composite DAG classes (tree, series--parallel, ``entangled'' as the non-SP general-DAG stand-in) by per-tier demand multipliers and critical-path tiers. The operator's adversarial strategy in Exps.~1--3 is the \emph{ghost-bidder} deviation, which injects a bid under an identity outside the realised agent set and is therefore population-augmenting in the sense of \cref{def:deviation}, one class wider than the population-preserving perturbation of \cref{lem:perturbation} that \cref{thm:trilemma}'s proof uses; three further adversaries (capacity misreporter, price inflator, discriminator) are exercised in the Exp.~1 grid but are not reported here. Exp.~R-5 adds one more, the posted-price inflator, which raises the announced price by a markup $\mu = 0.4$ after participation and so sets the level of that experiment's largest deviations.

\paragraph*{Reproducibility and data availability.}
The simulator is written in R. Exps.~1--3 are built as targets of the \texttt{targets} workflow manager; Exp.~R-5 is run by the seeded standalone script \texttt{scripts/\allowbreak run\_expR5\_\allowbreak standalone.R}, which writes its results into the same object store. The five random seeds (\texttt{seeds = seq\_len(5)}, i.e.\ $1, 2, 3, 4, 5$), the \texttt{targets} pipeline configuration (\texttt{\_targets.R}) that defines each of the trilemma-illustration experiments, the configuration and driver script of Exp.~R-5, the raw outputs underlying the figures and tables of \cref{sec:eval-trilemma}, \cref{sec:eval-r5}, and \cref{sec:eval-conc} (exported as ten gzipped CSV files under \texttt{exports/}), and the figure-generation scripts are released under the MIT licence at \url{https://github.com/Future-Computing-Group/credible-marketplace-sim}; the release is archived at Zenodo (concept DOI \url{https://doi.org/10.5281/zenodo.20394158}, this version \url{https://doi.org/10.5281/zenodo.22045882}) and corresponds to the simulator's tag \texttt{v2.0-teac-2a-v2}, which archives the code that generated the figures reported here. The \texttt{make reproduce} target runs the test suite, rebuilds the 2A subset of the \texttt{targets} pipeline together with the R-5 script and the figure scripts, and re-exports the raw outputs. Replication materials are available to reviewers upon submission.

\subsection{Trilemma Illustration (Exps.~1--3)}
\label{sec:eval-trilemma}

\begin{table}[!t]
\centering
\caption{Trilemma-illustration experiments. Each row exhibits one aspect of \cref{thm:trilemma}: that a ghost-bid deviation (one deviation class wider than the theorem's own construction) is profitable, undetectable, and persists under the revenue-optimal Myerson mechanism.}
\label{tab:trilemma-illustration}
\small
\renewcommand{\arraystretch}{1.15}
\setlength{\tabcolsep}{4pt}
\begin{tabular}{@{}cl >{\RaggedRight\arraybackslash}p{2.0cm} p{3.5cm} p{4.0cm}@{}}
\toprule
\textbf{Exp} & \textbf{Setting} & \textbf{Tests} & \textbf{Key finding} & \textbf{Primary result} \\
\midrule
1 & Baseline VCG    & \cref{thm:trilemma}
  & Ghost-bid profitable, undetectable
  & Surplus $+0.70$; welfare loss $1.58\%$; $\delta_{\mathrm{C}} = 1.00$ \\
2 & $+$ broadcast      & \cref{thm:trilemma,thm:commitment}
  & Broadcast deters ghost-bid
  & Surplus $-7.82$; det.\ $61.2$--$100\%$ \\
3 & Myerson         & \cref{thm:trilemma}
  & Trilemma extends to revenue-optimal mech.
  & Profitable under VCG \& Myerson \\
\bottomrule
\end{tabular}
\smallskip

{\footnotesize $\delta_{\mathrm{C}}$: Cliff's effect size~\cite{cliff1993dominance}, written with a subscript to keep it apart from the operator deviation $\delta$ of \cref{def:deviation} and the perturbation amplitude $\delta$ of \cref{lem:perturbation}, between the per-seed operator-side metrics (welfare loss, operator surplus, net surplus, drop rate) under the ghost-bidder deviation versus the truthful baseline, computed separately at each load level (\texttt{R/stat\_analysis.R}); $\delta_{\mathrm{C}} > 0.474$ is large on the conventional ordinal thresholds of Romano et al.~\cite{romano2006appropriate}. The reported $\delta_{\mathrm{C}} = 1.00$ holds for welfare loss, operator surplus, and net surplus at all three load levels; the drop-rate metric gives $1.00$ at loads $0.5$ and $1.0$ and $0.54$ at load $1.5$. Five seeds per condition. Experiment codes in the release package: Exp.~1 $=$ \texttt{expA}, Exp.~2 $=$ \texttt{expE}, Exp.~3 $=$ \texttt{expK}. The companion statistical menu in the release package (BCa bootstrap CIs, Wilcoxon--Holm rank tests, ART ANOVA) is not surfaced here; the 2A claims rest on Cliff's $\delta$ and the visual CIs of \cref{fig:exp2,fig:expR5-conc,fig:expR5-gamma}.}
\end{table}

The sealed-bid VCG baseline admits a ghost-bid deviation that extracts surplus $+0.70$/round while inducing $1.58\%$ welfare loss (Exp.~1; \cref{tab:trilemma-illustration} collects the three experiments). Undetectability under sealed-bid VCG follows from the construction and is nowhere measured: each agent observes only its own allocation and payment, both of which the ghost bid leaves consistent with a higher-demand round, so the simulator carries no per-agent detection signal on this arm. The Cliff's $\delta_{\mathrm{C}} = 1.00$ reported in \cref{tab:trilemma-illustration} separates the deviation's operator-side metrics from the truthful baseline and measures the size of the effect, not whether an agent could see it. Broadcast commitment converts the same deviation to $-7.82$/round, with detection between $61.2\%$ and $100\%$ across the nine cells and a mean of $95.7\%$ (Exp.~2, \cref{fig:exp2}). That magnitude is set by the simulator's penalty parameter, under which a detected operator reverts to the truthful outcome and forfeits ten times the surplus it extracted; the sign is the finding, the depth is a parameter choice. No positive-surplus deviation escapes the broadcast check: the simulator flags at the outcome level, so every round in which the ghost bid extracts surplus is detected, and the detection rate is $100\%$ conditional on positive extraction in all nine cells. The rates below $100\%$ count the rounds in which the ghost bid displaced no positive-value task and extracted nothing, leaving no deviation surplus to detect. The low cell is the entangled topology at $N = 50$, where the ghost bid is small relative to the shared capacity and comes away empty in $38.8\%$ of rounds, and where the deviation is deterred anyway at $-3.65$/round. The deviation remains profitable under the revenue-optimal Myerson mechanism without commitment (Exp.~3), illustrating on a second mechanism the Archer--Tardos payment structure that \cref{thm:trilemma} establishes for all DSIC mechanisms on the polymatroid; the generality is the theorem's, and two mechanisms cannot supply it.

\begin{figure}[!t]
    \centering
    \includegraphics[width=\linewidth]{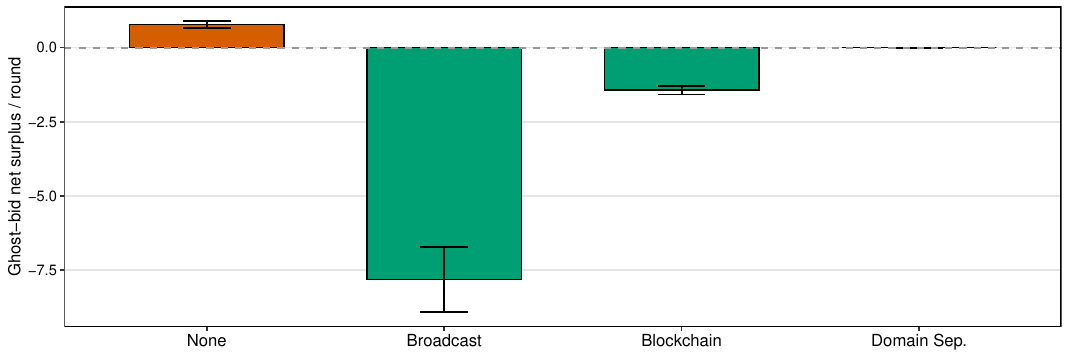}
    \caption{Ghost-bid net surplus per round by credibility mechanism (Exps.~1--2): the vermillion VCG baseline (\emph{None}) sits above zero at $+0.78$, the nine-cell mean of Exp.~2's no-enforcement arm (\texttt{expE}; the \texttt{expA} ghost cells of Exp.~1 give $+0.70$); broadcast commitment, blockchain, and domain separation (green) all drive surplus to zero or below, illustrating the closure of \cref{thm:commitment} and \cref{prop:domain-separation}. The depth of the two negative bars is set by the simulator's penalty parameter (a detected operator reverts to the truthful outcome and forfeits ten times what it extracted), so the sign of those two bars is the finding and their depth is a parameter choice; the domain-separation bar is exactly zero by construction, because under settlement separation the simulator short-circuits the ghost bid instead of measuring a residual extraction. Bars: mean across the nine $(\text{topology} \times N)$ cells ($\text{topology} \in \{\text{tree}, \text{sp}, \text{entangled}\}$, $N \in \{10, 30, 50\}$); each cell is itself a $5$-seed mean from \texttt{expE\_aggregate}. Error bars: $\pm 1.96 \cdot \mathrm{SE}$ across the nine cells (normal-approximation $95\%$ CI; the underlying per-seed-per-(topology, $N$) variance is folded into the cell means upstream by \texttt{expE\_aggregate}).}
    \Description{Bar chart of ghost-bid net surplus by credibility mechanism. The baseline (VCG, no commitment) shows positive surplus near +0.78; broadcast commitment drives surplus to roughly -7.8; blockchain to about -1.4; domain separation to zero.}
    \label{fig:exp2}
\end{figure}

\subsection{Mechanism-Class Robustness, $\mathrm{CoNC}^{\mathrm{ag}}$, and $\gamma_{ij}$ Distribution (Exp.~R-5)}
\label{sec:eval-r5}

Experiment R-5 supplements the VCG-on-three-topologies baseline of Exps.~1--3 with three orthogonal probes: (a) whether the ghost-bid extraction persists across mechanism families (first-price, posted-price) or is VCG-specific; (b) how the agent-side $\mathrm{CoNC}^{\mathrm{ag}}$ relates to the operator-side $\mathrm{CoNC}^{\mathrm{op}}$ already reported; and (c) the empirical realised distribution of the non-modularity gap $\gamma_{ij}$ across the three DAG classes. The grid sweeps three mechanisms (VCG, first-price, posted-price) $\times$ three operator strategies (truthful, ghost-bidder, posted-price-inflator) $\times$ three DAG topologies $\times$ three posted-price levels ($p_{\mathrm{post}} \in \{0.2, 0.5, 0.8\}$, posted-price arm only) $\times$ five seeds $\times$ 100 rounds. After filtering incompatible combinations (the \texttt{posted\_price\_inflator} operator requires the posted-price mechanism, and $p_{\mathrm{post}}$ is moot under VCG / first-price so those arms collapse to a single nominal value), the design has $2 \cdot 2 \cdot 3 + 3 \cdot 3 \cdot 3 = 12 + 27 = 39$ conditions and $39 \cdot 5 \cdot 100 = 19{,}500$ round-runs.

\paragraph*{Mechanism-class robustness.}
Two operator-side quantities have to be kept apart in this grid, in the sense of \cref{eq:conc}: the operator's \emph{extraction}, the increment in the payments collected from the agents that remain allocated, and $\mathrm{CoNC}^{\mathrm{op}}$, the delta in total collected revenue, which also nets out the payment forgone by the agent the ghost bid displaces. Extraction is robust across allocation classes and insensitive to the payment rule: under ghost-bid deviation it is identical on first-price and on VCG, at $0.0011$ / $0.0014$ / $0.0014$ on entangled / sp / tree for both mechanisms (absolute difference $\le 6.5 \times 10^{-19}$), because the two mechanisms run the same value-density greedy allocation and therefore displace the same marginal task, whose $1.1\times$ value the ghost bid extracts whether payments are VCG externalities or first-price bids. $\mathrm{CoNC}^{\mathrm{op}}$ does not inherit that invariance. Under first-price the displaced agent was paying its own bid, so the revenue delta falls short of extraction by exactly the welfare lost, $\mathrm{CoNC}^{\mathrm{W}}$, and reads $0.000102$ / $0.000130$ / $0.000131$ against VCG's $0.0011$ / $0.0014$ / $0.0014$, a factor of about $11$; under VCG the displaced task is the marginal one and pays almost nothing, so extraction and $\mathrm{CoNC}^{\mathrm{op}}$ agree. The revenue-normalised ratio separates the two mechanisms further still, at $0.000102$--$0.000131$ on first-price against $0.015$--$0.107$ on VCG, because they collect different revenue on the same allocation. Posted-price under ghost-bid gives $\mathrm{CoNC}^{\mathrm{op}}$ between $-0.026$ and $0.018$, moving with the posted level: $0.016$--$0.018$ at $p_{\mathrm{post}} = 0.2$, $0.004$--$0.006$ at $0.5$, and $-0.026$ to $-0.021$ at $0.8$. Under its native attack (post-participation price inflation at markup $\mu = 0.4$), $\mathrm{CoNC}^{\mathrm{op}}$ rises to $0.192$--$0.522$ (\cref{fig:expR5-conc}). Extraction is positive in all $24$ adversarial conditions, but $\mathrm{CoNC}^{\mathrm{op}}$ is negative in $3$ of them, the posted-price ghost arm at $p_{\mathrm{post}} = 0.8$ on all three topologies: the operator still extracts a positive amount from the agents it keeps, while the revenue it collects falls, because at that price level the payment the displaced agent no longer makes exceeds what the remaining agents pay extra. The sign of the deviation's cost is therefore consistent only once it is read on the agent side, where $\mathrm{CoNC}^{\mathrm{ag}}$ is positive throughout. The VCG level here sits an order of magnitude below the $1.72\%$ reported for Exp.~1 in \cref{sec:eval-conc}, and the first-price level two orders below it, because the two experiments draw task values from different supports, $U(1,2)$ in Exps.~1--3 and $U(0,1)$ here. Under the state-dependent ghost amplitude the operator's per-round gain is $1.1$ times the realised value of the single displaced marginal task, so both quantities scale with that task's value as a share of per-round welfare: a support bounded away from zero puts the marginal task at about $0.63$ of the average allocated task, a support reaching zero at about $0.05$. Comparisons of $\mathrm{CoNC}^{\mathrm{op}}$ levels are therefore like-for-like within an experiment, across mechanism classes here and across enforcement regimes in Exps.~1--2, and not across the two.

\paragraph*{Agent-side CoNC.}
$\mathrm{CoNC}^{\mathrm{ag}} \ge \mathrm{CoNC}^{\mathrm{op}}$ across all $24$ adversarial conditions, with the gap equal to $\mathrm{CoNC}^{\mathrm{W}} \ge 0$. This restates the agent-side identity accompanying \cref{eq:conc} on the data and does not check it independently: $\mathrm{CoNC}^{\mathrm{ag}}$ is defined as the sum, so what the grid records is that the simulator's welfare and revenue columns are mutually consistent. The informative content is the split between the two components. Pure-transfer deviations (posted-price inflator) have $\mathrm{CoNC}^{\mathrm{W}} = 0$ and $\mathrm{CoNC}^{\mathrm{ag}} = \mathrm{CoNC}^{\mathrm{op}}$: agents bear exactly the operator's transfer. Allocation-distorting deviations carry positive $\mathrm{CoNC}^{\mathrm{W}}$, and the ratio $\mathrm{CoNC}^{\mathrm{ag}} / \mathrm{CoNC}^{\mathrm{op}}$ then varies by arm: it is the mechanical $1 + 1/1.1 = 1.91$ on the VCG ghost arm, where the displaced task pays nothing and the $1.1$ amplitude rule fixes both components; it is $10.97$ on first-price, where the revenue delta additionally loses the displaced bid; and on the nine posted-price ghost cells it runs between $2.7$ and $2.8$ at $p_{\mathrm{post}} = 0.2$, between $8.9$ and $14.5$ at $0.5$, and negative at $0.8$, where the operator-side component changes sign while the agent-side cost does not. The agent-side cost on those nine cells reaches $0.102$. The largest agent-side cost in the grid, $0.522$, belongs to the pure-transfer inflator, where the ratio is $1$.

\paragraph*{Realised $\gamma_{ij}$ distribution.}
The $\gamma_{ij}$ distribution is obtained by drawing $500$ $(\text{value},\text{deadline})$ pairs from the panel's own value prior $U(0.5,1.5)$ (independent of either experiment's value support) and the shared deadline prior on $\{100,150,200\}$~ms, and evaluating the per-pair non-modularity gap $\gamma_{ij} = f(\{i\}) + f(\{j\}) - f(\{i,j\})$ on each topology's realised-welfare function $f$; the resulting curve is thus the pushforward of the input priors through the (deterministic) gap functional, not a sample of agent-reported data; we therefore summarise it descriptively (kernel density, means, and the rank-based Cliff's $\delta$~\cite{cliff1993dominance}) rather than fitting a parametric family or testing for a population effect. The per-topology distribution shifts right, tree $<$ sp $<$ entangled, consistent with non-modularity increasing in topological entanglement: mean $\gamma_{ij}$ is $0.0012$ / $0.0015$ / $0.0017$ on tree / sp / entangled in the simulator's realised-value (post-latency-discount) units, with adjacent-pair Cliff's $\delta$ of $0.38$ (tree$\to$sp) and $0.19$ (sp$\to$entangled) indicating a real but overlapping shift (\cref{fig:expR5-gamma}). The absolute magnitude (${\sim}10^{-3}$) is the post-discount welfare-gap; the ratio-versus-absolute-units convention under which these values are read is stated in the footnote.\footnote{Two scales are in play: $\gamma_{ij}$ is reported in the simulator's realised-value (post-latency-discount) units, absolute magnitude ${\sim}10^{-3}$, while $\mathrm{CoNC}^{\mathrm{op}}$ is reported on the ratio scale relative to $\mathbb{E}[W^{*}]$ (dimensionless, $0.001$--$0.52$ across adversarial conditions). The ratio-scale claim of \cref{cor:conc-lb} carries the trilemma's quantitative reading; the post-discount absolute magnitudes are an internal-consistency check only.}

\paragraph*{Realisation-wise vs.\ in-expectation reconciliation.}
The trilemma of \cref{thm:trilemma} is realisation-wise: at least one bid profile in the prior's positive-measure support yields positive operator surplus. The reported $\mathrm{CoNC}^{\mathrm{op}}$ values are in-expectation over the five-seed prior, and the realisations behind them carry the same sign on a set of positive measure. The ghost bid is scaled to $1.1$ times the realised marginal value of the task it displaces, so its per-round yield varies with the draw: under VCG and first-price the per-round operator surplus is spread over $[0, 0.120]$ with standard deviation $0.0196$ and is strictly positive in $93.3\%$ of realised rounds; under posted-price it is positive in $99.8\%$ of rounds, and the posted-price inflator (per-round surplus in $[0, 5.8]$ as $p_{\mathrm{post}}$ and the random allocation interact) in $99.9\%$. The complementary rounds are not rounds in which nothing is displaced: the ghost bid displaces the marginal allocated task in every round, and the surplus is exactly zero when that task's value has already decayed to zero at its deadline, so that the ghost bid, scaled to $1.1$ times it, extracts nothing. The realisation-wise reading is therefore carried by a positive-measure extraction event, which is the form of the claim in \cref{thm:trilemma}.

\begin{figure}[!t]
    \centering
    \includegraphics[width=0.9\linewidth]{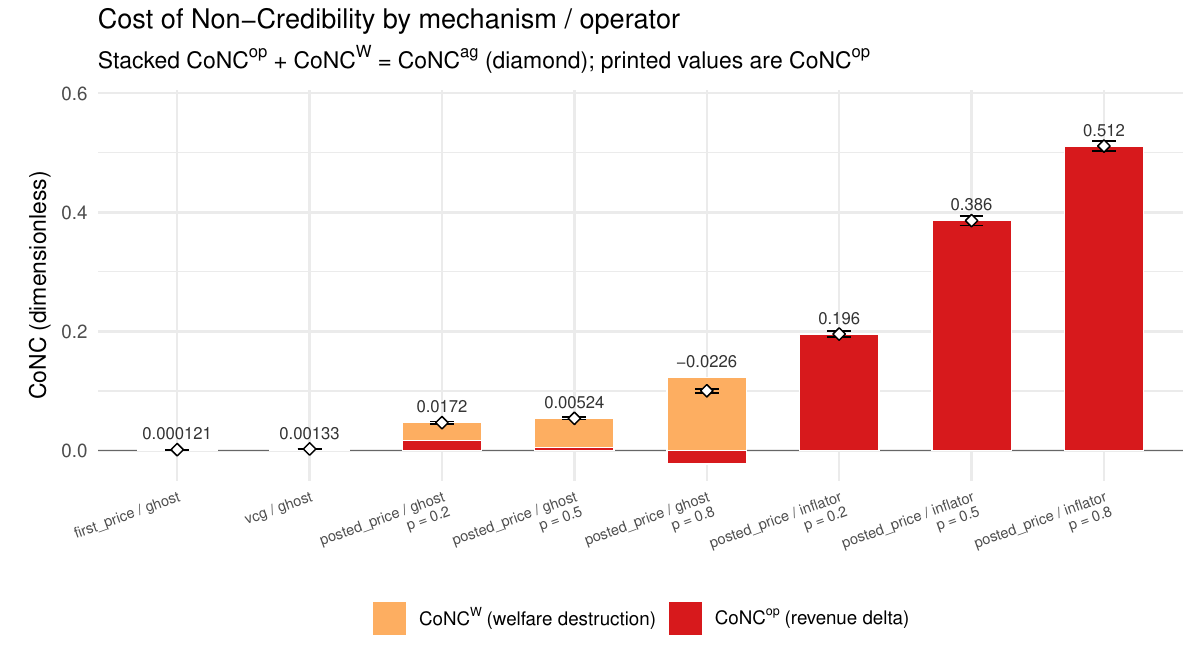}
    \caption{Cost of Non-Credibility by mechanism/operator arm (deviation arms only; each bar pooled over the three DAG topologies). The variants of \cref{eq:conc} are additive, $\mathrm{CoNC}^{\mathrm{ag}} = \mathrm{CoNC}^{\mathrm{op}} + \mathrm{CoNC}^{\mathrm{W}}$, so each bar stacks the collected-revenue delta $\mathrm{CoNC}^{\mathrm{op}}$ and the welfare destruction $\mathrm{CoNC}^{\mathrm{W}}$; the white diamond marks the total $\mathrm{CoNC}^{\mathrm{ag}}$, and the error bar on it is $\pm 1$ standard deviation of $\mathrm{CoNC}^{\mathrm{ag}}$ across the three topologies (at most $0.009$, on the inflator arms). The posted-price arms are shown separately at each $p_{\mathrm{post}} \in \{0.2, 0.5, 0.8\}$, since averaging them would hide a sign reversal in $\mathrm{CoNC}^{\mathrm{op}}$; the zero line and the $\mathrm{CoNC}^{\mathrm{op}}$ values printed above the bars carry the arms whose height is below the resolution of a linear axis. Two readings: (i) the operator's \emph{extraction} is allocation-class-robust, identical on VCG and first-price, but $\mathrm{CoNC}^{\mathrm{op}}$ is not, since first-price also gives up the displaced agent's bid, putting the first-price bar a factor of about $11$ below the VCG bar; (ii) the inflator is a pure transfer with $\mathrm{CoNC}^{\mathrm{W}} = 0$, so its bar and its diamond coincide, whereas every ghost-bid arm destroys welfare alongside the transfer, and the posted-price ghost arm at $p_{\mathrm{post}} = 0.8$ has $\mathrm{CoNC}^{\mathrm{op}} < 0$: the operator still extracts, but collected revenue falls.}
    \Description{Stacked bar chart with eight mechanism/operator arms along the horizontal axis, ordered by total height: first-price with ghost bidder, VCG with ghost bidder, posted-price with ghost bidder at posted prices 0.2, 0.5 and 0.8, and posted-price with inflator at the same three prices. Each bar stacks the collected-revenue delta CoNC-op and the welfare destruction CoNC-W; a white diamond with an error bar marks the total agent-side cost CoNC-ag, and the CoNC-op value is printed above each bar. A horizontal zero line is drawn. The first-price and VCG ghost bars are the two smallest, close to zero; the posted-price ghost bars are larger and carry a welfare-destruction component, and the bar at posted price 0.8 has its CoNC-op component below the zero line while its diamond stays above it; the three inflator bars are the largest and are pure transfers with no welfare-destruction component.}
    \label{fig:expR5-conc}
\end{figure}

\begin{figure}[!t]
    \centering
    \includegraphics[width=\linewidth]{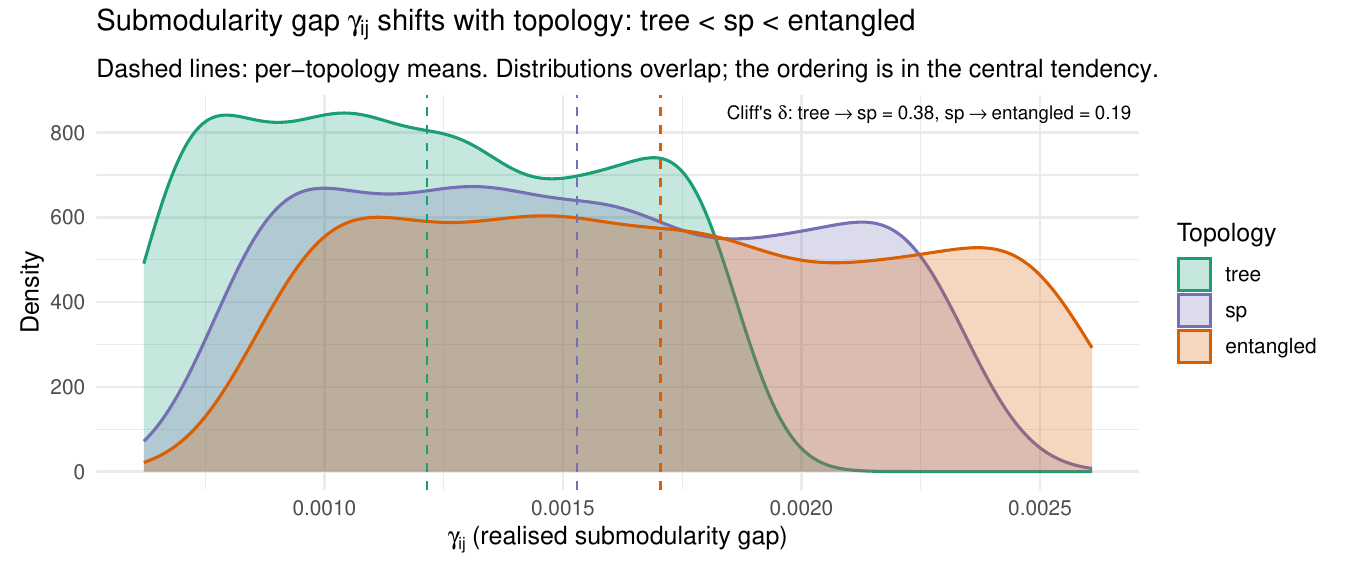}
    \caption{Realised non-modularity gap $\gamma_{ij}$ is topology-monotone: the three kernel densities (shared axis) shift right tree $<$ sp $<$ entangled, with per-topology means (dashed) at $0.0012$ / $0.0015$ / $0.0017$. The distributions overlap substantially (the ordering is a shift in central tendency, not a separation of supports), as quantified by the adjacent-pair Cliff's $\delta$ ($0.38$ tree$\to$sp, $0.19$ sp$\to$entangled). These positive $\gamma_{ij}$ are the quantity that populates the CoNC lower bound of \cref{cor:conc-lb}.}
    \Description{Overlaid kernel densities of gamma_ij for three DAG topologies on a shared axis, showing the distributions shifting rightward from tree (mean 0.0012) through sp (0.0015) to entangled (0.0017), with substantial overlap.}
    \label{fig:expR5-gamma}
\end{figure}

\subsection{Empirical Cost of Non-Credibility}
\label{sec:eval-conc}

\cref{eq:conc} and the agent-side identity beside it define three CoNC variants on the common welfare denominator $\mathbb{E}[W^*]$; the trilemma-illustration ablation populates the operator-side variant for the no-enforcement case. Under sealed-bid VCG with the ghost-bid adversary (Exp.~1), the operator's extraction is $1.74\%$ of expected truthful welfare while $\mathrm{CoNC}^{\mathrm{op}} = 1.72\%$ and $\mathrm{CoNC}^{\mathrm{W}} = 1.58\%$, each the mean over the nine (topology $\times$ load) cells. The two operator-side figures differ by the payment the displaced task would have made; the rest of the paper reports $\mathrm{CoNC}^{\mathrm{op}}$, and the agent-side identity then gives $\mathrm{CoNC}^{\mathrm{ag}} = 3.30\%$. The revenue-normalised operator ratio is reported per cell in the release package: truthful revenue ranges from $0.36$ to $24.4$ across those cells, so a pooled revenue-normalised value would average over incomparable denominators. Broadcast commitment (Exp.~2) drives both to $\le 0$. The empirical CoNC is therefore strictly positive under no enforcement and collapses under broadcast; the constant $c$ of \cref{cor:conc-lb} is not calibrated in the simulator, so no numeric comparison to that bound is made. The agent-side variant $\mathrm{CoNC}^{\mathrm{ag}}$ is populated by Exp.~R-5 (\cref{sec:eval-r5}), which also extends the operator-side reading to first-price and posted-price mechanism classes.

\subsection{Summary}
\label{sec:eval-summary}

Three baseline experiments suffice to exhibit the trilemma empirically: ghost-bid deviations are profitable and undetectable under sealed-bid VCG, extend to the Myerson mechanism, and are closed by broadcast commitment (\cref{thm:trilemma,thm:commitment}, Exps.~1--3). Exp.~R-5 (\cref{sec:eval-r5}) extends the operator-side reading across VCG, first-price, and posted-price mechanism classes, reports the agent-side $\mathrm{CoNC}^{\mathrm{ag}}$ counterpart, and documents the empirical $\gamma_{ij}$ ordering tree $<$ sp $<$ entangled. The empirical CoNC is strictly positive under no enforcement, with the constant $c$ of \cref{cor:conc-lb} uncalibrated, so no numeric comparison to that bound is made.
\section{Discussion}
\label{sec:discussion}

\subsection{Related Work}
\label{sec:related-work}

\paragraph*{Service-oriented architecture, BPM, and QoS composition.} The construction inherits its publish\slash discover\slash bind\slash invoke skeleton from service-oriented architecture~\cite{papazoglou2007soa,curbera2003bpel,oasis2007bpel}. The polymatroidal allocation generalises QoS-aware service selection~\cite{cardoso2004qos,casati2003ide}: rather than choose one service path under multi-dimensional QoS constraints, the marketplace composes multiple agent-task assignments under shared capacity, with the slice's max-flow as the aggregate QoS handle. Trust-management work for service marketplaces~\cite{li2007trust} relies on soft enforcement and reputation; our analysis delineates when soft governance suffices (modular settings) and when explicit commitment or structural separation is required (non-modular polymatroidal markets).

\paragraph*{Credible mechanism design.} Akbarpour and Li~\cite{akbarpour2020credible} introduced the credibility trilemma for single-item auctions; Ferreira and Weinberg~\cite{ferreira2020credible} and Chitra et al.~\cite{chitra2024credible} resolved it via blockchain and public broadcast respectively, and Ganesh and Zhang~\cite{ec2025matroid} extended the resolution to matroid feasibility while proving DRA non-credibility beyond. We connect this line to polymatroidal feasibility arising from service-dependency DAGs: integrator encapsulation reduces cross-domain allocation to matroid feasibility (Level~1, enabling DRA), while within-domain allocation relies on domain separation or ascending auctions (Level~2). We build on~\cite{loven2026realtime} (hybrid market model, encapsulation, value-greedy benchmark), addressing its open question of modelling the operator as a strategic player.

\paragraph*{Concurrent and independent work on edge-priced capacity sharing.}
Amin et al.~\cite{amin2026market} (ACM TEAC 14(1), Art.~2) study a single source--sink network with integer edge capacities, affine valuations, and homogeneous coalition disutility under a faithful mediator. They establish, for series--parallel topologies, a Walrasian--VCG baseline (Theorem~3.2: existence and polynomial-time computability of an integer equilibrium; Lemma~3.8: gross-substitutes structure of the augmented value function; Theorem~3.10: a particular equilibrium that maximises agent utilities and minimises total edge prices, equivalent to VCG payments) and exhibit, on non-series--parallel topologies (Examples~3.3--3.4), an LP integrality gap that obstructs equilibrium existence. This study is concurrent and independent of the present paper: both papers belong to the same publication generation, neither derives from the other, and they appear in the same journal. Structurally, their faithful-mediator Walrasian--VCG baseline is the structural counterpart over which our credibility analysis perturbs, and their LP integrality gap on non-SP topologies plays the role of $\gamma_{ij}$ (the non-modularity gap) at the instance level. \cref{app:appendix-z} proves the trilemma as an instance over their primitives and prices the domain-separation channel at the instance level, with the bridging device being the sub-Lipschitz bound $\kappa$ on the remittance schedule's mediator marginal (\cref{lem:amin-mediator-regime}), and reads the mediator's deviation surplus there as a CoNC-type revenue increment without transferring the $\gamma_{ij}$-structural identification. The relationship is one of peers, not of foundation: their model assumes a faithful platform and a static optimisation viewpoint, so the strategic-operator content (the trilemma, the CoNC, and the resolution mechanisms developed here) is not derivable from their framework. On their primitives alone, only the Walrasian--VCG baseline that our analysis perturbs survives; the trilemma motivates that baseline as inadequate.

\paragraph*{Trusted execution and oracle-mediated auctions.} Authenticated-oracle and TEE frameworks~\cite{zhang2016towncrier} realise the (B1) operator-independence assumption of \cref{thm:commitment} through hardware attestation rather than network-layer separation; the trade-off they make is trust root against audit latency.

\paragraph*{Market-based allocation and slicing.} Classical market-based cloud and grid allocation~\cite{weinhardt2009cloud} assumes a trusted auctioneer, and network slicing~\cite{afolabi2018slicing,sciancalepore2019slicebrokering} assumes a trusted infrastructure provider; our integrator model generalises slice brokering to strategic-broker settings, and the trilemma applies whenever per-tenant capacity allocations are non-modular. Agentic-AI service composition~\cite{deng2025agenticservicescomputing} supplies the deployment setting without formal incentive guarantees.

\paragraph*{Two-sided platform economics.} The slice marketplace shares structural features with two-sided platform markets: the integrator layer connects resource providers (one side) to service-consuming agents (the other), and the operator captures value at the interface. The classical platform-economics literature~\cite{rochet2003platform,armstrong2006competition} identifies cross-side externalities and pricing interdependencies as the defining features of such markets; multi-homing and competitive tipping are central welfare concerns~\cite{caillaud2003chicken,weyl2010price}. Post-2010 work refines this picture: Hagiu, Teh, and Wright~\cite{hagiu2022should} formalise the marketplace-vs-reseller tradeoff that our trilemma cuts across. \cref{thm:trilemma} can be read as the formal Pareto frontier of the marketplace-vs-reseller tradeoff in the Hagiu-Wright sense: a platform that wants to be revenue-optimal DSIC and trustworthy in execution must choose two of the three; the structural separation prescriptions of \cref{thm:commitment} and \cref{prop:domain-separation} correspond to two distinct points on that frontier. Integrator-side collusion is an attack surface orthogonal to operator credibility, and the orthogonality decomposition of \cref{thm:competition-credibility-orthogonality} treats it as a violation of (O1). The present paper's credibility problem is orthogonal to but compatible with this literature: even holding the platform's pricing structure fixed, the operator's ability to deviate undetectably in executing the allocation mechanism introduces a layer of strategic risk absent from classical two-sided models that assume a benign intermediary. Domain separation (\cref{prop:domain-separation}) is, in this reading, the structural analogue of the financial-exchange principle: exchange operators charge transaction fees without holding proprietary positions, in order to eliminate the alignment problem that the credibility trilemma formalises. The one-sided Salop baseline of \cref{prop:competition} omits cross-side externalities and multi-homing; these effects can strengthen or weaken the $1/k$ markup discipline depending on whether network effects are internalised, and the omission is conservative, in that both can only reduce effective markups below $t/k$ in deployments with a rich provider base.

\paragraph*{Commercial vertically integrated platforms.} Commercial platforms supply natural test instances of (C0)--(C4) being partially satisfied or broken by integration. \emph{AWS Marketplace} and the \emph{Apple App Store} unite roles~2 and~3 of \cref{def:role-taxonomy} (the platform owns the cloud capacity or app-distribution channel \emph{and} runs the allocation/ranking auction), violating (C1) ownership-stake and creating exactly the dual-role conflict to which \cref{thm:trilemma} binds; observed harms (self-preferencing in App Store search ranking, AWS preferring its own first-party services in marketplace placement; the broader pattern documented in Khan~\cite{khan2017amazon} and Wu~\cite{wu2018curse}) are empirical instantiations of the perturbation-lemma payment inflation channel. \emph{EU Digital Markets Act} (DMA) gatekeeper rules and \emph{US Reg NMS} order-protection in equity markets are regulatory devices that retrofit (C0)--(C2) onto platforms that would otherwise violate them: DMA's data-access and self-preferencing prohibitions are partial substitutes for (C1), while Reg NMS's trade-through protection enforces a settlement-separation analog of (C0) on lit exchanges. \emph{Structural separation} regimes such as CCP / EMIR for derivatives clearing operate at the (C0)--(C1) layer (central counterparties cannot hold proprietary positions in the cleared products), while \emph{behavioural rules} such as DMA Article 6.5 and Reg NMS Rule 611 operate at the (C4) layer (no fee on undelivered or trade-through-routed allocation); the two governance modes are not interchangeable, and the trilemma flags structural separation as the decisive fix where it is feasible. The \emph{FTC} has pursued cases against vertically integrated platforms on theories that align with the trilemma's structural prediction (the platform's allocation discretion plus its ownership stake permit profitable, hard-to-detect deviation), although the legal vocabulary is anti-trust rather than mechanism-design. The pattern is consistent: platforms whose institutional design is intended to satisfy (C0)--(C4) (FCC spectrum auctions~\cite{hazlett2017political}, ERCOT, NRMP) approximate credibility in practice with varying empirical track records under capture pressure; where vertical integration breaks (C0)--(C2), regulators step in to retrofit the missing conditions, consistent with the impossibility result the trilemma formalises. We emphasise \emph{intended} rather than ``by construction'' satisfaction: the political-economy literature (next paragraph) documents that even FCC, ERCOT, and NRMP have faced capture, rent-seeking, or rule-circumvention pressures over their multi-decade histories. Settlement separation has institutional costs (lost float on payment custody; loss of cross-subsidy from data captured at the matching layer; the operational complexity of a third-party clearing relationship); the credibility-vs-revenue trade-off this paper formalises is the structural counterpart of those institutional costs.

\paragraph*{Regulatory capture and the political economy of (C0)--(C4).}
(C0)--(C4) must be designed, monitored, and defended against capture pressure, and the empirical track record of institutional separation is variable~\cite{stigler1971theory}. The knife-edge of \cref{cor:knife-edge} sharpens the prescriptive content: because credibility fails at every $\lambda>0$, a regulator's defence of (C1) is binary rather than a matter of degree.

\subsection{Practical Implications}
\label{sec:practical}

Four implications follow. Marketplace neutrality is a first-order architectural requirement (\cref{thm:trilemma}), and domain separation (\cref{prop:domain-separation}) is the natural default under (C0)--(C4). The choice of credibility mechanism is latency-dependent: broadcast commitment for ${<}200$~ms cycles, blockchain DRA only for batch settlement; competition and credibility are structurally orthogonal under disjoint actors (\cref{thm:competition-credibility-orthogonality}; empirical exercise of the decomposition is deferred to follow-up work). Polymatroidal structure and DSIC are design-time properties, leaving credibility as the only runtime link. Beyond the computing continuum, the same impossibility applies to spectrum auctions with sub-modular interference constraints and to multi-tenant data-centre resource markets; the architectural recommendation to separate the marketplace from the assets it trades carries to those settings without modification.

\paragraph*{Observability of $\Gamma$.} The CoNC lower bound of \cref{cor:conc-lb} is parametrised by the per-pair non-modularity gap $\gamma_{ij}$, and a deployment-wide reading aggregates these into $\Gamma = \sum_{(i,j)}\gamma_{ij}$. Both are operator-private: they are fixed by the internal capacity topology of the polymatroid, known to the operator but not in general to a regulator or a deployer, whose observables are the marketplace size $n$ and the bid scale $\bar v$. Calibrating a bound that depends on $\gamma_{ij}$, or on $\Gamma$, therefore requires either a disclosure obligation on capacity topology or a conservative upper bound derived from network-size parameters alone. We flag the observability gap as a deployment limitation.

\paragraph*{Resolution-cost comparison.}
\Cref{tab:resolution-cost-comparison} summarises the two credibility resolutions of the trilemma (commitment via broadcast or DRA, and domain separation) together with the orthogonal governance layer of integrator competition, along four axes that determine deployability: institutional precondition, capture vulnerability, runtime/audit cost, and representative real-world instance. The table is descriptive rather than prescriptive, and a real deployment typically mixes the three. The capture-vulnerability column collapses the knife-edge / (C1) / (C4) review of the regulatory-capture paragraph of \cref{sec:related-work}; the institutional-precondition column collapses (B0)--(B2$^{\prime}$) for commitment, (C0)--(C4) for domain separation, and (O1)/(O4) for competition-orthogonality.

\begin{table}[!t]
\centering
\caption{Resolution- and governance-cost comparison across the two credibility resolutions and the orthogonal governance layer: per-row institutional precondition, capture vulnerability (relative to the political-economy review of \cref{sec:related-work}), runtime / audit cost order, and a representative real-world instance.}
\label{tab:resolution-cost-comparison}
\small
\renewcommand{\arraystretch}{1.15}
\setlength{\tabcolsep}{4pt}
\begin{tabular}{@{}p{1.9cm}p{3.0cm}p{2.6cm}p{2.6cm}p{2.6cm}@{}}
\toprule
\textbf{Resolution} & \textbf{Institutional precondition} & \textbf{Capture vulnerability} & \textbf{Runtime / audit cost} & \textbf{Representative instance} \\
\midrule
Commitment (broadcast / DRA) & (B0)--(B2$^{\prime}$) on a causal-broadcast substrate independent of the operator (\cref{thm:commitment}) & Low if substrate independence holds; high if substrate is operator-controlled & Per-round broadcast + on-path rank-value authentication (TEE / Merkle / ZK) & Ledger- or TEE-attested auction transcripts \\
Domain separation & (C0)--(C4) jointly enforced; settlement separation; no ownership stake; no fee on undelivered (\cref{prop:domain-separation}) & High on (C1) ownership-stake and (C4) fee-on-undelivered (incumbent lobbying, boundary definition) & Settlement custody handoff; quarterly C0--C4 compliance audit & FCC spectrum incentive auctions; ERCOT; NRMP; central-counterparty clearing; lit-exchange Reg NMS \\
Integrator competition (orthogonality) & (O1) disjoint actors + (O4) IR slack at the Salop boundary (\cref{prop:competition}, \cref{thm:competition-credibility-orthogonality}) & Medium on (O1) (vertical-integration pressure breaks disjoint-actors); low on (O4) & Continuous: same-side competition discipline; no per-round protocol cost & AWS Marketplace boundary; App Store / DMA \\
\bottomrule
\end{tabular}
\end{table}

\paragraph*{Deployability of domain separation: (C0)--(C4) in practice.} \Cref{prop:domain-separation} establishes that domain separation under conditions (C0)--(C4) provides a credibility guarantee through revenue-channel separation. The conditions are demanding: (C0) requires settlement to be operationally independent of the allocation computation; (C1)--(C2) mandate no ownership stake in the allocated resources and no affiliation with any participating agent; (C3) requires positive agent valuations and binding capacity; and (C4) requires the operator to collect no fee on undelivered allocation. The ``natural default'' reading of \cref{sec:practical} is therefore conditional on institutional context. Platforms that are \emph{designed} to satisfy (C0)--(C4), with varying empirical track records under capture pressure, include: \emph{FCC spectrum incentive auctions}~\cite{milgrom2020clock}, where statutory separation of the FCC's allocation function from broadcasters' interests and the absence of any ownership stake in the licensed bands are intended to satisfy (C0)--(C4) (the institutional history, including the long-running capture critique of Hazlett~\cite{hazlett2017political}, qualifies any ``by construction'' reading); and \emph{centralised clearinghouses} such as the ERCOT electricity market operator or NRMP medical-residency matching, where an independent non-profit entity runs the allocation mechanism without holding the matched goods (NRMP is closest to the ideal case; ERCOT's record under the 2021 Texas winter-storm market design failures qualifies its empirical reading). Platforms that \emph{do not} satisfy (C0)--(C4) include: \emph{cloud federation brokers} that own cloud capacity while operating the inter-cloud allocation auction (violating C1 via ownership stake and C0 via payment custody); \emph{O-RAN slicing controllers} that hold licensed spectrum capacity while running the slice-allocation mechanism (violating C1 via an ownership stake in the allocated spectrum); and \emph{integrated edge marketplaces} in which the same entity performs inference serving (owns the resource) and runs the bid evaluation (operates the mechanism). For these cases, (C0)--(C4) are not met by default: the commitment-based (\cref{thm:commitment}) or competition-based (\cref{prop:competition}) alternatives are required, and a two-tier deployment architecture combining them is the natural integration target. (C0) is the financial-exchange design principle: the venue matches and charges fees, and does not hold the traded asset on its own balance sheet. Without it, an operator collecting gross agent payments en route to resource owners can inflate them by the perturbation of \cref{lem:perturbation} and pad its balance before remittance.

\paragraph*{Theoretical positioning.} \Cref{thm:trilemma} and \cref{prop:domain-separation} admit instance-level statements over the edge-pricing market of~\cite{amin2026market}, where the mediator's deviation surplus reads as a CoNC-type revenue increment but the $\gamma_{ij}$-structural identification of \cref{cor:conc-lb} does not transfer; \cref{thm:commitment} and the two-tier architecture rely on the broader hybrid framework of \cite{loven2026realtime} (multi-source DAG, multi-output integrator slices, matroid encapsulation).

\subsection{Limitations}
\label{sec:limitations}

The scope conditions of the formal results are those of \cref{rem:scope-consolidated}. Operative limitations: \emph{collusion} (multi-operator and agent--operator coalitions) is out of scope; a partial sketch is that two of the four operator deviations of \cref{sec:bg-gap} (capacity misreporting, selective information revelation) admit coalitions of size~2 without changing the trilemma argument, while price-manipulation coalitions require simultaneous Archer--Tardos perturbations on disjoint sharing pairs; \emph{adversarial broadcast jamming} relaxes (B1) and breaks \cref{thm:commitment}(i) but not \cref{thm:commitment}(ii) (DRA tolerates broadcast failures via the deposit mechanism); \cref{thm:commitment}(ii) requires protocol--pipeline co-design under hard ${<}200$~ms deadlines. The knife-edge of \cref{cor:knife-edge} is sharp in the ownership stake. A deployment that cannot reach $\lambda=0$ must therefore obtain credibility through commitment (\cref{thm:commitment}) rather than through separation.

\subsection{Future Directions}
\label{sec:future}

Two extensions follow directly from the results above. \emph{Coalition credibility}: \cref{lem:perturbation} applies at one capacity-sharing pair at a time, and simultaneous perturbations on disjoint sharing pairs are the natural multi-operator analogue; the increments of two applications at pairs contesting a shared saturated element do not add, so a coalition bound is not obtained by summation. \emph{Multi-parameter types}: the two structural obstructions named in \cref{rem:multi-d} --- Jehiel--Moldovanu integrability and Bikhchandani weak monotonicity --- are what a multi-dimensional trilemma would have to clear.
\section{Conclusion}
\label{sec:conclusion}

We have identified and addressed the \emph{credibility gap} in mechanism-mediated service markets. Because the market mechanism's contribution under polymatroidal feasibility is purely incentive-theoretic~\cite{loven2026realtime}, faithful operator execution carries the welfare guarantees, and that assumption is untenable when the operator simultaneously controls resources and runs the auction. We formalised this as a credible mechanism design problem, distinguishing realisation-wise credibility from ex-ante $\varepsilon$-credibility (\cref{def:credible}), proved a trilemma between revenue optimality, agent DSIC, and operator credibility (\cref{thm:trilemma}), and showed that two credibility resolutions, commitment devices (\cref{thm:commitment}) and administrative domain separation under settlement separation and side conditions (\cref{prop:domain-separation}), address it via structurally distinct attack surfaces, while integrator competition (\cref{prop:competition}) supplies an orthogonal governance layer rather than a third resolution.

The Cost of Non-Credibility is the quantitative counterpart of this impossibility: the per-pair lower bound of \cref{cor:conc-lb} prices the perturbation deviation in the non-modularity gap. The trilemma-illustration ablation reports an empirical $1.58\%$ welfare loss ($\mathrm{CoNC}^{\mathrm{W}}$) and a $1.72\%$ operator-side cost ($\mathrm{CoNC}^{\mathrm{op}}$), both as fractions of expected truthful welfare, under sealed-bid VCG and no enforcement (\cref{sec:eval-conc}, Exp.~1), and both driven to $\le 0$ by broadcast commitment. Under disjoint operator and integrator entities and IR-slack, an orthogonality decomposition (\cref{thm:competition-credibility-orthogonality}) shows that the credibility-design parameter $\lambda$ (ownership stake) and the integrator-side competition parameters $(t, k)$ enter the surplus-extraction-rate functional additively, with $\mathcal{L}_{\mathrm{Salop}}$ read as a consumer-surplus transfer rather than as a deadweight loss in the (O4) regime.

\cref{app:appendix-z} establishes the trilemma as an instance over the independently published edge-pricing market of~\cite{amin2026market} and prices the domain-separation channel at the instance level, with the bridging device (\cref{lem:amin-mediator-regime}) being the sub-Lipschitz bound $\kappa$ on the remittance schedule's mediator marginal; the mediator's deviation surplus there supplies a CoNC-type revenue-increment reading, though not the $\gamma_{ij}$-structural identification of the CoNC bound. The economic content therefore carries over to a refereed external setting, not only within our framework.

The credibility gap is therefore an architectural design problem rather than an implementation detail. The trilemma identifies which resolutions work; converting them into deployment parameters is follow-up work. Open questions in the present paper itself (collusion, multi-dimensional types, hidden-state operators) are mapped to extensions in \cref{sec:future}.

\begin{acks}
This work was supported by the Research Council of Finland through the 6G Flagship program (grant 318927) and the CO2CREATION Strategic Research Council project (grant 372355), by the EC through HEU NEUROCLIMA project (GA 101137711) as well as the ERDF (project numbers A81568, A91867), and by the Business Finland through the Neural pub/sub research project (diary number 8754/31/2022). Moreover, Dustdar's work and equipment have been supported by CNS2023-144359 financed by MICIU/AEI/10.13039/501100011033 and the European Union NextGeneration EU/PRTR.
\end{acks}

\bibliographystyle{ACM-Reference-Format}
\bibliography{bib/refs}

\section*{Data and Code Availability Statement}
\addcontentsline{toc}{section}{Data and Code Availability Statement}
The simulator and analysis code, the \texttt{targets} pipeline configuration (\texttt{\_targets.R}) that generates the three trilemma-illustration experiments of \cref{sec:eval-trilemma}, the configuration and driver script of Experiment~R-5 (\cref{sec:eval-r5}), the five random seeds shared by all four experiments (\texttt{[1, 2, 3, 4, 5]}, generated via \texttt{seq\_len(5)}), the exported raw experimental outputs (ten gzipped CSV files under \texttt{exports/}), and the figure-generation scripts are released under the MIT licence at \url{https://github.com/Future-Computing-Group/credible-marketplace-sim} (tag \texttt{v2.0-teac-2a-v2}). The release is archived at Zenodo, concept DOI \url{https://doi.org/10.5281/zenodo.20394158}, which resolves to the latest version; this manuscript corresponds to version DOI \url{https://doi.org/10.5281/zenodo.22045882}.

\appendix
\section{Trilemma Instance over the Edge-Pricing Market of Amin et al.~(2026)}
\label{app:appendix-z}

\subsection{Definitions and Bridging Lemmas for the Amin Instance}
\label{app:appendix-z-theorem-z1}

Throughout this appendix, the parenthesised numerals~(5)--(8) refer to the four equilibrium conditions of Amin et al.~\cite{amin2026market}, Definition~2.1: individual rationality~(5), stability~(6), budget balance~(7a)--(7b), and market clearing~(8). Equations displayed in this appendix carry their own numbers and are always cited by an explicit cross-reference.

\begin{definition}[Strategic mediator in the edge-pricing market]
\label{def:amin-deviation}
Let $G=(V,E)$ be the single source--single sink network of Amin et al.~\cite{amin2026market} with integer edge capacities $(q_e)_{e\in E}$ and time-costs $(d_e)_{e\in E}$, agent set $M$ with affine valuation parameters $(\alpha_m,\beta_m,\Delta\alpha_m(\cdot),\Delta\beta_m(\cdot))_{m\in M}$ as in Amin's Eq.~(1), and let $(x^*,p^\dagger,\tau^\dagger)$ be the VCG-equivalent market equilibrium of Amin's Theorem~3.10. A \emph{strategic mediator deviation} is a mapping
\[
\delta:\bigl(\alpha_m,\beta_m,\Delta\alpha_m,\Delta\beta_m,q_e,d_e\bigr)_{m\in M,\,e\in E}\;\longmapsto\;(\tau',p',x')
\]
such that $(\tau',p',x') \neq (\tau^\dagger,p^\dagger,x^*)$ and the deviation is \emph{undetectable}: for every agent $m \in M$ that is allocated a trip ($x'_{r,m}=1$ for some $r$ that we denote $r'_m$), the tuple
\[
\bigl(x'_m,\;p'_m,\;(\tau'_e)_{e\in r'_m}\bigr)
\]
is consistent with a legitimate execution of the prescribed mechanism (Amin's Theorem~3.10) under some alternative preference profile of the other agents, i.e., there exists $(\tilde\alpha_{-m},\tilde\beta_{-m},\widetilde{\Delta\alpha}_{-m},\widetilde{\Delta\beta}_{-m})$ such that, applied to the prescribed mechanism with $m$'s reported parameters held fixed, the legitimate run delivers exactly $(x'_m,p'_m,(\tau'_e)_{e\in r'_m})$ to agent $m$. Edges $e \notin r'_m$ are not part of $m$'s information set. Throughout, the prescribed mechanism is instantiated by the candidate under test; on the events used in \cref{thm:amin-trilemma-instance} every candidate considered there coincides with the Theorem-3.10 selection, so the two readings agree.
\end{definition}

\begin{remark}[Information model]
\label{rem:amin-info-model}
\cref{def:amin-deviation} transposes the static sealed-bid information model of \cref{def:deviation,def:static-sealed-bid} to Amin's primitives, with agent $m$'s information set the tuple displayed there and nothing off its allocated route. We adopt Amin's coalition-trip allocation primitive directly, so undetectability is vacuous for non-allocated agents: they observe only $x'_m=0$, no payment and no route prices.
\end{remark}

\begin{definition}[Credible market equilibrium in the edge-pricing market]
\label{def:amin-credible}
Fix a remittance schedule $c=(c_e)_{e\in E}$ where $c_e:\mathbb{R}_{\ge 0}\to\mathbb{R}_{\ge 0}$ is a publicly known mapping from announced edge prices to the per-edge payout to edge owners. The mediator collects the gross agent payments $\sum_{m\in M}p'_m$ on each round and remits $\sum_{e\in E} c_e(\tau'_e)$ to edge owners, retaining the spread
\[
R(\delta;c)\;\;\triangleq\;\;\sum_{m\in M}p'_m\;-\;\sum_{e\in E}\, c_e(\tau'_e).
\]
A market equilibrium $(x^*,p^\dagger,\tau^\dagger)$ of Amin's Definition~2.1 is \emph{credible under remittance schedule $c$} if no strategic mediator deviation $\delta$ satisfies $R(\delta;c) > R(\mathrm{id};c)$ for any preference realisation in the support of the prior, where $R(\mathrm{id};c)$ denotes mediator revenue under faithful execution of the prescribed VCG-equivalent equilibrium.
\end{definition}

\begin{remark}[The remittance schedule is contractual, not derived from Amin's equilibrium definition]
\label{rem:amin-remittance-contractual}
Amin's budget-balance conditions~(7a)--(7b) require, on each realised trip $(b,r)$ with $x^*_r(b)=1$, that the coalition's payments cover the route's edge prices: $\sum_{m\in b}p^*_m=\sum_{e\in r}\tau^*_e$. This is a per-trip flow-balance identity in the agent-to-mechanism direction. Amin's Definition~2.1 is silent on the \emph{destination} of payments: it does not specify whether the mediator forwards each round's announced $\tau'_e \cdot |\{m: e\in r'_m\}|$ to edge owners, or whether edge owners receive a contracted payout fixed in advance and the mediator retains any spread. The credibility literature, by contrast, makes this ledger explicit. Akbarpour and Li~\cite{akbarpour2020credible} model the operator's revenue as gross collected payments minus contracted procurement cost; the operator's deviation incentive is exactly the spread between agent inflow and committed outflow. We therefore adopt the schedule $c_e(\cdot)$ as a primitive of the credibility regime, not derivable from Amin's equilibrium conditions, and we test credibility against it.
\end{remark}

The next lemma classifies remittance schedules by whether the trilemma binds. It identifies the canonical \emph{settlement-separated} regime as a degenerate case in which the perturbation deviation's spread-extraction channel closes (the Amin-instance counterpart of (C0) in \cref{prop:domain-separation} of the main paper), without a credibility claim over the full deviation class of \cref{def:amin-deviation}, and a wide class of \emph{sub-Lipschitz} regimes for which the trilemma binds.

\begin{lemma}[Mediator-revenue regime classification]
\label{lem:amin-mediator-regime}
Fix the prescribed VCG-equivalent equilibrium $(x^*,p^\dagger,\tau^\dagger)$, let $e^*\in E$ be a saturated edge, and write $N_{\mathrm{tr}}(e^*)\triangleq|\{(b,r):x^*_r(b)=1,\,e^*\in r\}|=q_{e^*}$ for its trip count.\footnote{The trip count is not the agent count $N_{\mathrm{ag}}(e^*)=|\{m\in M:e^*\in r^*_m\}|$: the two coincide exactly in the singleton-coalition regime $|b^*|=1$ for every realised trip, generic under Amin's homogeneous-disutility assumption since $\Delta\alpha(1)=\Delta\beta(1)=0$ implies coalitions form only when sharing is strictly preferred, and differ when multi-agent coalitions are allocated. The mediator-revenue algebra below sums per-agent payment increments within a coalition and then over trips, so the aggregate is $N_{\mathrm{tr}}(e^*)\cdot\varepsilon$ via the per-trip budget balance of Amin's condition~(7a) in Step~3 of \cref{lem:amin-perturbation}. We use the trip-count convention throughout this appendix.}

Let $\delta_\varepsilon$ be the edge-price perturbation $\tau'_{e^*}=\tau^\dagger_{e^*}+\varepsilon$, $\tau'_e=\tau^\dagger_e$ for $e\neq e^*$, with budget-balance-preserving payment update at the trip level $\sum_{m\in b^*}p'_m=\sum_{m\in b^*}p^\dagger_m+\varepsilon$ on every realised trip $(b^*,r^*)$ with $e^*\in r^*$, as constructed in Lemma~\ref{lem:amin-perturbation} below (the per-agent update is $p'_m=p^\dagger_m+\varepsilon/|b^*|$ within a multi-agent coalition $b^*$, reducing to $p^\dagger_m+\varepsilon$ for singletons). Define the \emph{mediator marginal} of the schedule on $e^*$,
\[
\Delta_{e^*}(\varepsilon;c)\;\triangleq\;\bigl[R(\delta_\varepsilon;c)-R(\mathrm{id};c)\bigr]\big/\varepsilon \;=\; N_{\mathrm{tr}}(e^*)\;-\;\bigl[c_{e^*}(\tau^\dagger_{e^*}+\varepsilon)-c_{e^*}(\tau^\dagger_{e^*})\bigr]\big/\varepsilon.
\]
\begin{enumerate}
\item[(a)] \emph{(Trilemma-binding class.)} If $c$ is \emph{sub-Lipschitz at $\tau^\dagger_{e^*}$ relative to flow $N_{\mathrm{tr}}(e^*)$}, i.e., there exist $\varepsilon_0>0$ and a constant $\kappa<N_{\mathrm{tr}}(e^*)$ such that
\[
c_{e^*}(\tau^\dagger_{e^*}+\varepsilon)-c_{e^*}(\tau^\dagger_{e^*})\;\le\;\kappa\,\varepsilon\qquad \forall\,\varepsilon\in(0,\varepsilon_0],
\]
then $\Delta_{e^*}(\varepsilon;c)\ge N_{\mathrm{tr}}(e^*)-\kappa>0$, so the perturbation deviation $\delta_\varepsilon$ strictly increases mediator revenue for every $\varepsilon\in(0,\varepsilon_0]$. In particular, the \emph{fixed-remittance schedule} $c_e^{\mathrm{fix}}(\tau'_e)\equiv c_e(\tau^\dagger_e)$ is sub-Lipschitz with $\kappa=0$, and the \emph{capped-remittance schedule} $c_e^{\mathrm{cap}}(\tau'_e)\equiv\min(\tau'_e,\tau^\dagger_e)\cdot N_{\mathrm{tr}}(e)$ is sub-Lipschitz with $\kappa=0$ on the upward direction.
\item[(b)] \emph{(Credible (degenerate) class.)} If $c$ is the \emph{canonical settlement-separated schedule} $c_e^{\mathrm{sep}}(\tau'_e)=\tau'_e\cdot N_{\mathrm{tr}}(e)$, then $\Delta_{e^*}(\varepsilon;c^{\mathrm{sep}})=N_{\mathrm{tr}}(e^*)-N_{\mathrm{tr}}(e^*)=0$ for all $\varepsilon$, so the spread-extraction channel of the perturbation deviation closes. This is the Amin-instance counterpart of hypothesis~(C0) of \cref{prop:domain-separation}: the announced edge price flows directly through to edge owners on each trip, leaving the mediator's books unchanged.
\end{enumerate}
\end{lemma}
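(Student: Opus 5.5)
The argument is essentially an accounting identity on the mediator's ledger, with no strategic content beyond the definition of $R(\delta;c)$ in \cref{def:amin-credible}. First I compute $R(\delta_\varepsilon;c)-R(\mathrm{id};c)$ directly from that definition. The deviation $\delta_\varepsilon$ changes only two things: the announced price on $e^*$, and the payments of agents on trips that traverse $e^*$. Hence the remittance difference is supported on the single term $c_{e^*}(\tau^\dagger_{e^*}+\varepsilon)-c_{e^*}(\tau^\dagger_{e^*})$. For the gross-payment difference, I sum the trip-level budget-balance update $\sum_{m\in b^*}p'_m=\sum_{m\in b^*}p^\dagger_m+\varepsilon$ over the realised trips $(b^*,r^*)$ with $e^*\in r^*$. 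Every other trip's coalition keeps its payments, since its route prices are unchanged. The sum gives exactly $N_{\mathrm{tr}}(e^*)\cdot\varepsilon$. The one point I must justify here is $N_{\mathrm{tr}}(e^*)=q_{e^*}$. This follows from saturation of $e^*$ together with Amin's market-clearing condition~(8) and unit flow per allocated trip. I would state it and note that only the trip count, not the agent count, enters, as the footnote stresses. Dividing by $\varepsilon$ yields the displayed formula for $\Delta_{e^*}(\varepsilon;c)$.

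Part~(a) then follows in one line. Under the sub-Lipschitz hypothesis, the bracketed remittance increment divided by $\varepsilon$ is at most $\kappa$ for $\varepsilon\in(0,\varepsilon_0]$. Therefore $\Delta_{e^*}\ge N_{\mathrm{tr}}(e^*)-\kappa>0$, and multiplying by $\varepsilon>0$ gives strict revenue gain. For the two named schedules, I verify the hypothesis directly. The fixed schedule has increment identically $0$. The capped schedule satisfies $\min(\tau^\dagger_{e^*}+\varepsilon,\tau^\dagger_{e^*})=\tau^\dagger_{e^*}$ for $\varepsilon>0$, so its increment is $0$ as well. Both thus have $\kappa=0<N_{\mathrm{tr}}(e^*)$, and $N_{\mathrm{tr}}(e^*)=q_{e^*}\ge 1$ because the edge is saturated with positive integer capacity. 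This part states a revenue inequality only; undetectability of $\delta_\varepsilon$ is deferred to \cref{lem:amin-perturbation}, and I would say so explicitly.

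For part~(b), I substitute $c^{\mathrm{sep}}_{e^*}$. The remittance increment is $\bigl[(\tau^\dagger_{e^*}+\varepsilon)-\tau^\dagger_{e^*}\bigr]N_{\mathrm{tr}}(e^*)=\varepsilon N_{\mathrm{tr}}(e^*)$, which exactly cancels the payment increment, so $\Delta_{e^*}\equiv 0$ for every $\varepsilon$. I then add the interpretive sentence matching (C0) and the caveat that this closes only the perturbation channel, not every deviation in \cref{def:amin-deviation}.

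The main obstacle is the first step's payment bookkeeping. I need to know that the trip-level update is consistent with the per-agent update $p'_m=p^\dagger_m+\varepsilon/|b^*|$. I also need to rule out that the perturbation reroutes any trip or changes any other coalition's payments, which would add terms off $e^*$. I would handle this by taking the allocation $x'=x^*$ as fixed by construction of $\delta_\varepsilon$, referring forward to Step~3 of \cref{lem:amin-perturbation} for the per-coalition budget balance. With the allocation fixed, the ledger difference is linear in the edge-$e^*$ trips only.
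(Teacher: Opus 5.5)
Your proposal is correct and follows the paper's own proof essentially step for step. Like the paper, you treat the lemma as direct ledger algebra: you sum the trip-level budget-balance update of Step~3 of \cref{lem:amin-perturbation} to get a gross-inflow increment of $\varepsilon\cdot N_{\mathrm{tr}}(e^*)$, confine the remittance change to the single $e^*$ term, and then substitute the sub-Lipschitz bound for~(a) and $c^{\mathrm{sep}}$ for~(b). One small correction: $N_{\mathrm{tr}}(e^*)=q_{e^*}$ follows from saturation alone (binding capacity with $x^*_r(b)\in\{0,1\}$) and does not need market clearing~(8), which constrains prices rather than flows; your explicit check that $q_{e^*}\ge 1$, so that $\kappa=0<N_{\mathrm{tr}}(e^*)$, is a point the paper leaves implicit.
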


\begin{proof}
Both parts are direct algebra. By the construction of Lemma~\ref{lem:amin-perturbation}, $\sum_m p'_m-\sum_m p^\dagger_m=\varepsilon\cdot N_{\mathrm{tr}}(e^*)$ via the per-trip budget-balance update of Step~3: each realised trip $(b^*,r^*)$ with $e^*\in r^*$ contributes $\varepsilon$ to the trip's coalition-sum payment (uniformly $\varepsilon/|b^*|$ per agent in $b^*$, summing to $\varepsilon$ per trip), and there are $N_{\mathrm{tr}}(e^*)=q_{e^*}$ such trips. The remittance change is $\sum_e[c_e(\tau'_e)-c_e(\tau^\dagger_e)]=c_{e^*}(\tau^\dagger_{e^*}+\varepsilon)-c_{e^*}(\tau^\dagger_{e^*})$, since $\tau'_e=\tau^\dagger_e$ for $e\neq e^*$. Substituting into the spread definition gives $\Delta_{e^*}(\varepsilon;c)$ as stated. Part~(a): under sub-Lipschitz, $\Delta_{e^*}(\varepsilon;c)\ge N_{\mathrm{tr}}(e^*)-\kappa>0$, so $R(\delta_\varepsilon;c)>R(\mathrm{id};c)$. Part~(b): under $c^{\mathrm{sep}}$, $c^{\mathrm{sep}}_{e^*}(\tau^\dagger_{e^*}+\varepsilon)-c^{\mathrm{sep}}_{e^*}(\tau^\dagger_{e^*})=N_{\mathrm{tr}}(e^*)\cdot\varepsilon$, exactly cancelling the agent-side increment.\qedhere
\end{proof}

\begin{remark}[Working assumption for \cref{thm:amin-trilemma-instance}]
\label{rem:amin-fixed-remittance}
The remainder of this appendix proves the trilemma under the fixed-remittance schedule $c^{\mathrm{fix}}$ as the operative regime. The choice is justified by three observations: (i)~the fixed-remittance schedule is the simplest schedule realising the gross-inflow-minus-contracted-procurement ledger of \cref{rem:amin-remittance-contractual}; (ii)~Amin's Definition~2.1 does not pin the destination of payments (\cref{rem:amin-remittance-contractual}), so the trilemma must be stated with respect to a chosen credibility regime; (iii)~$c^{\mathrm{fix}}$ is the $\kappa=0$ case of the sub-Lipschitz class of \cref{lem:amin-mediator-regime}(a), across which the conclusion is robust, while $c^{\mathrm{sep}}$ of \cref{lem:amin-mediator-regime}(b) is the complementary regime in which the deviation is revenue-neutral.
\end{remark}

The next lemma supplies the admissibility bound for the perturbation amplitude, together with the structural fact that dictates its form.

\begin{lemma}[Load-weighted admissibility bound; minimum-price blockers]
\label{lem:amin-walrasian-gap}
Assume the hypotheses of Amin's Theorem~3.2 (SP network with homogeneous capacity-sharing disutility) and let $(x^*,\bar p,\bar\tau)$ be a market equilibrium in the sense of Amin's Definition~2.1, with agent utilities $(\bar u_m)_{m\in M}$ and a saturated edge $e^*\in E$. Write $\mathrm{Alloc}(e^*)=\{m\in M: e^*\in \bar r_m\}$ for the allocated agents whose realised route uses $e^*$, and $\bar b_m$ for the coalition of agent $m$'s realised trip. For an unused trip $(b,r)$, i.e.\ $x^*_r(b)=0$, define the stability slack and the \emph{perturbation load}
\begin{equation}
\mathrm{slack}(b,r)\;\triangleq\;\sum_{m\in b}\bar u_m-V_r(b)+\sum_{e\in r}\bar\tau_e,
\qquad
\ell(b,r)\;\triangleq\;\bigl|b\cap\mathrm{Alloc}(e^*)\bigr|-\mathbf 1[e^*\in r],
\label{eq:ell-load}
\end{equation}
and set
\begin{align}
\eta_Z\;&\triangleq\;\min_{\substack{(b,r):\,x^*_r(b)=0\\ \ell(b,r)\ge 1}}\;\frac{\mathrm{slack}(b,r)}{\ell(b,r)}\qquad(\min\emptyset=+\infty),\label{eq:eta-Z}\\
u_{\min}(e^*)\;&\triangleq\;\min\bigl\{\,\bar u_m\,:\,m\in \mathrm{Alloc}(e^*)\,\bigr\},\qquad
\bar\varepsilon_Z\;\triangleq\;\min\bigl(\eta_Z,\;u_{\min}(e^*)\bigr).\label{eq:u-min}
\end{align}
For $\varepsilon\ge 0$ consider the perturbed point $(x^*,u',\tau')$ with $\tau'_{e^*}=\bar\tau_{e^*}+\varepsilon$, $\tau'_e=\bar\tau_e$ for $e\neq e^*$, and $u'_m=\bar u_m-(\varepsilon/|\bar b_m|)\cdot\mathbf 1[e^*\in\bar r_m]$ for allocated $m$, $u'_m=0$ otherwise. Then:
\begin{enumerate}
\item[(a)] \emph{(Soundness.)} $\eta_Z\in[0,+\infty]$ and $u_{\min}(e^*)\ge 0$ are well defined. Under the perturbation, the slack of an unused trip $(b,r)$ changes by
\[
-\varepsilon\Bigl(\;\sum_{m\in b\cap\mathrm{Alloc}(e^*)}\frac{1}{|\bar b_m|}\;-\;\mathbf 1[e^*\in r]\Bigr)\;\;\ge\;\;-\varepsilon\,\ell(b,r),
\]
so trips with $\ell(b,r)\le 0$ keep or gain slack, and trips with $\ell(b,r)\ge 1$ lose at most $\varepsilon\,\ell(b,r)$. Consequently, for every $\varepsilon\in[0,\bar\varepsilon_Z]$ the perturbed point satisfies individual rationality~(5) and stability~(6) on every trip, with equality preserved on the realised trips through $e^*$.
\item[(b)] \emph{(Exactness in the singleton regime.)} If every realised trip is a singleton, then $\bar\varepsilon_Z$ equals the largest $\varepsilon\ge 0$ for which the perturbed point, together with the trip-level payment update of (M2) of Lemma~\ref{lem:amin-perturbation} below, is a market equilibrium in the sense of Definition~2.1; every $\varepsilon>\bar\varepsilon_Z$ is inadmissible.
\item[(c)] \emph{(Minimum-price blockers.)} Suppose in addition that every realised trip is a singleton and that $(x^*,\bar p,\bar\tau)$ is the minimum-total-edge-price equilibrium $(x^*,p^\dagger,\tau^\dagger)$ of Amin's Theorem~3.10 with $\tau^\dagger_{e^*}>0$. Then some unused trip $(\hat b,\hat r)$ with $e^*\in\hat r$ and $\hat b\cap\mathrm{Alloc}(e^*)=\emptyset$ has $\mathrm{slack}(\hat b,\hat r)=0$. In particular, the unweighted minimum slack over \emph{all} unused trips vanishes identically on the event $\{\tau^\dagger_{e^*}>0\}$; the restriction to $\ell(b,r)\ge 1$ in~\eqref{eq:eta-Z} excludes exactly this forced-tight family, whose load is $\ell=-\mathbf 1[e^*\in r]\le 0$.
\end{enumerate}
\end{lemma}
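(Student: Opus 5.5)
I will work directly with Amin's equilibrium conditions in utility form: individual rationality~(5) as $u_m\ge 0$, and stability~(6) as $\sum_{m\in b}u_m\ge V_r(b)-\sum_{e\in r}\tau_e$ for every trip $(b,r)$. I will also use the per-trip budget-balance identity, which for realised trips makes~(6) hold with equality. The three parts then reduce to bookkeeping on how the slack of each trip changes under the perturbation $(u',\tau')$.

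\emph{Part (a).} Well-definedness is immediate. Stability at $(x^*,\bar p,\bar\tau)$ gives $\mathrm{slack}\ge0$, so $\eta_Z\ge 0$, and IR gives $u_{\min}(e^*)\ge0$. For the slack change of an unused trip I take the difference $\mathrm{slack}'(b,r)-\mathrm{slack}(b,r)$:
\begin{itemize}
\item the utility term contributes $-\varepsilon\sum_{m\in b\cap\mathrm{Alloc}(e^*)}1/|\bar b_m|$;
\item the price term contributes $+\varepsilon\,\mathbf 1[e^*\in r]$.
\end{itemize}
Since $|\bar b_m|\ge1$, the sum is at most $|b\cap\mathrm{Alloc}(e^*)|$, which gives the bound $\ge-\varepsilon\ell(b,r)$. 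Trips with $\ell\le0$ then keep nonnegative slack. Trips with $\ell\ge1$ retain $\mathrm{slack}-\varepsilon\ell\ge0$ whenever $\varepsilon\le\eta_Z$.

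For IR, the condition $\varepsilon\le u_{\min}(e^*)$ together with $\varepsilon/|\bar b_m|\le\varepsilon$ gives $u'_m\ge0$. On a realised trip through $e^*$, each of its $|\bar b|$ members loses $\varepsilon/|\bar b|$, which exactly offsets the $+\varepsilon$ on $\tau'_{e^*}$, so equality is preserved. Realised trips avoiding $e^*$ are unchanged. One case needs care: a realised trip evaluated with a different coalition is just another trip $(b,r)$ covered by the unused-trip analysis.

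\emph{Part (b).} In the singleton regime every $|\bar b_m|=1$, so the slack change is exactly $-\varepsilon\ell(b,r)$. Feasibility of the perturbation is then equivalent to $\varepsilon\ell\le\mathrm{slack}$ for all unused trips with $\ell\ge1$, together with $\varepsilon\le\bar u_m$ for all $m\in\mathrm{Alloc}(e^*)$. The constraints with $\ell\le0$ are vacuous. The remaining conditions of Definition~2.1 hold automatically: budget balance is restored by the (M2) trip-level update, and market clearing holds because $x^*$ is unchanged and $e^*$ stays saturated.

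The admissible set is therefore exactly $[0,\bar\varepsilon_Z]$. For any $\varepsilon>\bar\varepsilon_Z$, the minimiser attaining $\bar\varepsilon_Z$ is violated: either a stability inequality on the minimising trip, or IR of the agent attaining $u_{\min}(e^*)$.

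\emph{Part (c).} This is the main obstacle, and I would argue by contradiction from price minimality. Suppose every unused trip $(\hat b,\hat r)$ with $e^*\in\hat r$ and $\hat b\cap\mathrm{Alloc}(e^*)=\emptyset$ has strictly positive slack. I lower $\tau^\dagger_{e^*}$ by a small $\eta>0$, which is possible since $\tau^\dagger_{e^*}>0$, and raise each $u_m$ for $m\in\mathrm{Alloc}(e^*)$ by $\eta$. This is the mirror of the part~(a) perturbation with $\varepsilon=-\eta$. Singleton trips keep realised equalities and budget balance through the (M2) update, and the slack change becomes $+\eta\,\ell(b,r)$. I would then check each class of trip:
\begin{itemize}
\item trips with $\ell\ge0$ keep or gain slack;
\item trips with $\ell<0$ necessarily have $e^*\in r$ and $b\cap\mathrm{Alloc}(e^*)=\emptyset$, so they form exactly the hypothesised strictly slack family, and finitely many of them absorb a small $\eta$;
\item IR improves.
\end{itemize}
This produces an equilibrium with strictly smaller total edge price, contradicting the minimality of Theorem~3.10. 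The delicate points are the finiteness of the trip set, needed for a uniform $\eta$, and confirming that Amin's Theorem~3.10 equilibrium minimises the total edge price over all equilibria, including this perturbed one. The final sentence of part~(c) then follows, because such a blocker has $\ell=-1$, so the unweighted minimum slack is $0$.
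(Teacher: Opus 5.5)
Your proposal is correct and follows essentially the same route as the paper. Parts~(a) and~(b) are the same direct slack bookkeeping on Amin's conditions~(5)--(8). For part~(c), the paper also uses the mirror perturbation with amplitude $-\eta$, with $\eta$ below $\tau^\dagger_{e^*}$ and below the finitely many slacks of the family with $\ell(b,r)<0$; this yields a Definition~2.1 equilibrium whose total edge price is lower by $q_{e^*}\eta$, contradicting the minimum-price selection of Theorem~3.10.

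The two points you flag as delicate are handled in the paper by assumption rather than argument. The finiteness of the trip index set is asserted in part~(a). The fact that Theorem~3.10's selection is minimum-price (and utility-maximal) over all such equilibria is taken directly from Theorem~3.10.
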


\begin{proof}
\emph{(a).} Both minima range over finite index sets (with the $\min\emptyset=+\infty$ convention in~\eqref{eq:eta-Z}); $\mathrm{slack}(b,r)\ge 0$ by stability~(6) at $(\bar u,\bar\tau)$, and $\bar u_m\ge 0$ by individual rationality~(5), so $\eta_Z\ge 0$ and $u_{\min}(e^*)\ge 0$. For an unused trip $(b,r)$, the left side of Amin's condition~(6) changes by $-\varepsilon\sum_{m\in b\cap\mathrm{Alloc}(e^*)}1/|\bar b_m|$ (only $e^*$-using allocated members are decremented), and the right side $V_r(b)-\sum_{e\in r}\tau_e$ changes by $-\varepsilon\,\mathbf 1[e^*\in r]$; the slack change is their difference. Since $|\bar b_m|\ge 1$, we have $\sum_{m\in b\cap\mathrm{Alloc}(e^*)}1/|\bar b_m|\le |b\cap\mathrm{Alloc}(e^*)|$, which gives the displayed bound. For $\ell(b,r)\le 0$ the bound is non-negative. For $\ell(b,r)\ge 1$ and $\varepsilon\le\eta_Z$, the loss $\varepsilon\,\ell(b,r)\le\eta_Z\,\ell(b,r)\le\mathrm{slack}(b,r)$, so the slack stays non-negative. For a realised trip $(\bar b,\bar r)$ with $e^*\in\bar r$, both sides of Amin's condition~(6) fall by $\varepsilon$ ($|\bar b|$ summands of $\varepsilon/|\bar b|$ on the left, the price term on the right), preserving the equality that budget balance induces there; for a realised trip with $e^*\notin\bar r$, neither side moves (its members are allocated to that trip, hence outside $\mathrm{Alloc}(e^*)$). Individual rationality: $u'_m=\bar u_m-\varepsilon/|\bar b_m|\ge\bar u_m-\varepsilon\ge\bar u_m-u_{\min}(e^*)\ge 0$ for $m\in\mathrm{Alloc}(e^*)$, and $u'_m=\bar u_m\ge 0$ otherwise. Note that~\eqref{eq:u-min} carries no positivity filter: if some $m\in\mathrm{Alloc}(e^*)$ has $\bar u_m=0$, then $\bar\varepsilon_Z=0$ and the lemma admits no positive amplitude, as it must.

\emph{(b).} With singleton realised trips, $\sum_{m\in b\cap\mathrm{Alloc}(e^*)}1/|\bar b_m|=|b\cap\mathrm{Alloc}(e^*)|$, so the slack of every unused trip changes by exactly $-\varepsilon\,\ell(b,r)$, and every $e^*$-user's utility falls by exactly $\varepsilon$. Budget balance~(7a) holds by the payment update, (7b) is untouched, and market clearing~(8) holds for every $\varepsilon\ge 0$ (the price rises only on the saturated $e^*$). The perturbed point is therefore an equilibrium if and only if Amin's conditions~(5) and~(6) hold, i.e.\ if and only if $\varepsilon\le\bar u_m$ for every $m\in\mathrm{Alloc}(e^*)$ and $\varepsilon\,\ell(b,r)\le\mathrm{slack}(b,r)$ for every unused trip with $\ell(b,r)\ge 1$; the largest such $\varepsilon$ is $\bar\varepsilon_Z$.

\emph{(c).} Suppose, for contradiction, that every unused trip $(b,r)$ with $e^*\in r$ and $b\cap\mathrm{Alloc}(e^*)=\emptyset$ has strictly positive slack; let $\theta>0$ be smaller than all these slacks and than $\tau^\dagger_{e^*}$. Apply the perturbation of part~(a) with amplitude $-\theta$: lower $\tau_{e^*}$ by $\theta$ and raise each $e^*$-user's utility by $\theta$ (singleton trips), lowering the corresponding payments by $\theta$. Realised trips keep their budget-balance equalities; individual rationality only loosens; market clearing~(8) is unaffected since $\tau^\dagger_{e^*}-\theta>0$ remains admissible on the saturated edge; and by the slack computation of part~(a) with singleton trips, an unused trip's slack changes by $+\theta\,\ell(b,r)$, which is negative only when $e^*\in r$ and $b\cap\mathrm{Alloc}(e^*)=\emptyset$, where the choice of $\theta$ keeps the slack positive. The result is a Definition~2.1 equilibrium whose total edge price is lower by $q_{e^*}\theta>0$ and whose utilities are pointwise no smaller, contradicting both the minimum-price and the utility-maximal characterisations of Theorem~3.10. Hence some trip in the stated family has zero slack, and the unweighted minimum over all unused trips is $0$ whenever $\tau^\dagger_{e^*}>0$.
\end{proof}

\begin{lemma}[Definition-2.1 equilibria maximise welfare]
\label{rem:amin-first-welfare}
Let $(x,u,\tau)$ satisfy Amin's equilibrium conditions~(5)--(8). Then $x$ maximises welfare: $\sum_{(b,r)\in y}V_r(b)\le\sum_{(b,r)\in x}V_r(b)$ for every feasible allocation $y$.
\end{lemma}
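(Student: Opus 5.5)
The plan is the standard first-welfare-theorem argument, run directly on Amin's conditions~(5)--(8) in utility--price form. For any feasible allocation $y$, sum the stability inequality~(6) over the trips of $y$. Then compare that sum with the identity that budget balance produces on the trips of $x$. Throughout I use the utility form of~(6): for every trip $(b,r)$,
\[
\sum_{m\in b}u_m\;\ge\;V_r(b)-\sum_{e\in r}\tau_e ,
\]
with equality on realised trips of $x$. The realised-trip equality comes from budget balance~(7a), since $u_m=V$-share minus $p_m$ and $\sum_{m\in b}p_m=\sum_{e\in r}\tau_e$. Unallocated agents have $u_m=0$ by the convention of the model.

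\emph{Step 1 (welfare of $x$ in dual form).} Sum the equality over realised trips of $x$. Each allocated agent lies in exactly one realised trip, and unallocated agents carry $u_m=0$, so
\[
\sum_{(b,r)\in x}V_r(b)=\sum_{m\in M}u_m+\sum_{e}\tau_e\,n_e(x),
\]
where $n_e(x)$ is the number of $x$-trips using $e$. Market clearing~(8) says that $\tau_e>0$ only on saturated edges, i.e.\ $n_e(x)=q_e$ whenever $\tau_e>0$. Hence $\sum_e\tau_e n_e(x)=\sum_e\tau_e q_e$.

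\emph{Step 2 (bound for $y$).} Sum the inequality over the trips of $y$. Each agent appears in at most one trip of $y$, and $u_m\ge 0$ by individual rationality~(5), so
\[
\sum_{(b,r)\in y}V_r(b)\le\sum_{(b,r)\in y}\sum_{m\in b}u_m+\sum_e\tau_e n_e(y)\le\sum_m u_m+\sum_e\tau_e q_e .
\]
The last step uses feasibility of $y$, namely $n_e(y)\le q_e$, together with $\tau_e\ge 0$. Comparing this with Step~1 gives the claim.

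The main obstacle is bookkeeping against Amin's exact formulation rather than any real mathematics. I need to confirm three things. First, that~(6) quantifies over \emph{all} coalition--route pairs, realised ones included, so that the inequality may be applied to arbitrary trips of $y$. Second, that~(8) really delivers complementary slackness $\tau_e(q_e-n_e(x))=0$ together with $\tau_e\ge 0$. Third, that an agent allocated nothing has utility exactly $0$ and pays nothing, so that the budget-balance half~(7b) rules out stray transfers. If~(8) is instead stated only as feasibility plus nonnegative prices, I would recover the slackness from the minimum-price structure used in \cref{lem:amin-walrasian-gap}(c). Alternatively, I would note that any positive price on an unsaturated edge only strengthens Step~2 if I replace $q_e$ by $n_e(x)$; that replacement is valid only after checking that $n_e(y)\le n_e(x)$ is not needed, so complementary slackness is the route I would pursue first.
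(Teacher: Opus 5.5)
Your proof is correct and is essentially the paper's argument: you sum stability~(6) over the trips of $y$, use disjointness of $y$'s coalitions with $u\ge 0$ from~(5), bound $n_e(y)\le q_e$ using $\tau\ge 0$, and close with~(7b), the complementary slackness carried by~(8), and equality in~(6) on realised trips. The paper reads~(8) exactly as ``$\tau_e>0$ only where $n_e(x)=q_e$'', so your fallbacks are unnecessary. The minimum-price fallback would also not suffice, because the lemma must hold for arbitrary Definition-2.1 equilibria, not only the Theorem-3.10 selection.
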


\begin{proof}
Writing $n_e(y)$ for the number of trips of $y$ whose route uses edge $e$,
\[
\begin{aligned}
\sum_{(b,r)\in y}V_r(b)
&\;\le\;\sum_{(b,r)\in y}\Bigl[\sum_{m\in b}u_m+\sum_{e\in r}\tau_e\Bigr]
\;\le\;\sum_{m\in M}u_m+\sum_{e}n_e(y)\,\tau_e
\;\le\;\sum_{m\in M}u_m+\sum_{e}q_e\,\tau_e\\
&\;=\;\sum_{(b,r)\in x}\Bigl[\sum_{m\in b}u_m+\sum_{e\in r}\tau_e\Bigr]
\;=\;\sum_{(b,r)\in x}V_r(b).
\end{aligned}
\]
The first step is stability~(6) trip by trip; the second uses disjointness of $y$'s coalitions and $u\ge 0$ from Amin's~(5); the third uses edge feasibility $n_e(y)\le q_e$ and $\tau\ge 0$; the fourth uses Amin's~(7b) (unallocated agents have zero utility) together with~(8) ($\tau_e>0$ only where $n_e(x)=q_e$); and the last is budget balance~(7a) rewritten through the definition of utilities, i.e.\ equality in Amin's~(6) on realised trips. The argument uses only Amin's conditions~(5)--(8), disjointness of realised coalitions, and capacity feasibility, so Theorem~\ref{thm:amin-trilemma-instance} may apply it to the equilibrium induced by an arbitrary candidate mechanism.
\end{proof}

\begin{lemma}[Edge-price perturbation in SP networks under homogeneous disutility]
\label{lem:amin-perturbation}
Assume the hypotheses of Amin's Theorem~3.2: $G$ is series--parallel and the capacity-sharing disutility parameters are homogeneous, $\Delta\alpha_m(|b|)=\Delta\alpha(|b|)$ and $\Delta\beta_m(|b|)=\Delta\beta(|b|)$ for all $m\in M$. Let $(x^*,\bar p,\bar\tau)$ be a market equilibrium in the sense of Amin's Definition~2.1 such that:
\begin{enumerate}
\item[(i)] the social welfare is positive, $S(x^*)=\sum_{(b,r)}V_r(b)\,x^*_r(b)>0$;
\item[(ii)] there exists a saturated edge $e^*\in E$ at which market clearing binds, $\sum_{(b,r):\,e^*\in r}x^*_r(b)=q_{e^*}$;
\item[(iii)] $\bar\varepsilon_Z=\min(\eta_Z,u_{\min}(e^*))>0$ in the sense of \cref{lem:amin-walrasian-gap}, evaluated at $(x^*,\bar p,\bar\tau)$.
\end{enumerate}
Then for every $\varepsilon\in(0,\bar\varepsilon_Z]$ the perturbed price vector
\[
\tau'_e \;=\; \begin{cases}\bar\tau_e + \varepsilon, & e=e^*,\\ \bar\tau_e, & e\neq e^*\end{cases}
\]
satisfies:
\begin{enumerate}
\item[(M1)] \emph{Equilibrium preservation.} $(x^*,p',\tau')$ is a market equilibrium in the sense of Amin's Definition~2.1 for the same trip allocation $x^*$, where the perturbed payments are given by the trip-level update of Step~3 below: $\sum_{m\in b^*}p'_m=\sum_{m\in b^*}\bar p_m+\varepsilon$ on every realised trip $(b^*,r^*)$ with $e^*\in r^*$, $p'_m=\bar p_m$ on every realised trip with $e^*\notin r^*$, and $p'_m=\bar p_m=0$ for non-allocated agents. Individual rationality~(5), stability~(6), budget balance~(7a)--(7b), and market clearing~(8) all continue to hold.
\item[(M2)] \emph{Linear payment increment, stated per trip.} On every realised trip $(b^*,r^*)$ with $e^*\in r^*$, the coalition's aggregate payment increment is $\varepsilon$, distributed as $p'_m-\bar p_m=\varepsilon/|b^*|$ within the coalition and reducing to $\varepsilon$ per agent on singleton trips. The aggregate gross-inflow change is $\sum_m(p'_m-\bar p_m)=\varepsilon\cdot N_{\mathrm{tr}}(e^*)=\varepsilon\cdot q_{e^*}$, regardless of coalition sizes.
\item[(M3)] \emph{Undetectability.} Suppose in addition that $(x^*,\bar p,\bar\tau)$ is the VCG-equivalent minimum-edge-price equilibrium $(x^*,p^\dagger,\tau^\dagger)$ of Amin's Theorem~3.10, that every realised trip is a singleton, that $e^*$ is the only saturated edge, that the prior on $(\alpha_m,\beta_m)_{m\in M}$ has full Lebesgue support on a product of connected open intervals, and that
\begin{enumerate}
\item[(iv)] \emph{(rejected-demand pivot)} there exist an agent $\hat m\notin\mathrm{Alloc}(e^*)$ and a route $\hat r\ni e^*$ whose edges other than $e^*$ are unsaturated, such that either
(V1)~$\hat m$ is unallocated, or
(V2)~$\hat m$ is allocated on a singleton trip over a route $r_{\hat m}\not\ni e^*$ with $d_{r_{\hat m}}\neq d_{\hat r}$;
no unused trip containing $\hat m$ other than $(\{\hat m\},\hat r)$ attains the maximal lower bound on $\tau_{e^*}$; and, when $\tau^\dagger_{e^*}>0$, the zero-slack family of \cref{lem:amin-walrasian-gap}(c) is tight at the single trip $(\{\hat m\},\hat r)$ only.
\end{enumerate}
Then there exists $\varepsilon_w\in(0,\bar\varepsilon_Z]$ such that for every $\varepsilon\in(0,\varepsilon_w)$ the deviation $\delta_\varepsilon:(\alpha,\beta,\Delta,\dots)\mapsto(\tau',p',x^*)$ is undetectable in the sense of \cref{def:amin-deviation}: each allocated agent's observation $(x'_m,p'_m,(\tau'_e)_{e\in r^*_m})$ is consistent with a legitimate execution of Amin's Theorem~3.10 on the same agent set $M$ under some alternative preference profile of the other agents. At the instances checked by the exact-rational certificate script \textnormal{\texttt{scripts/verify\_amin\_instance.py}} in the code release, the undetectability witness is confirmed at a finite grid of interior amplitudes up to $\bar\varepsilon_Z$, consistent with $\varepsilon_w=\bar\varepsilon_Z$ there; neither $\varepsilon_w$ itself nor the equality beyond those instances is computed.
\end{enumerate}
The amplitude bound $\bar\varepsilon_Z$ is the admissibility bound of \cref{lem:amin-walrasian-gap}; Theorem~\ref{thm:amin-trilemma-instance} uses interior amplitudes $\varepsilon<\varepsilon_w$ only.
\end{lemma}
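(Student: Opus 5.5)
The proof has three parts, one for each of (M1)--(M3). (M1) and (M2) follow almost directly from \cref{lem:amin-walrasian-gap}(a). (M3) requires a construction.

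For (M1), I take the perturbed point $(x^*,u',\tau')$ of \cref{lem:amin-walrasian-gap} and define payments through the trip-level update. On a realised trip $(b^*,r^*)$ with $e^*\in r^*$, each member pays $p'_m=\bar p_m+\varepsilon/|b^*|$, so the coalition's payment rises by exactly $\varepsilon$. All other payments stay unchanged.

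The four conditions then check as follows. Budget balance~(7a) holds because only $\tau_{e^*}$ moves and each trip through $e^*$ pays exactly $\varepsilon$ more. Condition~(7b) is untouched, since unallocated agents still pay nothing and keep zero utility. Market clearing~(8) holds because the price rises only on $e^*$, which hypothesis~(ii) makes saturated, so $\tau'_{e^*}>0$ is permitted there. Individual rationality~(5) and stability~(6) for every $\varepsilon\le\bar\varepsilon_Z$ are exactly \cref{lem:amin-walrasian-gap}(a), and hypothesis~(iii) makes this range non-trivial.

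Two further checks are needed. First, the new utilities must be consistent with the new payments, i.e.\ $u'_m=V$-share minus $p'_m$; this holds by the per-agent decrement $\varepsilon/|\bar b_m|$. Second, the series--parallel and homogeneity hypotheses enter only through the existence of the base equilibrium (Amin's Theorem~3.2), so they need no further use. Part (M2) is then bookkeeping: summing $\varepsilon$ over the $N_{\mathrm{tr}}(e^*)=q_{e^*}$ trips through the saturated edge gives the stated aggregate increment.

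Part (M3) is the main obstacle. For each allocated agent $m$, I must exhibit an alternative profile of the other agents whose Theorem-3.10 selection gives $m$ the same trip, the price $\tau^\dagger_{e^*}+\varepsilon$ on $e^*$, and the same prices elsewhere on $m$'s route.

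I will do this by raising the pivot agent's parameters. Hypothesis~(iv), together with \cref{lem:amin-walrasian-gap}(c), says the minimum price $\tau^\dagger_{e^*}$ is pinned by the single tight blocking trip $(\{\hat m\},\hat r)$. Its slack is zero, so its value $V_{\hat r}(\{\hat m\})$ determines the lower bound on $\tau_{e^*}$, since the other edges of $\hat r$ are unsaturated and carry price zero. I therefore raise $\hat m$'s affine parameters, say $\alpha_{\hat m}$, by an amount chosen to lift $V_{\hat r}(\{\hat m\})$ by exactly $\varepsilon$, leaving everyone else's parameters fixed. Full Lebesgue support on open intervals keeps this profile in the support for small $\varepsilon$.

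I then need to show three things about the minimum-price equilibrium at the new profile, and I would argue each by continuity for $\varepsilon$ below some $\varepsilon_w$.
\begin{enumerate}
\item The allocation $x^*$ stays welfare-optimal. By \cref{rem:amin-first-welfare}, equilibria maximise welfare, so it suffices that the perturbed point of (M1), which remains stable because $\hat m$'s raised trip loses only the slack that $\tau'_{e^*}$ restores, stays an equilibrium with the same allocation. In case (V1), $\hat m$ still loses. In case (V2), $d_{r_{\hat m}}\neq d_{\hat r}$ lets me adjust $\beta_{\hat m}$ jointly so that $\hat m$'s own trip value is unchanged.
\item The minimum price on $e^*$ rises by exactly $\varepsilon$. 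The tight trip now forces $\tau_{e^*}\ge\tau^\dagger_{e^*}+\varepsilon$. Uniqueness of the maximiser in (iv), together with strict slack for all other trips, shows that no other constraint binds for small $\varepsilon$.
\item The unsaturated edges keep price zero. This follows from market clearing~(8), since prices can be positive only on saturated edges and $e^*$ is the only one.
\end{enumerate}

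Finally, payments equal route price minus nothing else for singleton trips, so each $e^*$-user pays $p^\dagger_m+\varepsilon$, which matches its observation under the deviation. Agents not using $e^*$ see unchanged data.

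The delicate step is the uniform choice of $\varepsilon_w$ over finitely many trip constraints. I would take it as the minimum of the strict slacks divided by their loads, and state it only as existing, consistent with the statement's refusal to compute it.
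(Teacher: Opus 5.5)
Your (M1)--(M2) and the structure of your (M3) follow the paper. You shift only the pivot $\hat m$'s parameters, on the same agent set, so that the pivot bound $\varphi(s)=V_{\hat r}(\{\hat m\};s)-u_{\hat m}(s)$ rises one-for-one, and you set $\varepsilon_w=\min(\bar\varepsilon_Z,\rho)$ for a structure-preservation radius $\rho$. In case (V2) your joint $(\alpha,\beta)$ shift freezes $V_{r_{\hat m}}(\{\hat m\})$; this is an equally valid variant of the paper's pure shift $\beta_{\hat m}+\sigma s/|d_{r_{\hat m}}-d_{\hat r}|$. Your upper bound on the shifted minimum price comes from noting that the (M1) point is itself an equilibrium at the shifted profile. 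That is a clean rendering of the paper's ``prices $e^*$ at the maximal lower bound''.

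The gap is the case $\tau^\dagger_{e^*}=0$, which the statement admits: the tightness clause of (iv) is conditioned on $\tau^\dagger_{e^*}>0$. You take the pivot trip to have zero slack via \cref{lem:amin-walrasian-gap}(c), but (c) requires $\tau^\dagger_{e^*}>0$. At $\tau^\dagger_{e^*}=0$ the zero price on $e^*$ need not be pinned by any tight blocker, so $\varphi(0)\le 0$ and it can be strictly negative (e.g.\ an unallocated $\hat m$ with $V_{\hat r}(\{\hat m\})<0$ in case (V1)). Lifting $V_{\hat r}(\{\hat m\})$ by exactly $\varepsilon$ then raises the pivot bound only to $\varphi(0)+\varepsilon<\varepsilon$. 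For small $\varepsilon$ the minimum-price equilibrium at your shifted profile still has $\tau_{e^*}=0$, so the $e^*$-users are charged $p^\dagger_m$ rather than $p^\dagger_m+\varepsilon$, and the witness fails. The shift has to be $\varepsilon-\varphi(0)$, which is what the paper's re-parametrisation supplies (it absorbs the offset $-\varphi(0)$ into the path parameter). That offset is not small, so a continuity argument in $\varepsilon$ does not by itself keep $x^*$, $\hat m$'s status and the maximality of the pivot bound in place along the whole shift; this has to be argued separately.

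Two smaller points.
\begin{itemize}
\item Your item~1 shows only that $x^*$ is \emph{a} welfare maximiser at the shifted profile. For the Theorem-3.10 run to hand $m$ its trip, you need $x^*$ to stay the \emph{unique} maximiser. The paper builds this into $\rho$ as a condition that is strict at $s=0$; in the application, (Z1) supplies uniqueness. Uniqueness is also what gives strictly positive base slack to the trips of $\hat m$ that avoid $e^*$, which your ``strict slack for all other trips'' assumes.
\item $\varepsilon_w$ cannot be ``strict slacks divided by loads''. The load $\ell(b,r)$ records only the price-induced slack loss, whereas the profile shift also moves the value of every trip containing $\hat m$. For example, $(\{\hat m\},r')$ with $e^*\notin r'$ has $\ell=0$ yet loses slack $\varepsilon$ under your (V1) shift. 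The radius must therefore be taken over the combined rates.
\end{itemize}
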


\begin{proof}
Steps~1--3 establish (M1) and (M2) at any Definition-2.1 equilibrium satisfying (i)--(iii); Step~4 establishes (M3) under its additional hypotheses.

\paragraph*{Step 1: $x^*$ remains feasible and welfare-maximal under $\tau'$.}
Feasibility~(3a)--(3c) of $x^*$ is a property of the trip vector alone and does not involve prices, so $x^*$ remains feasible under any $\tau'\ge 0$. By \cref{rem:amin-first-welfare}, the allocation of the hypothesised equilibrium maximises welfare, and welfare depends only on the value functions, which are price-independent; $x^*$ is therefore welfare-maximal under $\tau'$ as well. Prices enter only through the equilibrium conditions verified in Steps~2--3.

\paragraph*{Step 2: Stability and IR hold for every $\varepsilon\in(0,\bar\varepsilon_Z]$.}
Set $u'_m=\bar u_m-(\varepsilon/|\bar b_m|)\cdot\mathbf 1[e^*\in\bar r_m]$ for allocated $m$ and $u'_m=0$ otherwise, as in \cref{lem:amin-walrasian-gap}. The trips fall into four families, indexed by the perturbation load~\eqref{eq:ell-load}:
\begin{itemize}
\item \emph{Realised trips with $e^*\in r^*$.} Both sides of Amin's condition~(6) fall by $\varepsilon$, preserving equality at the trip level.
\item \emph{Realised trips with $e^*\notin r^*$.} Neither side moves: the coalition's members are allocated to this trip, hence outside $\mathrm{Alloc}(e^*)$.
\item \emph{Unused trips with $\ell(b,r)\le 0$.} The slack is non-decreasing by \cref{lem:amin-walrasian-gap}(a). This family contains the trips with $e^*\in r$ and $b\cap\mathrm{Alloc}(e^*)=\emptyset$, whose slack \emph{rises} by $\varepsilon$, since the price increase loosens these constraints. On the operative event $\tau^\dagger_{e^*}>0$ this family contains a tight trip (\cref{lem:amin-walrasian-gap}(c)), which is why an unweighted minimum over all unused trips would vanish identically there.
\item \emph{Unused trips with $\ell(b,r)\ge 1$.} The slack falls by at most $\varepsilon\,\ell(b,r)\le\mathrm{slack}(b,r)$ since $\varepsilon\le\eta_Z$, and remains non-negative (\cref{lem:amin-walrasian-gap}(a)). This family includes the trips with $e^*\notin r$ whose coalition contains allocated $e^*$-users; in particular, every singleton trip $(\{m\},r)$ with $m\in\mathrm{Alloc}(e^*)$ and $e^*\notin r$ belongs to it. These trips cannot be exempted by a generic-disjointness argument, since such singleton trips always exist.
\end{itemize}
Individual rationality holds by \cref{lem:amin-walrasian-gap}(a): $u'_m\ge\bar u_m-u_{\min}(e^*)\ge 0$ for $m\in\mathrm{Alloc}(e^*)$, with $u'_m=\bar u_m$ unchanged otherwise. The amplitude $\varepsilon=\bar\varepsilon_Z$ is admissible with some constraint holding with equality; interior amplitudes keep all previously-slack constraints slack.

\paragraph*{Step 3: Budget balance and market clearing.}
Budget balance~(7a) requires $\sum_{m\in b}p'_m=\sum_{e\in r}\tau'_e$ on every realised trip. Under faithful execution, $\sum_{m\in b^*}\bar p_m=\sum_{e\in r^*}\bar\tau_e$ on each realised trip $(b^*,r^*)$; under the perturbation, the right side rises by $\varepsilon$ exactly on the realised trips with $e^*\in r^*$. The prescribed update is the trip-level identity $\sum_{m\in b^*}p'_m=\sum_{m\in b^*}\bar p_m+\varepsilon$ on those trips, distributed uniformly within the coalition as $p'_m=\bar p_m+\varepsilon/|b^*|$; singleton trips recover the per-agent increment $\varepsilon$. The aggregate gross-inflow change is $\varepsilon\cdot N_{\mathrm{tr}}(e^*)=\varepsilon\cdot q_{e^*}$, with $N_{\mathrm{tr}}(e^*)$ the trip count of \cref{lem:amin-mediator-regime}'s disambiguation; this is the form used in the Theorem~\ref{thm:amin-trilemma-instance} revenue computation. Amin's condition~(7b) is preserved since $p'_m=0$ for non-allocated agents. Market clearing~(8): $e^*$ is saturated by hypothesis~(ii), so $\tau'_{e^*}=\bar\tau_{e^*}+\varepsilon>0$ is admissible; prices on other edges are unchanged. This establishes (M1) and (M2).

\paragraph*{Step 4: Undetectability via a same-cardinality pivot shift.}
Assume the additional hypotheses of (M3) and write $(p^\dagger,\tau^\dagger,u^\dagger)$ for the Theorem-3.10 equilibrium quantities. Amin's Definition~2.1 of undetectability (\cref{def:amin-deviation}) quantifies over alternative preference profiles of the \emph{other} agents on the same agent set $M$, so it admits the population-preserving deviations of \cref{def:deviation}; the construction below alters one existing agent's parameters and leaves the population unchanged. A population-augmenting construction would instead run the mechanism on $M\cup\{\tilde m\}$ and thereby leave the definition's quantifier.

\emph{Reduction.} Non-allocated agents observe only $x'_m=0$ (\cref{rem:amin-info-model}), so undetectability is vacuous for them. An allocated agent $m$ with $e^*\notin r^*_m$ observes $(x^*_m,\bar p_m,(\tau'_e)_{e\in r^*_m})=(x^*_m,p^\dagger_m,(\tau^\dagger_e)_{e\in r^*_m})$, its faithful observation, for which the true profile of the other agents is itself a witness. It remains to construct, for each $m\in\mathrm{Alloc}(e^*)$, an alternative profile of $M\setminus\{m\}$ whose faithful Theorem-3.10 run delivers $(x^*_m,\,p^\dagger_m+\varepsilon,\,(\tau'_e)_{e\in r^*_m})$ to $m$. Since $\hat m\notin\mathrm{Alloc}(e^*)$, the pivot is available for every such $m$ ($\hat m\neq m$), and one witness profile serves all $e^*$-users simultaneously.

\emph{The pivot path.} Define a one-parameter family of profiles $P_s$, $s\ge 0$, that agrees with the true profile except in agent $\hat m$'s parameters, shifted along a fixed direction: in case (V1), $\alpha_{\hat m}(s)=\alpha_{\hat m}+s$; in case (V2), $\beta_{\hat m}(s)=\beta_{\hat m}+\sigma s/|d_{r_{\hat m}}-d_{\hat r}|$ with $\sigma=\mathrm{sign}(d_{r_{\hat m}}-d_{\hat r})$. Consider the lower bound that the unused trip $(\{\hat m\},\hat r)$ places on $\tau_{e^*}$ in any Definition-2.1 equilibrium of $P_s$ that allocates $x^*$: since the edges of $\hat r$ other than $e^*$ are unsaturated (zero price by Amin's~(8)), stability~(6) reads $u_{\hat m}\ge V_{\hat r}(\{\hat m\};s)-\tau_{e^*}$, i.e.\
\[
\tau_{e^*}\;\ge\;\varphi(s)\;\triangleq\;V_{\hat r}(\{\hat m\};s)-u_{\hat m}(s),
\]
where $u_{\hat m}(s)=0$ in case (V1) while $\hat m$ stays unallocated, and $u_{\hat m}(s)=V_{r_{\hat m}}(\{\hat m\};s)$ in case (V2) while $\hat m$ keeps its singleton trip on the zero-priced route $r_{\hat m}$. In case (V1), $\varphi(s)=\varphi(0)+s$; in case (V2), $\varphi(s)=\varphi(0)+s$ as well, by the choice of direction and normalisation. When $\tau^\dagger_{e^*}>0$, hypothesis~(iv) gives $\varphi(0)=\tau^\dagger_{e^*}$ (the tight blocker); when $\tau^\dagger_{e^*}=0$, $\varphi(0)\le 0$ and we re-parametrise so that $\varphi(s)=\tau^\dagger_{e^*}+s$ once $\varphi$ crosses zero, absorbing the offset $-\varphi(0)$ into the path parameter.

\emph{Structure-preservation radius.} Let $\rho>0$ be the supremum of path parameters $s$ such that, along $[0,s]$: $x^*$ remains the unique welfare-maximising allocation of $P_s$; $\hat m$'s allocation status and route are unchanged; $e^*$ remains the only saturated edge; the trip $(\{\hat m\},\hat r)$ attains the maximal lower bound on $\tau_{e^*}$ among all unused trips; and $P_s$ remains in the support of the prior. Each of these finitely many conditions is a strict inequality at $s=0$, affine in $s$ on the fixed-allocation cell: uniqueness of $x^*$ and the status of $\hat m$ are strict welfare comparisons; a competing lower bound either omits $\hat m$ (it is then constant in $s$ and at most $\tau^\dagger_{e^*}$, so the moving bound $\varphi$ attains the maximum throughout) or involves $\hat m$ off the trip $(\{\hat m\},\hat r)$ (it then starts strictly below the maximum by the pivot clause of~(iv), leaving a positive overtake radius even against a faster-moving bound); the support is a product of open intervals. Hence $\rho>0$.

\emph{The witness.} For $s\in(0,\rho)$, the Theorem-3.10 run on $P_s$ allocates $x^*$ (its allocation maximises welfare, uniquely $x^*$ on the cell), prices every unsaturated edge at zero by Amin's~(8) and minimality, and prices $e^*$ at the maximal lower bound, $\tau_{e^*}(P_s)=\varphi(s)=\tau^\dagger_{e^*}+s$. With singleton realised trips, budget balance~(7a) delivers to each $e^*$-user $m$ the payment $\sum_{e\in r^*_m}\tau_e(P_s)=\tau^\dagger_{e^*}+s$ and to each other allocated agent its unchanged faithful payment. Set $\varepsilon_w=\min(\bar\varepsilon_Z,\rho)$ and, for $\varepsilon\in(0,\varepsilon_w)$, take $s=\varepsilon$: the faithful run on $P_\varepsilon$ delivers to every $m\in\mathrm{Alloc}(e^*)$ exactly
\[
\bigl(x^*_m,\;p^\dagger_m+\varepsilon,\;(\tau'_e)_{e\in r^*_m}\bigr),
\]
where the route prices match because $e^*$ is the only saturated edge, so $\tau_e(P_\varepsilon)=0=\tau^\dagger_e=\tau'_e$ for every $e\in r^*_m\setminus\{e^*\}$. The tested agent's own parameters are held fixed throughout ($\hat m\neq m$), so $P_\varepsilon$ restricted to $M\setminus\{m\}$ is an alternative profile of the other agents in the sense of \cref{def:amin-deviation}, and the deviation is undetectable. By the safe-deviation criterion of Akbarpour--Li~\cite{akbarpour2020credible}, the announcement $(\tau',p')$ lies in the support of an honest execution under a legitimate alternative profile.

\emph{Where the hypotheses are used.} Homogeneity and the SP topology enter through Amin's Theorems~3.2 and~3.10, which supply existence, tractability, and the prescribed VCG-equivalent minimum-price selection that Step~4 runs on the shifted profiles; under heterogeneous disutility no such prescribed selection is available and we claim nothing. The singleton-trip and unique-saturated-edge clauses pin the delivered payments and route prices; the pivot clause~(iv) supplies the shifted rejected demand that rationalises the higher price. When $\tau^\dagger_{e^*}>0$, a pivot in the family of \cref{lem:amin-walrasian-gap}(c) exists whenever the tight blocker is a singleton, so~(iv) then adds only the uniqueness and direction clauses.\qedhere
\end{proof}

\subsection{Trilemma Instance over Amin et al.\ (2026)}
\label{app:trilemma-amin-instance}

\begin{theorem}[Trilemma instance over the Amin edge-pricing market]
\label{thm:amin-trilemma-instance}
Let $G=(V,E)$ be a series--parallel single-source single-sink network with positive integer edge capacities $(q_e)_{e\in E}$ and edge time-costs $(d_e)_{e\in E}$, and let $M$ be a finite agent set with affine valuations under homogeneous capacity-sharing disutility (Amin's Equation~(1) with $\Delta\alpha_m\equiv\Delta\alpha$, $\Delta\beta_m\equiv\Delta\beta$). Suppose the prior over $(\alpha_m,\beta_m)_{m\in M}$ is product-form with each marginal absolutely continuous with full support on a connected open interval. Call a profile \emph{Z-non-degenerate} if the Theorem-3.10 equilibrium it induces satisfies:
\begin{enumerate}
\item[(Z1)] the welfare-maximising allocation $x^*$ is unique, $S(x^*)>0$, and every realised trip is a singleton;
\item[(Z2)] there is exactly one saturated edge $e^*$, and $e^*$ lies on exactly one route of $R$;
\item[(Z3)] $\bar\varepsilon_Z>0$ in the sense of \cref{lem:amin-walrasian-gap};
\item[(Z4)] a rejected-demand pivot for $e^*$ exists (hypothesis~(iv) of \cref{lem:amin-perturbation});
\item[(Z5)] for every $m\in\mathrm{Alloc}(e^*)$, the support of $m$'s marginal contains reports, within the cell on which $x^*$ stays uniquely optimal, whose stability and IR cap on $\tau_{e^*}$ (defined in Step~3 below) comes arbitrarily close to $\tau^\dagger_{e^*}$.
\end{enumerate}
Assume the \emph{saturability hypothesis}: \emph{(S)~the prior places positive probability on Z-non-degenerate profiles.} Fix the fixed-remittance schedule $c^{\mathrm{fix}}$ of \cref{rem:amin-fixed-remittance} (or any sub-Lipschitz schedule of \cref{lem:amin-mediator-regime}(a)). Then no homogeneous-edge-pricing market mediator on $G$ is simultaneously:
\begin{enumerate}
\item[(i)] \emph{Revenue-optimal in expectation.} Maximising $\mathbb{E}_{(\alpha,\beta)}\bigl[R(\mathrm{id};c^{\mathrm{fix}})\bigr]$ over the class of dominant-strategy incentive-compatible mechanisms that induce a market equilibrium $(x^*,p,\tau)$ in the sense of Amin's Definition~2.1 at every profile in the support;
\item[(ii)] \emph{DSIC for agents.} Truthful reporting of $(\alpha_m,\beta_m)$ is a dominant strategy for every agent;
\item[(iii)] \emph{Credible} in the realisation-wise sense of \cref{def:amin-credible} under $c^{\mathrm{fix}}$: no profitable undetectable strategic mediator deviation exists for any $(\alpha,\beta)$ realisation in the support.
\end{enumerate}
\end{theorem}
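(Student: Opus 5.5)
The proof will follow the three-step architecture of \cref{thm:trilemma}, transplanted to Amin's primitives. Suppose, for contradiction, that some mediator satisfies (i)--(iii). The first step pins down the candidate on the Z-non-degenerate event. \Cref{rem:amin-first-welfare} shows that any mechanism inducing a Definition-2.1 equilibrium at every profile allocates a welfare-maximising $x^*$. By (Z1) this allocation is unique, so every candidate allocates $x^*$. Hence candidates differ only in the prices $\tau$ and payments $p$. With singleton realised trips, budget balance~(7a) forces $p_m=\sum_{e\in r^*_m}\tau_e$. Market clearing~(8) together with (Z2) forces $\tau_e=0$ off $e^*$. The only free parameter is therefore $\tau_{e^*}$, which must lie in the interval bounded below by the blocking constraints of unused trips and above by the stability/IR caps of the agents in $\mathrm{Alloc}(e^*)$.

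The second step shows that revenue optimality together with DSIC forces the price down to $\tau^\dagger_{e^*}$ on the Z-non-degenerate event. Since $c^{\mathrm{fix}}$ fixes the remittance, the mediator's revenue on that event is $q_{e^*}\tau_{e^*}$ minus a constant. A revenue-optimal candidate would therefore like to push $\tau_{e^*}$ up toward the cap. DSIC prevents this. For the single-parameter-like direction in $e^*$-users' reports, the price charged to $m\in\mathrm{Alloc}(e^*)$ must not depend on $m$'s own report within the cell where $x^*$ stays optimal. By (Z5), $m$'s own cap can be made arbitrarily close to $\tau^\dagger_{e^*}$ by reports inside its support and cell. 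Combined with the requirement that the mechanism induce an equilibrium at every profile in the support, this makes any price above $\tau^\dagger_{e^*}$ infeasible at some report that $m$ could submit while leaving its charged price unchanged. Report-independence plus the lower bound then pins $\tau_{e^*}=\tau^\dagger_{e^*}$. So every mechanism satisfying (i)+(ii) coincides with the Theorem-3.10 selection on this event, up to measure zero.

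I expect this to be the main obstacle. I would first try to formalise it as a Myerson/Archer--Tardos-style threshold argument: DSIC makes $m$'s payment its critical report in the monotone direction. Then I would show that the critical value coincides with $\tau^\dagger_{e^*}$ using (Z5) and the minimum-price characterisation of \cref{lem:amin-walrasian-gap}(c).

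The third step shows that credibility fails. On the Z-non-degenerate event, which has positive probability by (S), apply \cref{lem:amin-perturbation} at $(x^*,p^\dagger,\tau^\dagger)$. Hypotheses (i)--(iii) of that lemma come from (Z1)--(Z3), and its (M3) clause comes from (Z1), (Z2) and (Z4) plus the full-support product prior. For $\varepsilon\in(0,\varepsilon_w)$, the lemma gives a deviation $\delta_\varepsilon$ that preserves the equilibrium and is undetectable in the sense of \cref{def:amin-deviation}. \Cref{lem:amin-mediator-regime}(a) with $\kappa=0$ then gives $R(\delta_\varepsilon;c^{\mathrm{fix}})-R(\mathrm{id};c^{\mathrm{fix}})=\varepsilon q_{e^*}>0$; the same holds for any sub-Lipschitz schedule with margin $N_{\mathrm{tr}}(e^*)-\kappa$. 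This realisation contradicts (iii).

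Finally, as in \cref{thm:trilemma}, I will note that the argument applies to the whole equivalence class selected in the second step. Candidates that differ only on null sets still agree on a positive-measure sub-event where the deviation applies, which completes the contradiction.
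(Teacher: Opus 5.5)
Your Steps~1--2 follow the paper's route: welfare maximality pins $x^*$, DSIC makes $\tau_{e^*}$ constant on $m$'s uniquely-optimal cell, (Z5) squeezes it down to $\tau^\dagger_{e^*}$, and minimality of the Theorem-3.10 price gives the lower bound. The gap is in Step~3.

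\cref{lem:amin-perturbation}(M3) certifies undetectability only against the Theorem-3.10 selection. Its witness is a faithful Theorem-3.10 run at the pivot-shifted profile $P_\varepsilon$, which differs from the realisation $P$ in agent $\hat m$'s parameters. \cref{def:amin-deviation}, however, instantiates the prescribed mechanism by the candidate under test. So to refute~(iii) for a DSIC member $\mathcal M$ of $\mathcal C$, you need $\mathcal M$'s faithful output at $P_\varepsilon$, not just at $P$, to be the Theorem-3.10 equilibrium. Your ``up to measure zero'' version of Step~2 cannot deliver this. For fixed $P$ the witness profiles $\{P_\varepsilon\}$ lie on a one-parameter curve, which is a null set, and almost-everywhere agreement says nothing there. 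For the same reason, your closing remark about null-set differences puts the check in the wrong place: whether the deviation is valid is decided at the witness profiles, not only at the realised one.

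What is missing is an argument that $P_\varepsilon$ is itself Z-non-degenerate, so that the price squeeze applies there. DSIC in fact gives that squeeze pointwise, at every Z-non-degenerate profile, so you should drop the measure-zero hedge. (Z1)--(Z4) are strict inequalities and persist along the pivot segment after shrinking $\varepsilon_w$. (Z5), by contrast, is a support-richness condition and must be transported separately. The paper does this in three pieces:
\begin{enumerate}
\item it uses the bound $\varepsilon<\bar\varepsilon_Z\le\mathrm{cap}_m(\text{truth})-\tau^\dagger_{e^*}$;
\item it applies the intermediate value theorem to $\mathrm{cap}_m$ along the segment from $m$'s truthful report to a (Z5)-witness report;
\item it proves a welfare-margin bound: every report with $\mathrm{cap}_m$ above the new floor $\tau^\dagger_{e^*}+\varepsilon$ keeps $x^*$ uniquely optimal at $P_\varepsilon$.
\end{enumerate}
That last bound controls $m$'s marginal value in coalition trips by its singleton outside option, which is where a monotonicity condition on the homogeneous disutility enters. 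Without this transfer, your proof shows only that the Theorem-3.10 mechanism itself fails~(iii), not that every mechanism satisfying (i) and~(ii) does.
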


\begin{remark}[The saturability hypothesis is satisfiable and needed]
\label{rem:amin-saturability}
Conditions (Z1)--(Z4) are strict inequalities in the profile and so cut out an open set, and (Z5) is a support-richness condition that transports to nearby profiles by the intermediate-value argument of Step~3 of the proof below; the Z-non-degenerate set therefore contains a neighbourhood of any profile at which all five hold. The set is non-empty (explicit profiles satisfying all five conditions exist on two-parallel-route networks, with $\bar\varepsilon_Z=1/2$ at the instances checked by the exact-rational certificate script \textnormal{\texttt{scripts/verify\_amin\_instance.py}} in the code release), and any full-support prior charging a neighbourhood of such a profile satisfies~(S). Some hypothesis of this kind is unavoidable: on a profile class whose demand cannot saturate any edge, every Definition-2.1 equilibrium has $\tau\equiv 0$ by Amin's~(8) and the edge-price deviation channel closes. Saturation cannot be derived from a tail bound on $\#\{m:\alpha_m\ge\bar\alpha\}$ exceeding the min-cut, since that probability is identically zero whenever $|M|$ is at most the min-cut; hypothesis~(S) is assumed instead.
\end{remark}

\begin{proof}
The argument parallels Steps~1--3 of the proof of \cref{thm:trilemma} of the main paper, transposed to Amin's primitives via \cref{lem:amin-perturbation}.

\textbf{Step 1: The candidate class and its induced outcomes.}
Let $\mathcal C$ denote the class over which leg~(i) quantifies: mechanisms mapping each reported profile in the support to a market equilibrium of Amin's Definition~2.1. $\mathcal C$ is non-empty: the Theorem-3.10 selection belongs to it and is strategy-proof (Amin, Theorem~3.10), so mechanisms satisfying (ii) exist in $\mathcal C$ and the maximisation in~(i) is over a non-empty set. By \cref{rem:amin-first-welfare}, every $\mathcal M\in\mathcal C$ implements a welfare-maximising allocation at every profile; on the Z-non-degenerate event the welfare optimum is unique by (Z1), so every candidate allocates $x^*$ there. No uniqueness of the candidate mechanism is asserted, and no virtual-value construction is used: a ``Myerson-substituted VCG-equivalent'' candidate admits no homogeneous edge price (its critical payments differ across winners of a single-priced edge) and therefore lies outside $\mathcal C$.

\textbf{Step 2: A profitable, undetectable deviation against the prescribed selection.}
Condition on a Z-non-degenerate realisation. The Theorem-3.10 equilibrium satisfies hypotheses (i)--(iii) of \cref{lem:amin-perturbation} by (Z1)--(Z3), and the additional hypotheses of (M3) by (Z1), (Z2), (Z4), and the full-support assumption. Fix $\varepsilon\in(0,\varepsilon_w)$. By (M1)--(M2), the deviation $\delta_\varepsilon$ announces a Definition-2.1 equilibrium whose gross inflow exceeds the faithful one by $\varepsilon\cdot N_{\mathrm{tr}}(e^*)=\varepsilon\cdot q_{e^*}$; by \cref{lem:amin-mediator-regime}(a) with $\kappa=0$,
\[
R(\delta_\varepsilon;c^{\mathrm{fix}})-R(\mathrm{id};c^{\mathrm{fix}})\;=\;\varepsilon\cdot q_{e^*}\;>\;0,
\]
and the same holds with factor $(N_{\mathrm{tr}}(e^*)-\kappa)>0$ for any sub-Lipschitz schedule. By (M3), $\delta_\varepsilon$ is undetectable. The prescribed Theorem-3.10 mechanism therefore fails~(iii) on a positive-probability event.

\textbf{Step 3: Extension to revenue-optimal mechanisms.}
As in Step~3 of \cref{thm:trilemma}, the deviation applies to every candidate because it uses only three ingredients: (a)~the induced outcome on the Z-non-degenerate event is a Definition-2.1 equilibrium at the allocation $x^*$ (Step~1); (b)~the revenue identity of (M2), which rests on budget balance~(7a) and clearing~(8) alone and so holds at whatever equilibrium is induced; and (c)~a \emph{price coincidence}: every DSIC member of $\mathcal C$ announces the minimum edge price on the Z-non-degenerate event, and therefore coincides there with the Theorem-3.10 selection, so the undetectability witness of (M3), which is a faithful Theorem-3.10 run at a shifted profile, is a faithful run of the candidate at the same profile. Ingredient~(c) is the content of the following claim.

\emph{Claim (price squeeze). Let $\mathcal M\in\mathcal C$ satisfy (ii), and let $P$ be a Z-non-degenerate profile with payer set $\mathrm{Alloc}(e^*)$. Then $\mathcal M$'s announced edge price satisfies $\tau_{e^*}(P)=\tau^\dagger_{e^*}(P)$, and $\mathcal M$'s induced equilibrium at $P$ is the Theorem-3.10 equilibrium.}

\emph{Lower bound.} By \cref{rem:amin-first-welfare} and (Z1), every Definition-2.1 equilibrium at $P$ allocates $x^*$; by (Z2) its total edge price is $q_{e^*}\tau_{e^*}$, and the Theorem-3.10 selection minimises total price over these equilibria, so $\tau_{e^*}(P)\ge\tau^\dagger_{e^*}(P)$.

\emph{Report-independence.} Fix a payer $m\in\mathrm{Alloc}(e^*)$ with realised route $r^*_m$. With singleton trips, budget balance gives $p_m=\tau_{e^*}$ under any member of $\mathcal C$ on the event. Let $U_m$ be the open cell of $m$-reports (other reports held fixed at $P_{-m}$) on which the reported profile keeps $x^*$ uniquely welfare-optimal with the same structure (Z1)--(Z2). If $\mathcal M$ announced two prices $t<t'$ at two reports in $U_m$, every type of $m$ in $U_m$ whose truthful report yields $t'$ would strictly gain by the $t$-report (allocation and route unchanged on the cell, utility decreasing in the price), and full support places such types in the type space; this contradicts~(ii). Hence $\tau_{e^*}$ is constant on $U_m$.

\emph{Upper bound.} At a report $\rho\in U_m$, the constraints of Definition~2.1 involving $m$ alone bound the price by the \emph{cap}
\[
\mathrm{cap}_m(\rho)\;\triangleq\;\min\Bigl(V_{r^*_m}(\{m\};\rho),\;\min_{r'\not\ni e^*}\bigl[V_{r^*_m}(\{m\};\rho)-V_{r'}(\{m\};\rho)\bigr]\Bigr)
\;=\;\tau^\dagger_{e^*}+\bigl[u^\dagger_m(\rho)-o_m(\rho)\bigr],
\]
the first term from IR~(5) and the rest from stability~(6) on the zero-priced $e^*$-free routes (by (Z2), every route other than $r^*_m$ is $e^*$-free and zero-priced); the second expression substitutes $u^\dagger_m(\rho)=V_{r^*_m}(\{m\};\rho)-\tau^\dagger_{e^*}$ and abbreviates $m$'s outside option $o_m(\rho)=\max\bigl(0,\max_{r'\not\ni e^*}V_{r'}(\{m\};\rho)\bigr)$. Membership in $\mathcal C$ forces $\tau_{e^*}\le\mathrm{cap}_m(\rho)$ at every $\rho\in U_m$. Two facts close the squeeze.

First, within the structure-preserving segment of reports, $U_m$ contains every report with $\mathrm{cap}_m(\rho)>\tau^\dagger_{e^*}$. Competing allocations that seat $m$ on $r^*_m$ (in its singleton trip or in a coalition) contain $m$'s linear value term $\alpha_m-\beta_m d_{r^*_m}$ exactly as $x^*$ does, so the welfare margin of $x^*$ over each of them is constant in $m$'s report and strictly positive at the truthful report by (Z1). Competing allocations that do not seat $m$ on $r^*_m$ have welfare at most $S_{-m}+o_m(\rho)$: removing $m$ \emph{alone} from the trip it occupies, and leaving that trip's other members in place, leaves a feasible allocation of $M\setminus\{m\}$, whose welfare is at most $S_{-m}$, so the competing allocation exceeds $S_{-m}$ by at most $m$'s marginal contribution to its own trip $(b,r)$, namely $V_r(b;\rho)-V_r(b\setminus\{m\};\rho)$. That marginal is at most $V_r(\{m\};\rho)$, and by~(Z2) the route $r$ is $e^*$-free, so it is at most $o_m(\rho)$. The marginal bound is where the step consumes an explicit hypothesis on Amin's capacity-sharing disutility (Amin's Equation~(12)): we assume it is \emph{monotone} in the coalition size, i.e., $k\mapsto k\,\Delta\alpha(k)$ is non-increasing and $k\mapsto k\,\Delta\beta(k)$ is non-decreasing, so that enlarging a coalition never lifts a member's marginal value above its singleton value. Deleting $m$'s whole trip would not license the bound, since a coalition trip's value is not bounded by $m$'s singleton outside option. Since the Theorem-3.10 selection at the reported profile is VCG-equivalent (Amin, Theorem~3.10), $u^\dagger_m(\rho)=S(\rho)-S_{-m}$, so the welfare of $x^*$ at $\rho$ is $S_{-m}+u^\dagger_m(\rho)$ and the margin over every non-seating allocation is at least $u^\dagger_m(\rho)-o_m(\rho)=\mathrm{cap}_m(\rho)-\tau^\dagger_{e^*}>0$. Hence $x^*$ stays uniquely optimal wherever the cap exceeds the floor.

Second, hypothesis~(Z5) supplies in-support reports $\rho\in U_m$ with $\mathrm{cap}_m(\rho)$ arbitrarily close to $\tau^\dagger_{e^*}$. Report-independence then gives $\tau_{e^*}(P)=\tau_{e^*}(\rho)\le\mathrm{cap}_m(\rho)$ for all such $\rho$, hence $\tau_{e^*}(P)\le\tau^\dagger_{e^*}$. Combined with the lower bound, $\tau_{e^*}(P)=\tau^\dagger_{e^*}(P)$. With the allocation fixed at $x^*$ and singleton trips, budget balance~(7a) then pins every payment and utility, so $\mathcal M$'s induced equilibrium at $P$ is the Theorem-3.10 equilibrium. This proves the claim.

\emph{Conclusion of Step 3.} Let $\mathcal M\in\mathcal C$ satisfy~(ii) and condition on a Z-non-degenerate realisation $P$. By the claim, $\mathcal M$'s faithful output at $P$ is the Theorem-3.10 equilibrium, so the deviation $\delta_\varepsilon$ of Step~2 is available to a mediator running $\mathcal M$ and is strictly profitable by ingredient~(b). For undetectability, the witness of (M3) is a faithful Theorem-3.10 run at the pivot-shifted profile $P_\varepsilon$; conditions (Z1)--(Z4) are strict inequalities in the profile and persist along the pivot segment for $\varepsilon$ below the radius of Step~4 of \cref{lem:amin-perturbation} (shrinking $\varepsilon_w$ if necessary). (Z5) is a support-richness condition and not a strict inequality, so it transports separately: by (Z3), $\bar\varepsilon_Z\le\min(\eta_Z,u_{\min}(e^*))\le\mathrm{cap}_m(\text{truth})-\tau^\dagger_{e^*}$ for every payer $m$, since the unused singleton trip attaining $o_m$ has slack $u^\dagger_m-o_m$ and $u_{\min}\le u^\dagger_m$ covers $o_m=0$; hence $\varepsilon<\varepsilon_w\le\bar\varepsilon_Z$ leaves $\tau^\dagger_{e^*}+\varepsilon<\mathrm{cap}_m(\text{truth})$. The support is a product of connected open intervals, hence convex, and $\mathrm{cap}_m$ is continuous along the segment from the truthful report to a (Z5)-witness report, so by the intermediate value theorem in-support reports attain every cap value in $(\tau^\dagger_{e^*}+\varepsilon,\mathrm{cap}_m(\text{truth}))$, and the First fact places all of them in $U_m$. Therefore $P_\varepsilon$ is itself Z-non-degenerate and the claim applies there: the witness run is a faithful run of $\mathcal M$ at $P_\varepsilon$, and $\delta_\varepsilon$ is undetectable against $\mathcal M$. Every DSIC member of $\mathcal C$ therefore fails~(iii) on the positive-probability event~(S).

\textbf{Conclusion.} A mechanism satisfying (i) and~(ii) is a DSIC member of $\mathcal C$, and by Steps~2--3 it admits a strictly profitable, undetectable deviation on a positive-measure set of realisations, violating~(iii). Hence no homogeneous-edge-pricing market mediator on $G$ satisfies (i), (ii), and~(iii) simultaneously. Revenue optimality enters only through membership in $\mathcal C$, so the argument in fact rules out the conjunction of~(ii) and~(iii) for every member of $\mathcal C$; the trilemma as stated follows a fortiori.\qedhere
\end{proof}

\subsection{Notational Disclosures and Scope}
\label{app:appendix-z-scope}

\paragraph*{Notation translation.}
The Amin~\cite{amin2026market} primitives and the polymatroidal primitives of the main paper are not isomorphic; their correspondences on the SP-with-homogeneous class are summarised in \cref{tab:amin-paper2-correspondence}.\footnote{Caveats relevant when chaining results across the two settings are enumerated in the residual-gaps list below.}

\begin{table}[ht]
\centering
\small
\caption{Notation correspondence on the SP-with-homogeneous class}
\label{tab:amin-paper2-correspondence}
\renewcommand{\arraystretch}{1.15}
\begin{tabular}{ll}
\toprule
\textbf{Main paper (this work)} & \textbf{Amin et al.~\cite{amin2026market}} \\
\midrule
Polymatroid rank $f(S)$ & LP feasible region of (IP) under SP topology \\
Non-modularity gap $\gamma_{ij}$ & LP integrality gap on non-SP failures \\
Archer--Tardos payment $\int_0^{b_i}x_i(z,b_{-i})dz$ & VCG payment $S_{-m}(x^*_{-m})-S_{-m}(x^*)$ (Eq.~17) \\
Ground set $E$ (slot ground set) & Edge set $E$ (network edges) \\
Slice (P3 encapsulation) & Coalition--route trip $(b,r)$ \\
Operator $\mathcal{O}$ & Mediator (Theorem~3.10 platform) \\
Bid $b_i$ & Reported preference parameters $(\alpha_m,\beta_m)$ \\
Allocation $x_i$ & Trip allocation $x_r(b)\in\{0,1\}$ \\
Settlement separation (C0) of \cref{prop:domain-separation} & Canonical schedule $c^{\mathrm{sep}}$ of \cref{lem:amin-mediator-regime}(b) \\
Operator revenue $\sum p_i$ minus contracted procurement & Spread $R(\delta;c)$ of \cref{def:amin-credible} \\
Admissible perturbation amplitude & $\bar\varepsilon_Z$ of \cref{lem:amin-walrasian-gap} \\
Per-edge flow ($N_{\mathrm{tr}}$) & Saturated trip count $q_{e^*}$ \\
\bottomrule
\end{tabular}
\end{table}

\paragraph*{Scope.}
Theorem~\ref{thm:amin-trilemma-instance} is restricted to: (a) series--parallel networks (Amin Definition~3.1; Wheatstone-embedding-free); (b) homogeneous capacity-sharing disutility (Amin Equation~(12)); (c) single source--single sink (Amin~\S2); (d) homogeneous edge pricing (one $\tau_e$ per edge, not path-based or population-segmented as in Amin~\S4.2); (e) the fixed-remittance schedule of \cref{rem:amin-fixed-remittance} or any sub-Lipschitz schedule (\cref{lem:amin-mediator-regime}(a)). \cref{thm:trilemma} of the main paper establishes the trilemma in the broader polymatroidal regime, which includes multi-source/multi-sink DAGs and multi-output integrator slices over a shared sub-DAG; on these broader classes, Amin et al.\ do not provide an independent grounding (their Examples~3.3--3.4 establish equilibrium non-existence, not strategic-mediator extraction). The Amin instance is therefore a \emph{strict refinement} of the trilemma's scope: it answers affirmatively, on a refereed sub-class, the question whether the trilemma can be expressed without the polymatroidal-DAG machinery of~\cite{loven2026realtime}, while leaving the broader claim resting on the polymatroidal results P1--P3 of~\cite{loven2026realtime} (reproved inline in \cref{prop:polymatroidal-structure,prop:efficient-mechanism,prop:encapsulation}) and the Archer--Tardos perturbation of \cref{lem:perturbation}. The Amin instance is offered as a sanity check, not as a replacement. \cref{thm:amin-trilemma-instance} is therefore a trilemma instance, not a stand-alone trilemma proof independent of that machinery.

\paragraph*{Residual gaps.}
We list places where the Amin instance is materially weaker than the main paper's trilemma proof, and where the proof above relies on assumptions whose generality is narrower than the corresponding step in the main text.
\begin{enumerate}
\item \emph{No Archer--Tardos integrand.} The main paper's perturbation is a pointwise integrand-shift of the Archer--Tardos payment formula~\eqref{eq:archer-tardos}, with payment increment $\gamma_{ij|S}$ times the displacement of the pullback $\bar{\varphi}_i^{-1}\circ\bar{\varphi}_j$, derived as a Riemann area on a single bid coordinate and degenerating to $\delta\cdot\gamma_{ij}$ on the welfare branch~\textup{(W)} under a common prior. The Amin instance has no Archer--Tardos integral form available: Amin's payments are VCG payments in the Equation~(17) form, derived from welfare differences across allocations. The perturbation increment of \cref{lem:amin-perturbation} is consequently not an Archer--Tardos area but a direct edge-price increment, and the $\delta\cdot\gamma_{ij}$ identification of CoNC with the polymatroid non-modularity gap does not transfer; we do not claim a $\gamma_{ij}$-style structural quantity in the Amin instance.
\item \emph{Heterogeneous disutility (Scope clause~(b)).} Under heterogeneous capacity-sharing disutility, equilibrium computation is NP-hard~\cite[Prop.~4.2]{amin2026market} and no prescribed VCG-equivalent selection is available, so neither the perturbation deviation nor the undetectability witness is defined there; we make no claim about heterogeneous instances and do not cite Amin's \S4.2 as handling heterogeneity.
\item \emph{Singleton-coalition regime and the (Z1)--(Z5) event.} Soundness of $\bar\varepsilon_Z$ (\cref{lem:amin-walrasian-gap}(a)) and conclusions (M1)--(M2) of \cref{lem:amin-perturbation} hold for arbitrary coalition sizes, but exactness of $\bar\varepsilon_Z$, the blocker characterisation, undetectability (M3), and the Step-3 price squeeze are established on the open event (Z1)--(Z5), whose clauses we have verified at the instances reported with the lemmas using the exact-rational certificate script \textnormal{\texttt{scripts/verify\_amin\_instance.py}} in the code release; the same script found $\eta_Z>0$ throughout a three-lattice search of 1{,}800 operative random instances, 600 of them drawn from a tie-rich small-integer lattice constructed to manufacture coincidences, whereas the unweighted slack minimum was $0$ at all 1{,}800. We have characterised neither the largest event on which \cref{thm:amin-trilemma-instance}'s conclusion holds nor its probability under natural priors, and the multi-agent-coalition analogues are open.
\end{enumerate}

\section{Deferred Proofs}
\label{app:routine-proofs}

This appendix carries the proofs of three results whose statements are used in the main text and whose arguments are standard in their own literatures: the Level-1 matroid reduction, which is Rado's theorem for polymatroids plus Edmonds--Rota induction; the Salop three-regime equilibrium, which is a circular-city computation with an outside option; and the surplus-extraction orthogonality decomposition, which is a cross-derivative check. The statements remain in \cref{sec:commitment} and \cref{sec:alternatives}.

\begin{proof}[Proof of \cref{lem:encapsulation-matroid}]
\emph{(i) $f'$ is an integer polymatroid rank function.} By \cref{prop:encapsulation}, $f'$ is normalised, monotone, and submodular on the slice types. Each value $f'(S)$ is a maximum-flow value in the quotient graph $G'$, all of whose capacities are integers by hypothesis (a), so max-flow integrality makes $f'$ integer-valued (e.g.,~\cite{schrijver2003combinatorial}). Hence $f'$ is an integer polymatroid rank function in the sense of \cite[p.~407]{oxley2011matroid}.

\emph{(ii) Servability is matchability into $f'$.} For $B\subseteq[n]$ write $N(B)=\bigcup_{i\in B}A_i$ for the set of types admissible for some member of $B$. We claim that $J$ is servable if and only if Rado's condition~\eqref{eq:rado-condition}, stated with \cref{lem:encapsulation-matroid} in \cref{sec:commitment}, holds.
Necessity: under an assignment witnessing servability of $J$, the members of $B$ contribute $|B|$ units of load inside $N(B)$, so $|B|\le y(N(B))\le f'(N(B))$. Sufficiency: form the flow network with a source arc of capacity $1$ into each $i\in J$, an uncapacitated arc from $i$ to each type $k\in A_i$, and the quotient graph $G'$ carrying the type leaves to the sink. A cut that leaves uncut the source arcs of some $B\subseteq J$ must separate $N(B)$ from the sink inside $G'$, at cost at least the min-cut value $f'(N(B))$; the minimum cut is therefore $\min_{B\subseteq J}\bigl(|J\setminus B|+f'(N(B))\bigr)$, which equals $|J|$ under \eqref{eq:rado-condition}. An integral maximum flow (integrality again by hypothesis (a)) then routes one unit from each $i\in J$ through a single type in $A_i$, and the resulting type loads satisfy $y(S)\le f'(S)$ for all $S$ because $y$ is routable through $G'$. Condition \eqref{eq:rado-condition} is Rado's theorem for polymatroids~\cite{mcdiarmid1975rado}, the polymatroid analogue of Rado's independent-transversal theorem~\cite{rado1942theorem}; the max-flow derivation keeps the proof self-contained for the network-induced $f'$ at hand.

\emph{(iii) The servable sets form a matroid.} Define $g:2^{[n]}\to\mathbb{Z}_{\ge0}$ by $g(B)=f'(N(B))$. Then $g$ is integer-valued and normalised; it is monotone because $N(\cdot)$ and $f'$ are; and it is submodular because $N(B_1\cup B_2)=N(B_1)\cup N(B_2)$ and $N(B_1\cap B_2)\subseteq N(B_1)\cap N(B_2)$ give, by submodularity and then monotonicity of $f'$,
\[
g(B_1)+g(B_2)\;\ge\;f'\bigl(N(B_1)\cup N(B_2)\bigr)+f'\bigl(N(B_1)\cap N(B_2)\bigr)\;\ge\;g(B_1\cup B_2)+g(B_1\cap B_2).
\]
By the Edmonds--Rota induction theorem~\cite[Prop.~11.1.1, Cor.~11.1.2]{oxley2011matroid}, the family $\{J\subseteq[n] : |B|\le g(B)\ \text{for all non-empty }B\subseteq J\}$ is the independent-set family of a matroid $M(g)$ on $[n]$; the theorem requires $g$ integer-valued, and hypothesis (a) enters again here. By step (ii) this family equals $\mathcal I$. The matroid axioms route through this identification as follows: $\emptyset\in\mathcal I$ and downward closure are immediate from \eqref{eq:rado-condition}, since a subset of $J$ inherits the inequalities, and the augmentation axiom holds because $\mathcal I$ is the independent-set family of $M(g)$.

\emph{(iv) Disjoint special case.} When the clusters are capacity-disjoint, $f'$ is modular, $f'(S)=\sum_{k\in S}c_k$, and \eqref{eq:rado-condition} reduces to the deficiency form of Hall's condition for matching $J$ into the type set with multiplicity $c_k$ per type $k$; $\mathcal I$ is then the transversal matroid of the eligibility graph with those multiplicities, the bipartite matroid induction of Perfect~\cite{perfect1969independence}, \cite[Thm.~11.2.12]{oxley2011matroid}. Steps (i)--(iii) assume no disjointness: shared downstream capacities enter through $f'$ and, provided they are integral, leave the conclusion intact.

\emph{(v) Matroid in the Ganesh--Zhang sense.} By step (iii), $\mathcal I$ is downward-closed and satisfies the augmentation axiom, and its ground set is the agent (bidder) set $[n]$, the ground set on which the DRA of~\cite{ec2025matroid} operates. Hence the Level-1 feasible region is a matroid in their sense.
\end{proof}

\begin{proof}[Proof of \cref{prop:competition}]
Part~(a) is Bertrand: with $k\ge 2$ integrators selling a homogeneous slice at common marginal cost $c$, undercutting is profitable at any $p>c$ and $p<c$ earns negative profit, so $p^{*}=c$ and the markup is zero.

Throughout part~(b), the $k$ integrators are uniformly spaced on a circle of unit circumference carrying a unit mass of unit-demand consumers with quasi-linear utility, so that a consumer at distance $x$ from an integrator charging price $p$ obtains surplus $\bar v-p-tx$ and buys from the surplus-maximising integrator whenever that surplus is non-negative; this is the normalisation that fixes the consumer mass $M=1$ used in \cref{thm:competition-credibility-orthogonality}. The circular-city construction and the two demand branches it induces under an outside good are standard; see Salop~\cite[\S 2, eqs.~(6),~(9)]{salop1979monopolistic}, whose $c$ is our transport cost $t$ and whose $m$ is our marginal cost $c$. The competitive-region price $p^{*}=c+t/k$ is Salop's~\cite[\S 3, eq.~(18)]{salop1979monopolistic}. We record what that treatment does not state in closed form for $k$ firms at fixed $k$ and what \cref{prop:competition} quotes: the regime boundaries in $R\triangleq\bar v-c$, existence and uniqueness of the kinked price, and the exclusion loss. Deviating to $p_{0}$ against a symmetric $p$, the deviant's demand is $1/k+(p-p_{0})/t$ while the segment boundary stays contested and the local-monopoly quantity $2(\bar v-p_{0})/t$ once its marginal consumer exits first; the latter bounds demand at $p_{0}$ irrespective of rival prices, since every buyer requires non-negative surplus.

\emph{Competitive regime ($R \ge 3t/(2k)$).} At $p^{*}=c+t/k$ the midpoint consumer at $x=1/(2k)$ has surplus $\bar v-(c+t/k)-t/(2k)=R-3t/(2k)$, non-negative exactly when $R\ge 3t/(2k)$: the contested branch is self-consistent precisely there, with full coverage.

\emph{Kinked regime ($t/k \le R < 3t/(2k)$).} At $p_{\mathrm{kink}} \triangleq \bar v-t/(2k)$ the midpoint consumer's surplus is exactly zero, so the market is exactly covered and demand is kinked there, contested below and local-monopoly above. On this range $p_{\mathrm{kink}}$ is the kinked-region price of Salop~\cite[\S 3, eq.~(20)]{salop1979monopolistic}, with markup $R-t/(2k)>0$; that it is a symmetric equilibrium at fixed $k$ we check in both deviation directions. For $p_{0}\in[p_{\mathrm{kink}}-t/k,\,p_{\mathrm{kink}}]$ the indifferent consumer retains surplus $(p_{\mathrm{kink}}-p_{0})/2\ge 0$, so the contested branch applies and the deviation profit $(p_{0}-c)\bigl(1/k+(p_{\mathrm{kink}}-p_{0})/t\bigr)$ is concave in $p_{0}$ with derivative $1/k-(p_{\mathrm{kink}}-c)/t=3/(2k)-R/t>0$ at $p_{0}=p_{\mathrm{kink}}$, so the profit is strictly increasing on that whole branch and every cut on it loses. A deeper cut prices below marginal cost, since $p_{\mathrm{kink}}-t/k-c=R-3t/(2k)<0$, and earns negative profit against the equilibrium profit $(R-t/(2k))/k>0$. For $p_{0}>p_{\mathrm{kink}}$ the deviant's marginal consumer exits before the segment boundary, so the local-monopoly branch applies and the deviation profit $2(p_{0}-c)(\bar v-p_{0})/t$ is concave with derivative $2(t/k-R)/t\le 0$ at $p_{0}=p_{\mathrm{kink}}$, so the profit is non-increasing on that branch and no raise gains. Both directions are unprofitable, so $p_{\mathrm{kink}}$ is an equilibrium.

It is also the unique symmetric equilibrium on this regime. At a symmetric profile $p < p_{\mathrm{kink}}$ the market is strictly covered, the contested branch applies near $p_{0} = p$, and its derivative $1/k - (p-c)/t > 3/(2k) - R/t > 0$ makes a small raise profitable. At a symmetric profile $p \in (p_{\mathrm{kink}}, \bar v)$ the served intervals are disjoint, the local-monopoly branch applies near $p_{0} = p$, and its derivative $2(\bar v + c - 2p)/t < 2(t/k - R)/t \le 0$ makes a small cut profitable; a symmetric price $p \ge \bar v$ serves nobody, and a deviation to $p_{\mathrm{kink}}$ earns $(R - t/(2k))/k > 0$. In particular the competitive price $c + t/k$ exceeds $p_{\mathrm{kink}}$ throughout this regime, so the preceding case rules it out as well. The regimes join continuously, since $p_{\mathrm{kink}} = c + t/k$ at $R = 3t/(2k)$ and $p_{\mathrm{kink}} = c + R/2$ at $R = t/k$.

\emph{Local-monopoly regime ($R < t/k$).} Any integrator's demand at price $p_{0}$ is at most the outside-option bound $2(\bar v - p_{0})/t$, so its profit is at most $\max_{p_{0}} 2(p_{0}-c)(\bar v - p_{0})/t = R^{2}/(2t)$, attained at $p_{0} = (\bar v + c)/2 = c + R/2$. At the symmetric profile $p^{*} = c + R/2$ each integrator serves the half-width $R/(2t)$ on each side, the served intervals are disjoint because $R/(2t) < 1/(2k)$ on this regime, and each integrator's profit $(R/2)(R/t) = R^{2}/(2t)$ meets the bound, so no deviation gains and $p^{*}$ is an equilibrium, with markup $R/2$. The covered fraction of the circle is $2k \cdot R/(2t) = kR/t < 1$.

\emph{Welfare.} In the competitive and kinked regimes every consumer buys from its nearest integrator and generates positive social surplus, since $R \ge t/k$ gives $R - tx \ge R - t/(2k) \ge t/(2k) > 0$ for every distance $x \le 1/(2k)$; the equilibrium allocation coincides with the first best, the price only transfers surplus from consumers to integrators, and the exclusion deadweight loss is zero. Without an outside option every consumer buys by assumption, the allocation is again nearest-integrator, and the markup is a pure transfer with zero deadweight loss. In the local-monopoly regime the first best serves a consumer at distance $x$ from its nearest integrator whenever $R - tx \ge 0$, while the equilibrium serves $x \le R/(2t)$, so the loss on each of the $2k$ segment sides is $\int_{R/(2t)}^{\min(R/t,\,1/(2k))} (R - tx)\,dx$. For $R \le t/(2k)$ the efficient reach $R/t$ lies within the segment half-width, the integral evaluates to $R^{2}/(8t)$, and the total is $kR^{2}/(4t)$. For $t/(2k) < R < t/k$ the efficient reach passes the segment midpoint, so the unserved wedge ends at $1/(2k)$; the integral evaluates to $R/(2k) - t/(8k^{2}) - 3R^{2}/(8t)$, and the total is the band form $R - t/(4k) - 3kR^{2}/(4t)$, applicable exactly on this range. The two forms agree at $R = t/(2k)$, where both equal $t/(16k)$; the band form vanishes at $R = t/k$, joining the full-coverage regimes; the deep form $kR^{2}/(4t)$ is increasing in $k$ at fixed $R$ and $t$; and the loss is zero for every $k \ge t/R$, since $k \ge t/R$ places the equilibrium in a full-coverage regime.
\end{proof}

\begin{proof}[Proof of \cref{thm:competition-credibility-orthogonality}]
\emph{Step 1: Disjoint channels.} Under (O1) the two extraction channels have disjoint sources and disjoint sinks. The operator's credibility surplus is extracted from the agent-payment channel: it inflates an agent's Archer--Tardos payment by $\varepsilon$ on the perturbed profile and retains $\lambda\varepsilon$. The integrator's markup is extracted from the pricing layer: at the symmetric Salop equilibrium the integrator charges $p^{*}=c+t/k$, and $t/k$ accrues to it independently of the allocation outcome. Under (O1) the recipients are different firms, so the two increments are not double-counted.

\emph{Step 2: No cross-derivative, hence additivity.} The Salop equilibrium markup solves each integrator's first-order condition and depends only on $(t,k)$; the allocation rule on $\mathcal{X}_{\mathrm{res}}$ does not enter that problem, since the integrator faces a fixed slice quantity at the slice marketplace's clearing prices. Hence $\partial(t/k)/\partial\lambda=0$. In the other direction, the deviation amplitude $\varepsilon(\delta)=\Delta p_i$ is controlled by the contested window $W_{ij|S}(\mathbf b)=(0,\bar\delta_{ij|S})$ of \cref{def:contested-window} and the pullback map $\bar{\varphi}_i^{-1}\circ\bar{\varphi}_j$ of \cref{lem:perturbation}, both fixed by the polymatroid's structure and the agent priors. The markup reaches an agent only as a constant additive shift in the slice-supply price, and quasi-linear utility $v_i x_i - p_i$ is translation-invariant under such a shift, so the priority ordering on $\bar\varphi_i(b_i)$ in the Edmonds greedy, and with it $\bar\delta_{ij|S}$, is unchanged; under the Salop timing of (O3), in which integrators price before bids are realised and the markup is constant in $\mathbf b$, translation-invariance holds and, under (O4), the marginal consumer's participation does not bind, so consumer exclusion opens no feedback channel from $(t,k)$ back to the agent population; hence $\partial\mathcal{L}_{\mathrm{cred}}/\partial(t,k)=0$. Were the markup bid-dependent, it would re-shuffle the priority order and re-couple the two layers, so translation-invariance is the structural hypothesis behind the decomposition rather than a convenience. Each component is linear in its own layer's strategic action, and under (O1) the expectation over the bid prior factors across the layers (the operator deviates after observing bids, integrators price before); linearity and factorisation give \eqref{eq:orthogonality-decomposition}, with $\partial^{2}\mathcal{S}_{\mathrm{extract}}/\partial\lambda\,\partial(t,k)=0$ identically on the support of (O1)--(O4).
\end{proof}

\section{Assumption Applicability}
\label{app:structural-assumptions}

The credibility results of this paper build on three structural properties (P1--P3, reproved inline in the main text from~\cite{loven2026realtime}), the gross-substitutes valuation conditions (GS1--GS3), the encapsulation conditions for integrators (E1--E4), the broadcast hypotheses for the Commitment theorem ((B0)--(B2$^\prime$)), and the domain-separation hypotheses (C0--C4) of the domain-separation proposition. \cref{tab:assumption-applicability} rates the principal assumptions' practical applicability in real-time AI service economies and indicates the most promising relaxation paths; the per-result dependency (which assumptions each result requires) is given separately in \cref{tab:assumptions-results-supp}. Assumptions rated ``Restrictive'' represent the primary boundaries of the current theory; all credibility results assume that the structural properties P1--P3 and the GS conditions hold, and if any condition fails the DSIC mechanism may not exist.

\begin{table*}[ht]
\centering
\caption{Assumption applicability to real-time AI service economies}
\label{tab:assumption-applicability}
\renewcommand{\arraystretch}{1.15}
\resizebox{\linewidth}{!}{%
\begin{tabular}{p{3.5cm} p{1.7cm} p{4.75cm} p{4.75cm}}
\toprule
\textbf{Assumption} & \textbf{Practical fit} & \textbf{Justification} & \textbf{Relaxation path} \\
\midrule
Non-modularity & Inherent & Shared infrastructure is the norm & N/A (precondition for the problem) \\
Single-parameter agents & Restrictive & Holds per-task under encapsulation; breaks for multi-dimensional preferences & Open; partial via single-minded bidders \\
Quasi-linear valuations & Good & Per-task procurement at ms-scale; no income effects & Budget constraints secondary at task level \\
Regular distributions & Standard & Most natural distributions qualify & Ironing handles irregularity \\
GS2 (no complementarities) & Restrictive & Broken by service-composition pipelines & Encapsulation (P3) is the architectural remedy \\
GS3 (fixed attributes/epoch) & Moderate & 100--200\,ms epochs limit intra-epoch variation & Online/repeated mechanisms (future work) \\
E1 (scalar capacity) & Moderate & A scalar suffices only for a confluent cluster (E4); restrictive for multi-tier integrators & Partition matroid extension feasible; capacity vectors for multi-exit clusters \\
E2 (cut-exact max-flow) & Restrictive & Needs the scalar to match the cluster's max-flow from every contact subset, not just in aggregate; integrators also have incentive to misreport it & Integrator credibility theory (missing) \\
E4 (confluent cluster) & Moderate & Holds for tier-structured integrator bundles with one egress; fails when a cluster has several boundary exits & Multi-terminal slice summaries; per-exit capacity reporting \\
B1 (direct broadcast) & Moderate & Requires operator-independent substrate & Gossip protocols, TEE attestation \\
B2$^{\prime}$ (on-path verifiable $f$) & Moderate & On-path authentication, topology may stay private & Merkle commitments, TEE attestation, ZK proofs \\
C0 (settlement separation) & Moderate & Requires third-party escrow or clearing & Stock-exchange design; standard in regulated markets \\
Fee-only operator (C1) & Restrictive & Knife-edge; real operators have mixed incentives & Continuous deployable surface in stake/escrow/audit is open; commitment is the fallback \\
Single operator (no collusion) & Restrictive & Multi-operator collusion plausible in practice & Coalition-proof mechanisms largely open \\
Myopic operator & Moderate & Repeated-game reputation effects absent & Dynamic models future \\
Positive valuations & Reasonable & Tasks with zero value not submitted & Net-value interpretation suffices \\
Binding capacity & Natural & Congested regime is the interesting case & Under-subscription trivialises the problem \\
\bottomrule
\end{tabular}}
\end{table*}

\section{Assumptions-to-Results Mapping}
\label{app:assumptions-results}

\cref{tab:assumptions-results-supp} cross-references each formal result to the type space, feasibility class, valuation class, operator objective, and commitment regime under which it holds, and indicates whether the result is established by formal proof or via simulation evidence in the main paper.

\begin{table*}[ht]
\centering
\caption{Assumptions-to-results mapping. Each row states the exact conditions under which the result holds and whether validation is formal or simulation-backed.}
\label{tab:assumptions-results-supp}
\renewcommand{\arraystretch}{1.15}
\setlength{\tabcolsep}{3pt}
\footnotesize
\vspace{4pt}
\begin{tabular}{@{}>{\RaggedRight\arraybackslash}p{1.8cm} >{\RaggedRight\arraybackslash}p{1.8cm} >{\RaggedRight\arraybackslash}p{1.6cm} >{\RaggedRight\arraybackslash}p{1.9cm} >{\RaggedRight\arraybackslash}p{1.4cm} >{\RaggedRight\arraybackslash}p{2.1cm} >{\RaggedRight\arraybackslash}p{1.5cm}@{}}
\toprule
\textbf{Result} & \textbf{Type space} & \textbf{Feasibility} & \textbf{Valuation} & \textbf{Operator obj.} & \textbf{Commitment} & \textbf{Evidence} \\
\midrule
Trilemma & Single-param., regular dist. & Non-modular polymatroid & Quasi-linear & Revenue max. & None (sealed-bid) & Formal proof \\
Commitment (i) & Single-param. & Polymatroid (divisible) & Quasi-linear & Welfare max. & Broadcast (B0--B2$^{\prime}$) & Formal proof \\
Commitment (ii) & Single-param., $\alpha$-strongly reg. & Matroid (binary) & Quasi-linear & Revenue max. & Blockchain\slash DRA & Imported \cite{ec2025matroid} \\
Domain sep. & Any (positive values) & Polymatroid & Quasi-linear & Per-unit fee only & None needed & Formal proof \\
Competition (Bertrand) & N/A & N/A & Homogeneous & Bertrand pricing & N/A & Formal proof \\
Competition (Salop) & N/A & N/A & Differentiated & Salop pricing & N/A & Formal proof \\
\bottomrule
\end{tabular}
\end{table*}

\end{document}